\def\eqref#1{equation~\ref{#1}}
\def\1{\bm{1}}
\DeclareMathAlphabet{\mathsfit}{\encodingdefault}{\sfdefault}{m}{sl}
\SetMathAlphabet{\mathsfit}{bold}{\encodingdefault}{\sfdefault}{bx}{n}
\definecolor{mydarkblue}{rgb}{0,0.08,0.45}
\newcommand*\mysize{%
  \@setfontsize\mysize{7.5}{9.0}%
}
\newif\ifshowcomment
\definecolor{ao}{rgb}{0.0, 0.5, 0.0}
\definecolor{awesome}{rgb}{1.0, 0.13, 0.32}
\newcommand{\red}[1]{\textcolor{red}{#1}}
\newcommand{\green}[1]{\textcolor{ao}{#1}}
\newtheorem{theorem}{Theorem}
\newtheorem{definition}{Definition}
\newcommand{\F}{Fig.}
\newcommand{\T}{Table}
\renewcommand{\S}{Sec.}
\title{No Free Lunch in LLM Watermarking: \\ Trade-offs in Watermarking Design Choices}
\author{%
  Qi Pang \quad Shengyuan Hu \quad Wenting Zheng \quad Virginia Smith\\
  Carnegie Mellon University\\
  \texttt{\{qipang, shengyuanhu, wenting, smithv\}@cmu.edu}
}
\begin{document}

\maketitle

\begin{abstract}
Advances in generative models have made it possible for AI-generated text, code, and images to mirror human-generated content in many applications. \textit{Watermarking}, a technique that aims to embed information in the output of a model to verify its source, is useful for mitigating the misuse of such AI-generated content. However, we show that common design choices in LLM watermarking schemes make the resulting systems  surprisingly susceptible to attack---leading to fundamental trade-offs in robustness, utility, and usability. 
To navigate these trade-offs, we rigorously study a set of simple yet effective attacks on common watermarking systems, and propose  guidelines and defenses for LLM watermarking in practice.
\end{abstract}

\section{Introduction}

Modern generative modeling systems have notably enhanced the quality of AI-produced content~\cite{brown2020language, saharia2022photorealistic, openai2023gpt4, chatgpt}. For example, large language models (LLMs) like those powering ChatGPT~\cite{chatgpt} can generate text closely resembling human-crafted sentences.
While this has led to exciting new applications of machine learning, there is also growing concern around the potential for {misuse} of these models, leading to a flurry of recent efforts on developing techniques to detect AI-generated content. 
A promising approach in this direction is to embed invisible \textit{watermarks} into model-derived content, which can then be extracted and verified using a secret watermark key~\cite{pmlr-v202-kirchenbauer23a, fairoze2023publicly, christ2023undetectable, kuditipudi2023robust, zhao2023provable, kirchenbauer2023reliability, hu2023unbiased, wu2023dipmark, wang2023towards}.

{In this work, we identify that many of the key properties that make existing LLM watermarks successful can also render them susceptible to attack.} In particular, we study a number of  simple attacks that take advantage of common {design choices} of existing watermarking schemes, including:

\begin{enumerate}[topsep=0pt, noitemsep, leftmargin=*]
     \item \textbf{Robustness} of the watermarks to potential modifications in the output text, so that the watermarks cannot be easily removed~\cite{kuditipudi2023robust,pmlr-v202-kirchenbauer23a, zhao2023provable, kuditipudi2023robust, christ2023undetectable, hu2023unbiased}; 
     \item The use of \textbf{multiple keys} to prevent against watermark stealing attacks~\cite{pmlr-v202-kirchenbauer23a, kuditipudi2023robust, jovanovic2024watermark,sadasivan2023can, gu2023learnability};
     \item \textbf{Public detection APIs}, which allow the general public to easily verify whether or not candidate text is AI-generated~\cite{pmlr-v202-kirchenbauer23a, solaiman2019release, detectgpt}.
\end{enumerate}

While these common features and {design choices} of existing watermarking schemes have clear benefits, we show that they also make the resulting systems vulnerable to a number of simple but effective attacks. In particular, we study two types of attacks: 1) \textit{watermark-removal attacks}, which remove the watermark from the watermarked content, and 2) \textit{spoofing attacks}, which create (potentially toxic) content with a target watermark embedded, thus making the content appear to be generated by a specific LLM when it is in fact not (see motivating examples in \T~\ref{tab:sentence-examples}).

\begin{table*}[!t]
\footnotesize
    \centering
    \def\arraystretch{1.2}
    \begin{tabularx}{\textwidth}{c|X}
    \hline
       \textbf{Prompt}  & {\mysize\texttt{Alan Turing was born in ...}}  \\
    \hline
       \textbf{Unwatermarked} & \multirow{3}{29.5em}{\mysize\texttt{\textit{Alan Turing was born in} \green{1}\red{912 and} \green{died} \red{in} \green{1}\red{954.} \green{He was an} \red{English} \green{math}\red{ematician}\green{,} \red{logician}\green{,} \green{crypt}\red{anal}\green{yst,} \red{and computer} \green{scientist. In} \green{1}\red{93}\green{8}\red{,} \green{Turing joined} \red{the Government} \red{Code and Cypher} \green{School} \green{(}\red{GC}\green{\&}\red{CS),} \green{where he} \red{contributed to the design of the bombe}\green{,} \red{a machine} \green{that was used} \red{to deci}\green{pher the Enigma}\red{-enci}\green{phered} \red{messages}...}} \\
       Z-Score: $0.16$ \green{$\downarrow$} & \\
       PPL: $3.19$ & \\
       \\
    \hline
       \textbf{Watermarked} & \multirow{3}{29.5em}{\mysize{\texttt{\textit{Alan Turing was born in} \green{1}\red{912 and} \green{died} \red{in} \green{1}\red{954,} \green{at} \red{the} \green{age of} \red{41}\green{. He was the brilliant British scientist and} \red{mathematician who} \green{is largely credited with being} \red{the} \green{father} \red{of} \green{modern computer science. He is known for his contributions to mathematical} \red{biology} \green{and chemistry. He was also one of} \red{the} \green{pioneers of} \red{computer} \green{science}...}}} \\
       Z-Score: $5.98$ \red{$\uparrow$} & \\
       PPL: $4.46$ & \\
       \\
    \hline
        \textbf{(a) Piggyback spoofing attack} & \multirow{3}{29.5em}{
         \mysize{\texttt{\textit{Alan Turing was born in} \textcolor{orange}{1950} \textcolor{blue}{and died in} \textcolor{orange}{1994}\textcolor{blue}{, at the age of} \textcolor{orange}{43}\textcolor{blue}{. He was the brilliant \textcolor{orange}{American} scientist and mathematician who is largely credited with being the father of modern computer science. He is known for his contributions to mathematical biology and \textcolor{orange}{musicology}. He was also one of the pioneers of computer science}...}}} \\
       Exploiting robustness & \\
       Z-Score: $5.98$ \red{$\uparrow$} & \\
       \\
    \hline
        \textbf{(b) Watermark-removal attack} & \multirow{3}{29.5em}{\mysize{\texttt{\textit{Alan Turing was born in} \green{1}\red{912} \green{and died} \red{in} \green{1}\red{954}\green{.} \red{He} \green{was a mathematician, logician, crypt}\red{ologist and} \green{theoretical computer scient}\red{ist}\red{.} \red{He} \green{is famous} \red{for his work} \green{on} \red{code-breaking} \green{and artificial intelligence,} \red{and} \green{his} \red{contribution} \green{to the} \red{Allied victory} \green{in} \red{World War II.} \green{Turing was born} \red{in London.} \red{He showed an} \green{interest in} \red{mathematics}...}}} \\
       Exploiting multiple keys & \\
       Z-Score: $2.40$ \green{$\downarrow$} & \\
       PPL: $4.05$ & \\
    \hline
      \textbf{(c) Watermark-removal attack} & \multirow{3}{29.5em}{\mysize{\texttt{\textit{Alan Turing was born in} \green{1}\red{912 and} \green{died} \red{in} \green{1}\red{954.} \green{He was an} \red{English} \green{math}\red{ematician}\green{,} \red{computer} \green{scientist}\red{,} \green{crypt}\red{anal}\green{yst and philos}\red{opher.} \red{Turing} \green{was a} \green{leading math}\red{ematician and cryptanal}\green{yst. He was one of} \red{the} \green{key players} \red{in} \green{cracking the German} \red{En}\green{igma Code} \red{during} \green{World War} \red{II}\green{.} \green{He} \red{also} \green{came} \red{up with the} \green{Turing} \red{Machine}...}}} \\ 
       Exploiting public detection API & \\
       Z-Score: $1.47$ \green{$\downarrow$} & \\
       PPL: $4.57$ & \\
    \hline
    \end{tabularx}
    \vspace{-5pt}
        \caption{Examples generated using LLAMA-2-7B with/without the KGW watermark~\cite{pmlr-v202-kirchenbauer23a} under various attacks. 
    We mark tokens in the \green{green} and \red{red} lists (see Appendix~\ref{app:wm-intro}). Z-score reflects the detection confidence of the watermark, and perplexity (PPL) measures text quality. (a) In the \textit{piggyback spoofing attack}, we exploit watermark robustness by generating incorrect content that appears as watermarked (matching the z-score of the watermarked baseline), potentially damaging the reputation of the LLM. Incorrect tokens modified by the attacker are marked in \textcolor{orange}{orange} and watermarked tokens in \textcolor{blue}{blue}. (b-c) In \textit{watermark-removal attacks}, attackers can effectively lower the z-score below the detection threshold while preserving a high sentence quality (low PPL)  by exploiting either the (b) use of multiple keys or (c) publicly available watermark detection API. 
    }
    \vspace{-12pt}
    \label{tab:sentence-examples}
\end{table*}

Our work rigorously explores a number of simple removal and spoofing attacks for LLM watermarks. In doing so, we identify critical trade-offs that emerge between watermark robustness, utility, and usability as a result of watermarking design choices. 
To navigate these trade-offs, we propose potential defenses as well as a set of general guidelines to better enhance the security of next-generation LLM watermarking systems. Overall, we make the following contributions:

\begin{itemize}[leftmargin=*, itemsep=0pt, topsep=0pt]
    \item We study how watermark \textit{robustness}, despite being a desirable property to mitigate removal attacks, can make the resulting systems highly susceptible to \textit{piggyback spoofing attacks}, a simple type of {attack} that makes makes watermarked text toxic or inaccurate through small modifications, and show that challenges exist in detecting these attacks given that a single token can render an entire sentence inaccurate (\S~\ref{sec:robustness}). 
    \item We show that using \textit{multiple watermarking keys} can make the system susceptible to \textit{watermark removal attacks} (\S~\ref{sec:unbiasedness}). Although a larger number of keys can help defend against watermark stealing attacks, which can be used to launch either spoofing or removal attacks, we show both theoretically and empirically that this in turn increases the potential for watermark removal attacks.
    \item Finally, we identify that \textit{public watermark detection APIs} can be exploited by attackers to launch both \textit{watermark-removal and spoofing attacks} (\S~\ref{sec:method-detection}). We propose a defense using techniques from differential privacy to effectively counteract spoofing attacks, showing that it is possible to avoid the possibilities of noise reduction by applying pseudorandom noise based on the input.
\end{itemize}

Throughout, we explore our attacks on three state-of-the-art watermarks~\cite{pmlr-v202-kirchenbauer23a, zhao2023provable, kuditipudi2023robust} and two LLMs (LLAMA-2-7B~\cite{touvron2023llama} and OPT-1.3B~\cite{zhang2022opt})---demonstrating that these vulnerabilities are common to existing LLM watermarks, and providing caution for the field in deploying current solutions in practice without carefully considering the impact and trade-offs of watermarking design choices.
{Our code is available at \url{https://github.com/Qi-Pang/LLM-Watermark-Attacks}.}
\section{Related Work}
\label{sec:related-work}

Advances in large language models (LLMs) have given rise to increasing concerns that such models may be misused for purposes such as spreading misinformation, phishing, and academic cheating. In response, numerous recent works have proposed watermarking schemes as a tool for detecting LLM-generated text to mitigate potential misuse~\cite{pmlr-v202-kirchenbauer23a, fairoze2023publicly, christ2023undetectable, kuditipudi2023robust, zhao2023provable, kirchenbauer2023reliability, hu2023unbiased, wu2023dipmark, wang2023towards}. These approaches involve embedding invisible {watermarks} into the model-generated content, which can then be extracted and verified using a secret watermark key. Existing watermarking schemes share a few natural goals: (1) the watermark should be \textit{robust} in that it cannot be easily removed; (2) the watermark should not be easily \textit{stolen}, thus enabling spoofing or removal attacks; and (3) the presence of a watermark should be \textit{easy to detect} when given new candidate text. Unfortunately, we show that existing methods that aim to achieve these goals can in turn enable simple but effective attacks.

\textbf{Removal attacks.}~Several recent works have highlighted that paraphrasing methods may be used to evade the detection of AI-generated text~\cite{krishna2023paraphrasing, iyyer2018adversarial, li2018paraphrase, lin2021towards, zhang2023watermarks}, with \cite{krishna2023paraphrasing, zhang2023watermarks} demonstrating effective watermark removal using a local LLM.
These methods usually require additional training for sentence paraphrasing which can impact sentence quality, or assume a high-quality oracle model to guarantee the output quality is preserved.
In contrast, the simple and scalable removal attacks herein do not require additional training or a high-quality oracle.
Additionally, our work differs in that we aim to directly connect and study how the inherent properties and design choices of watermarking schemes (such as the use of multiple keys and detection APIs) can inform such removal attacks.

\textbf{Spoofing attacks.}~Prior works on spoofing use watermark stealing attacks to first estimate the watermark pattern and then embed it into an arbitrary content to launch spoofing attacks.
These attacks usually require the attacker to pay a large startup cost by obtaining a significant number of watermarked tokens.
For example, \cite{sadasivan2023can} requires 1 million queries to the watermarked LLM, and~\cite{jovanovic2024watermark, gu2023learnability} assume the attacker can obtain millions of watermarked tokens to estimate their distribution.
Unlike these works, we explore spoofing attacks that are less flexible but can be launched with significantly less upfront cost. 
In \S~\ref{sec:robustness}, we explore a very simple and scalable form of spoofing exploiting the inherent robustness property of watermarks, which we refer to as a `piggyback spoofing attack'.
In \S~\ref{sec:method-detection}, we then explore more general spoofing attacks, which instead of querying the watermarked LLM numerous times, consider exploiting the public detection API. 
In both, our attacks do not require the attacker to estimate the watermark pattern, but share a similar ultimate goal with the prior spoofing attacks to create falsified inaccurate or toxic content that appears to be watermarked.
\section{Preliminaries}
\label{sec:background}

Before exploring attacks and defenses on watermarking systems, we introduce relevant background on LLMs, notation we use throughout the work, and a set of concrete threat models.

\noindent\textbf{Notation.}~We use $\textbf{x}$ to denote a sequence of tokens, $\textbf{x}_i \in \mathcal{V}$ is the $i$-th token in the sequence, and $\mathcal{V}$ is the vocabulary.
$M_{\text{orig}}$ denotes the original model without a watermark, $M_{\text{wm}}$ is the watermarked model, and $sk \in \mathcal{S}$ is the watermark secret key sampled from the key space $\mathcal{S}$.

\noindent\textbf{Language Models.}~Current state-of-the-art (SOTA) LLMs are auto-regressive models, which predict the next token based on the prior tokens.
We define language models more formally below:

\begin{definition}[LM]
\label{def:llm}
We define a language model (LM) without a watermark as:
\begin{equation}
\footnotesize
    M_{\text{orig}} : \mathcal{V}^* \rightarrow \mathcal{V},
\end{equation}
where the input is a sequence of length $t$ tokens $\textbf{x}$. 
$M_{\text{orig}}(\textbf{x})$ first returns the probability distribution for the next token $\textbf{x}_{t+1}$ and then the LM samples $\textbf{x}_{t+1}$ from this distribution.
\end{definition}

\noindent\textbf{Watermarks for LLMs.}~In this work, we focus on three SOTA decoding-based watermarking schemes: KGW~\cite{pmlr-v202-kirchenbauer23a}, Unigram~\cite{zhao2023provable} and Exp~\cite{kuditipudi2023robust}.
Informally, decoding-based watermarks are embedded by perturbing the output distribution of the original LLM.
The perturbation is determined by secret watermark keys held by the LLM owner.
Formally, we define the watermarking scheme:
\begin{definition}[Watermarked LLMs]
\label{def:wmllm}
The watermarked LLM takes token sequence $\textbf{x} \in \mathcal{V}^*$ and secret key $sk \in \mathcal{S}$ as input, and outputs a perturbed probability distribution for the next token.
The perturbation is determined by $sk$:
\begin{equation}
\footnotesize
    M_{\text{wm}} : \mathcal{V}^* \times \mathcal{S} \rightarrow \mathcal{V}
\end{equation}
\end{definition}
The watermark detection outputs the statistical testing score for the null hypothesis that the input token sequence is independent of the watermark secret key, which reflects the watermark confidence:
\begin{equation}
\footnotesize
    f_{\text{detection}} : \mathcal{V}^* \times \mathcal{S} \rightarrow \mathbb{R}
\end{equation}
Please refer to Appendix~\ref{app:wm-intro} for additional details of the specific watermarks explored in this work.

\subsection{Threat Model}
\label{sec:threat-model}

\noindent\textbf{Attacker's Objective.}~We study two types of attacks---watermark-removal attacks and (piggyback or general) spoofing attacks.
In the watermark-removal attack, the attacker aims to generate a high-quality response from the LLM \textit{without} an embedded watermark.
For the spoofing attacks, the goal is to generate a harmful or incorrect output that has the victim organization's watermark embedded.
We present concrete application scenarios for attacker's motivations in Appendix~\ref{app:attack-motivation}.

\noindent\textbf{Attacker's Capabilities.} We study attacks by exploiting three common design choices in watermarks: 1) robustness, 2) the use of multiple keys, and 3) public detection APIs.
Each attack requires the adversary to have different capabilities, but we make assumptions that are practical and easy to achieve in real-world deployment scenarios.

1) For piggyback spoofing attacks exploiting \textit{robustness} (Sec.~\ref{sec:robustness}), we assume that the attacker can make $\mathcal{O}(1)$ queries to the target watermarked LLM.
We also assume that the attacker can edit the generated sentence (e.g., insert or substitute tokens).

2) For watermark-removal attacks exploiting \textit{the use of multiple keys} (\S~\ref{sec:unbiasedness}), we consider the scenario where multiple watermark keys are utilized to embed the watermark, which is a common practice in designing robust cryptographic protocols and is suggested by SOTA watermarks~\cite{kuditipudi2023robust, pmlr-v202-kirchenbauer23a} to improve resistance against watermark-stealing attacks~\cite{jovanovic2024watermark, gu2023learnability, sadasivan2023can}. 
For a sentence of length $l$, we assume that the attacker can make $\mathcal{O}(l)$ queries to the watermarked LLM. 

3) For the attacks on \textit{detection APIs} (\S~\ref{sec:method-detection}), we assume that the detection API is available to normal users and the attacker can make $\mathcal{O}(l)$ queries for a sentence of length $l$.
The detection returns the watermark confidence score (p-value or z-score).
For spoofing attacks exploiting the detection APIs, we assume that the attacker can auto-regressively synthesize (toxic) sentences. 
For example, they can run a local (small) model to synthesize such sentences.
For watermark-removal attacks exploiting the detection APIs, we also assume that the attacker can make $\mathcal{O}(l)$ queries to the watermarked LLM.
As is common practice~\cite{naseh2023stealing, ouyang2022training} and also enabled by OpenAI's API~\cite{openaiapi}, we assume that the top 5 tokens at each position and their probabilities are returned to the attackers.

\section{Attacking Robust Watermarks}
\label{sec:robustness}
\vspace{-5pt}

The goal of developing a watermark that is robust to output perturbations is to defend against watermark removal, which may be used to circumvent detection schemes for applications such as phishing or fake news generation. Robust watermark designs have been the topic of many recent works~\cite{zhao2023provable, pmlr-v202-kirchenbauer23a, kuditipudi2023robust, sadasivan2023can, kirchenbauer2023reliability, piet2023mark}.
We formally define watermark robustness in the following definition.
\begin{definition}[Watermark robustness]
\label{def:robustness}
A watermark is $(\epsilon, \delta)$-robust, given a watermarked text $\textbf{x}$, if for all its neighboring texts within the $\epsilon$ editing distance, the probability that the detection fails to detect the edited text is bounded by $\delta$, given the detection confidence threshold $T$:
\begin{align*}
\footnotesize
    \forall \textbf{x}, \textbf{x}' \in \mathcal{V}^*, \, \Pr[f_{\text{detection}}(\textbf{x}', sk) < T] < \delta, \quad s.t. \, f_{\text{detection}}(\textbf{x}, sk) \geq T , \, \text{d}(\textbf{x}, \textbf{x}') \leq \epsilon
\end{align*}
\end{definition}

More robust watermarks can better defend against editing attacks, but this seemingly desirable property can also be easily misused by malicious users to launch simple \textit{piggyback spoofing attacks}---e.g.,  a small portion of toxic or incorrect content can be inserted into the watermarked material, making it seem like it was generated by a specific watermarked LLM.
The toxic content will still be detected as watermarked, potentially damaging the reputation of the LLM service provider. 
{As discussed in \S~\ref{sec:related-work}, spoofing attacks explored in prior work usually require the attacker to obtain millions of watermarked tokens upfront to estimate the watermark pattern~\cite{jovanovic2024watermark, sadasivan2023can, gu2023learnability}. 
In contrast, our simple piggyback spoofing only requires a single query to the watermarked LLM with careful text modifications, and the effectiveness relates directly to the robustness of the LLM watermark.}

\noindent\textbf{Attack Procedure.} \textit{(i)} The attacker queries the target watermarked LLM to receive a high-entropy watermarked sentence $\textbf{x}_{\text{wm}}$,
\textit{(ii)} The attacker edits $\textbf{x}_{\text{wm}}$ and forms a new piece of text $\textbf{x}'$ and claims that $\textbf{x}'$ is generated by the target LLM.
The editing method can be defined by the attacker.
Simple strategies could include inserting toxic tokens into the watermarked sentence $\textbf{x}_{\text{wm}}$ at random positions, 
or editing specific tokens to make the output inaccurate (see example in \T~\ref{tab:sentence-examples}).
As we show, editing can also be done at scale by querying another LLM like GPT4 to generate fluent output.

We present the formal analysis on the attack feasibility in Appendix~\ref{app:proof-robust} and point out the takeaway that is universally applicable to all robust watermarks:
A more robust watermark makes piggyback spoofing attack easier by allowing more toxic tokens to be inserted. 
This is a fundamental design trade-off: If a watermark is robust, such spoofing attacks are inevitable and may be extremely difficult to detect, as even one toxic token can render the entire content harmful or inaccurate.

\subsection{Evaluation}
\label{sec:eval-robustness}
\vspace{-5pt}

\noindent\textbf{Experiment Setup.}~
We assess the effectiveness of our piggyback spoofing attack by using the two editing strategies discussed above. 
Through toxic token insertion, we study the limits of how many tokens can be inserted into the watermarked content. 
Using fluent inaccurate editing, we show that piggyback spoofing can generate fluent, watermarked, but inaccurate results at scale.
Specifically, for the toxic token insertion, we generate a list of $200$ toxic tokens and insert them at random positions in the watermarked output. 
For the fluent inaccurate editing, we edit the watermarked sentence by querying GPT4 using the prompt \textit{``Modify less than 3 words in the following sentence and make it inaccurate or have opposite meanings.''}
Unless otherwise specified, in the evaluations of this work,
we utilize $500$ prompts data from OpenGen~\cite{krishna2023paraphrasing} dataset, and query the watermarked language models (LLAMA-2-7B~\cite{touvron2023llama} and OPT-1.3B~\cite{zhang2022opt}) to generate the watermarked outputs.
We evaluate three SOTA watermarks including KGW~\cite{pmlr-v202-kirchenbauer23a},
Unigram~\cite{zhao2023provable}, and Exp~\cite{kuditipudi2023robust}, using the default watermarking hyperparameters. 
In our experiments, we default to a maximum of 200 new tokens for KGW and Unigram, and 70 for Exp, due to its complexity in the watermark detection.
70 is also the maximum number of tokens the authors of Exp evaluated in their paper~\cite{kuditipudi2023robust}.

\begin{wrapfigure}{r}{0.3\textwidth}
\vspace{-12pt}
    \centering
    \begin{minipage}{\linewidth}
        \centering
        \includegraphics[width=\textwidth]{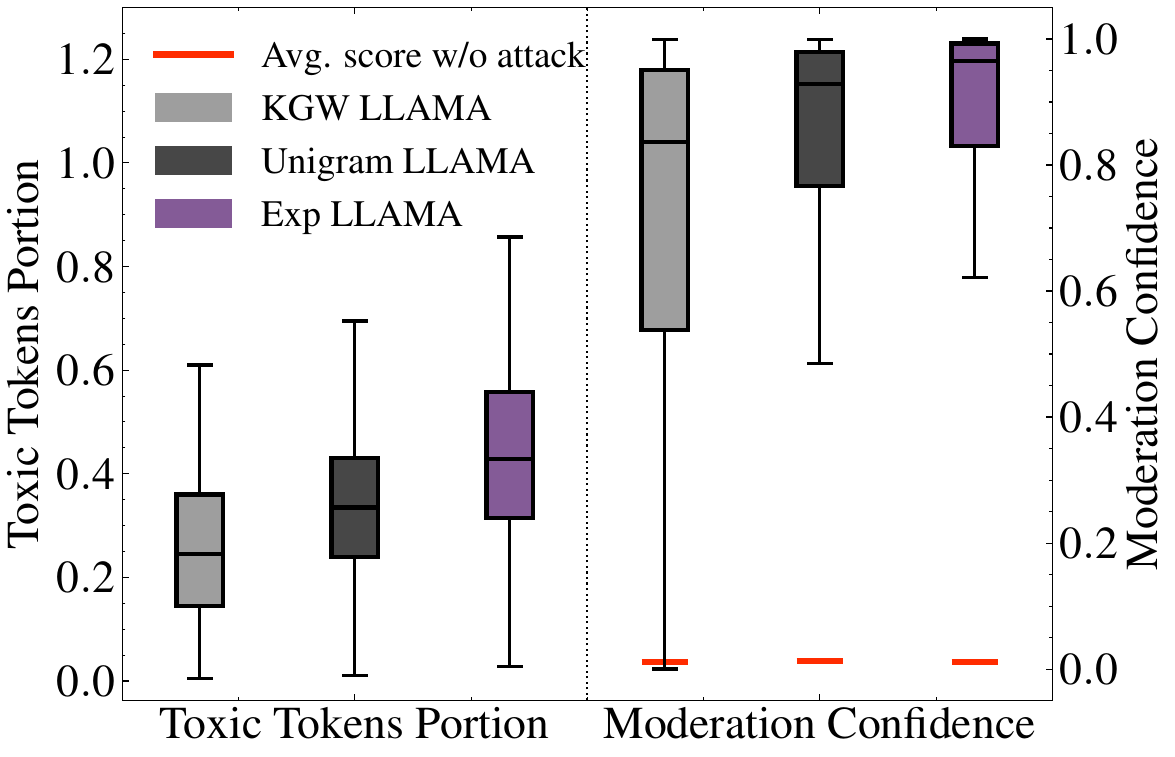}
        \vspace{-15pt}
        \subcaption{Toxic token insertion.}
        \label{fig:robustness-a}
        \includegraphics[width=\textwidth]{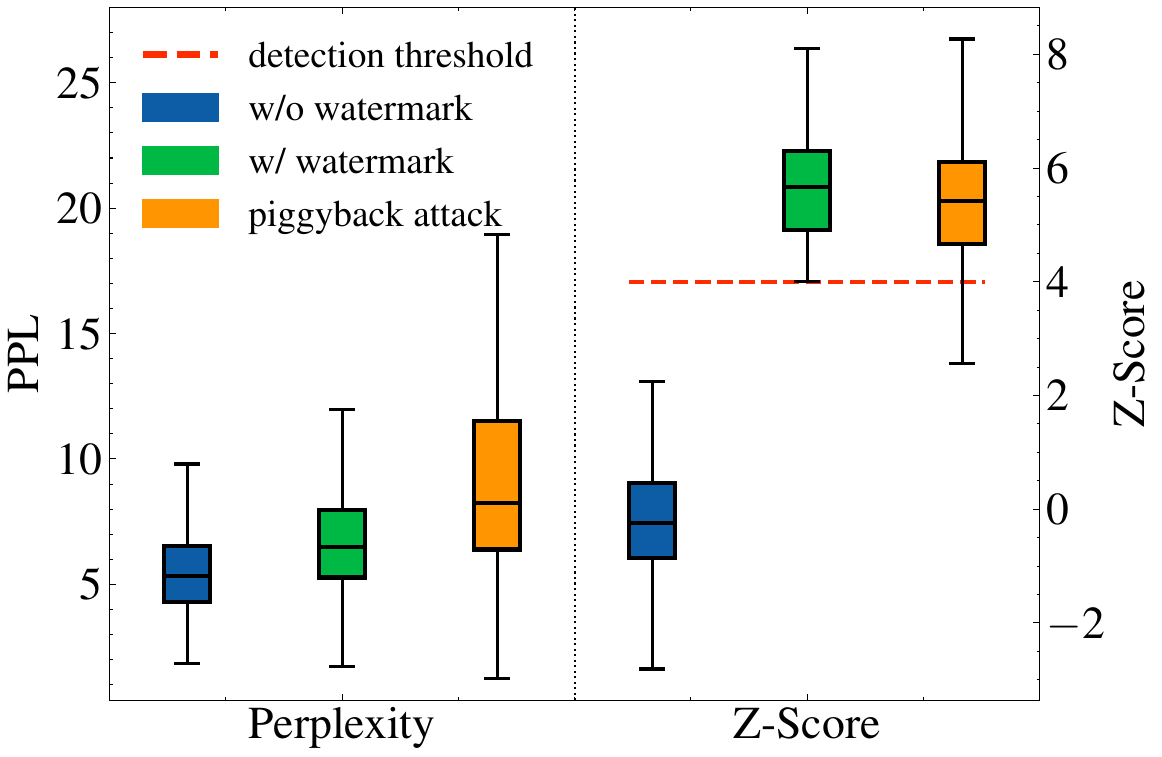}
        \vspace{-15pt}
        \subcaption{Fluent inaccurate editing.}
        \label{fig:robustness-b}
    \end{minipage}
    \vspace{-5pt}
    \caption{Piggyback spoofing of robust watermarks. (a) We can insert a large number of toxic tokens in robustly watermarked text without changing the watermark detection result, resulting in text that is likely to be identified as toxic. (b) We can use GPT4 to automatically modify watermarked text, making it appear inaccurate while retaining fluency. 
    }
    \label{fig:spoofing-robustness}
\vspace{-15pt}
\end{wrapfigure}

\noindent\textbf{Evaluation Result.}~We report the maximum portion of the inserted toxic tokens relative to the original watermarked sentence length on LLAMA-2-7B model in \F~\ref{fig:robustness-a}.
We also present the confidence of the OpenAI moderation model~\cite{openaimoderation} in identifying the content as violating their usage policy~\cite{openaipolicy} due to the inserted toxic tokens in \F~\ref{fig:robustness-a}.
Our findings show that we can insert a significant number of toxic tokens into content generated by all the robust watermarking schemes, with a median portion higher than $20\%$, i.e., for a $200$-token sentence, the attacker can insert a median of $40$ toxic tokens into it.
These toxic sentences are then identified as violating OpenAI policy rules with high confidence scores, whose median is higher than 0.8 for all the watermarking schemes we study.
The average confidence scores for content before attack are around 0.01. 
The empirical data on the maximum portion of inserted toxic tokens aligns with our analysis in Appendix~\ref{app:proof-robust}.
We further validate this analysis in \F~\ref{fig:spoofing-curve} of Appendix~\ref{app:val-thm-robust}, showing that attackers can insert nontrivial portions of toxic tokens into the watermarked text to launch piggyback spoofing attacks.
Notably, the more robust the watermark is, the more tokens can effectively be inserted.
We present the results on OPT-1.3B in Appendix~\ref{app:eval-fluent-edit}.

In \F~\ref{fig:robustness-b}, we report the PPL and watermark detection scores of the piggyback results on KGW and LLAMA-2-7B by the fluent inaccurate editing strategy.
We show that we can successfully generate fluent results, with a slightly higher PPL. 
$94.17\%$ of the piggyback results have a z-score higher than the default threshold $4$.
We randomly sample $100$ piggyback results and manually check that most of them ($92\%$) are fluent and have inaccurate or opposite content from the original watermarked content.
See examples in Appendix~\ref{app:robust-example}.
The results show that we can generate watermarked, fluent, but inaccurate content at scale with an ASR higher than 90\%.

\subsection{Discussion}
\vspace{-5pt}
\begin{tcolorbox}[size=small, colback=blue!10, title=Guideline \#1, fonttitle=\small]
\small
{Robust watermarks are inherently vulnerable to spoofing attacks and are not suitable as proof of content authenticity alone. 
To mitigate spoofing while preserving robustness, it may be necessary to combine additional measures such as signature-based fragile watermarks.}
\end{tcolorbox}

Our results highlight that piggyback spoofing attacks are easy to execute in practice. 
LLM watermarks typically do not consider such attacks during  design and deployment, and existing robust watermarks are inherently vulnerable to such attacks. 
We highlight the contradiction between the watermark robustness and the piggyback spoofing feasibility.
We consider this attack to be challenging to defend against, especially considering examples such as those in \T~\ref{tab:sentence-examples} and Appendix~\ref{app:robust-example}, 
where by only editing a single token, the entire content becomes incorrect.
It is hard, if not impossible, to detect whether a particular token is from the attacker by using robust watermark detection algorithms.
Thus, practitioners should weigh the risks of removal vs. 
piggyback spoofing attacks for the model at hand.
A feasible strategy to mitigate spoofing attacks is by requiring proof of digital signatures on the LLM generated content. 
However, while an attacker without access to the private key cannot spoof, it is worth nothing that this strategy is still vulnerable to watermark-removal attacks, as a single editing can invalidate the original signature.
\section{Attacking Stealing-Resistant Watermarks}
\label{sec:unbiasedness}
\vspace{-.05in}

As discussed in \S~\ref{sec:related-work}, many works have explored the possibility of launching watermark stealing attacks to infer the secret pattern of the watermark, which can then enable spoofing and removal attacks~\cite{sadasivan2023can, jovanovic2024watermark, gu2023learnability}.
A natural and effective defense against watermark stealing is using \textit{multiple watermark keys} during embedding, which 
is a common practice in cryptography and also suggested by prior watermarks and work in watermark stealing~\cite{pmlr-v202-kirchenbauer23a,kuditipudi2023robust,jovanovic2024watermark}.
Unfortunately, we demonstrate that using multiple keys can in turn introduce new watermark-removal attacks.

In particular, SOTA watermarking schemes~\cite{pmlr-v202-kirchenbauer23a, fairoze2023publicly, christ2023undetectable, kuditipudi2023robust, zhao2023provable, kirchenbauer2023reliability} aim to ensure the watermarked text retains its high quality and the private watermark patterns are not easily distinguished 
by maintaining an ``unbiasedness'' property: 
\begin{equation}
\footnotesize
    \mathbb{E}_{sk \in \mathcal{S}} (M_{\text{wm}} (\textbf{x}, sk)) \approx_\epsilon M_{\text{orig}}(\textbf{x}),
\end{equation}
i.e., the expected distribution of watermarked output over the watermark key space $sk \in \mathcal{S}$ is close to the output distribution without a watermark, differing by a distance of $\epsilon$.
We note that Exp~\cite{kuditipudi2023robust} is ``distortion free'' for a single text sample, and KGW~\cite{pmlr-v202-kirchenbauer23a} and Unigram~\cite{zhao2023provable} slightly shift the watermarked distributions.
We note that stealing attacks won't work on rigorously unbiased watermarks.

The insight of our proposed watermark-removal attack is that given the ``unbiasedness'' nature of watermarks and  
considering multiple keys may be used during watermark embedding,
malicious users can estimate the output distribution without any watermark by querying the watermarked LLM multiple times using the same prompt.
As this attack estimates the original, unwatermarked distribution, 
the quality of the generated content is preserved.

\noindent\textbf{Attack Procedure.}~An attacker queries a watermarked model with an input $\textbf{x}$ multiple times, observing $n$ subsequent tokens $\textbf{x}_{t+1}$. 
{This is easy for text completion model APIs, and chat model APIs can also be easily attacked by constructing a prompt to ask the chat model to complete a partial sentence without any prefix.}
The attacker then creates a frequency histogram of these tokens and samples according to the frequency. 
This sampled token matches the result of sampling on an unwatermarked output distribution with a nontrivial probability. 
Consequently, the attacker can progressively eliminate watermarks while maintaining a high quality of the synthesized content.
We present a formal analysis of the number of required queries in Appendix~\ref{app:proof-quality}.

\subsection{Evaluation}
\label{sec:eval-unbiasedness}
\vspace{-5pt}

\noindent\textbf{Experiment Setup.}~
Our watermarks, models and datasets settings are the same as \S~\ref{sec:eval-robustness}.
We study the trade-off between resistance against watermark stealing and watermark-removal attacks by evaluating a recent watermark stealing attack~\cite{jovanovic2024watermark}. 
In this attack, we query the watermarked LLM to obtain 2.2 million tokens in total to estimate the watermark pattern and then launch spoofing attacks using the estimated watermark pattern.
We follow their assumptions that the attacker can access the unwatermarked tokens' distribution.
In our watermark removal attack, we consider that the attacker has observations with different keys.
We evaluate the detection scores (z-score or p-value) and the output perplexity (PPL, evaluated using GPT3~\cite{ouyang2022training}). 
The detection algorithm returns the maximum detection score across all the keys, which increases the expectation of unwatermarked detection results.
Thus, we set the detection thresholds for different keys to keep the false positive rates (FPR) below 1e-3 and report the attack success rates (ASR).
We use default watermark hyperparameters.

\noindent\textbf{Evaluation Result.}~As shown in \F~\ref{fig:wm-steal}, using multiple keys can effectively defend against watermark stealing attacks. 
With a single key, the ASR is $91\%$, which matches the results reported in~\cite{jovanovic2024watermark}.
We observe that using three keys can effectively reduce the ASR to $13\%$, and using more than 7 keys, the ASR of the watermark stealing is close to zero. However, using more keys also makes the system vulnerable to our watermark-removal attacks as shown in \F~\ref{fig:wm-removal-unbiasedness-a}. 
When we use more than $7$ keys, the detection scores of the content produced by our watermark removal attacks closely resemble those of unwatermarked content and are much lower than the detection thresholds, with ASRs higher than $97\%$. 
\F~\ref{fig:wm-removal-unbiasedness-b} suggests that using more keys improves the quality of the output content. 
This is because, with a greater number of keys, there is a higher probability for an attacker to accurately estimate the unwatermarked distribution, which is consistent with our analysis in Appendix~\ref{app:proof-quality}.
We observe that in practice, 7 keys suffice to produce high-quality content comparable to the unwatermarked content. 
These observations remain consistent across various watermarking schemes and models; for additional results see Appendix~\ref{app:eval-quality}.
We note that the numbers are not exactly the same as~\cite{jovanovic2024watermark}, as we consider a more realistic attacker with less queries to the watermarked LLM.

\begin{figure}[!t]
\vspace{-13pt}
  \centering
  \begin{subfigure}[b]{0.34\textwidth}
  \centering
  \includegraphics[width=\textwidth]{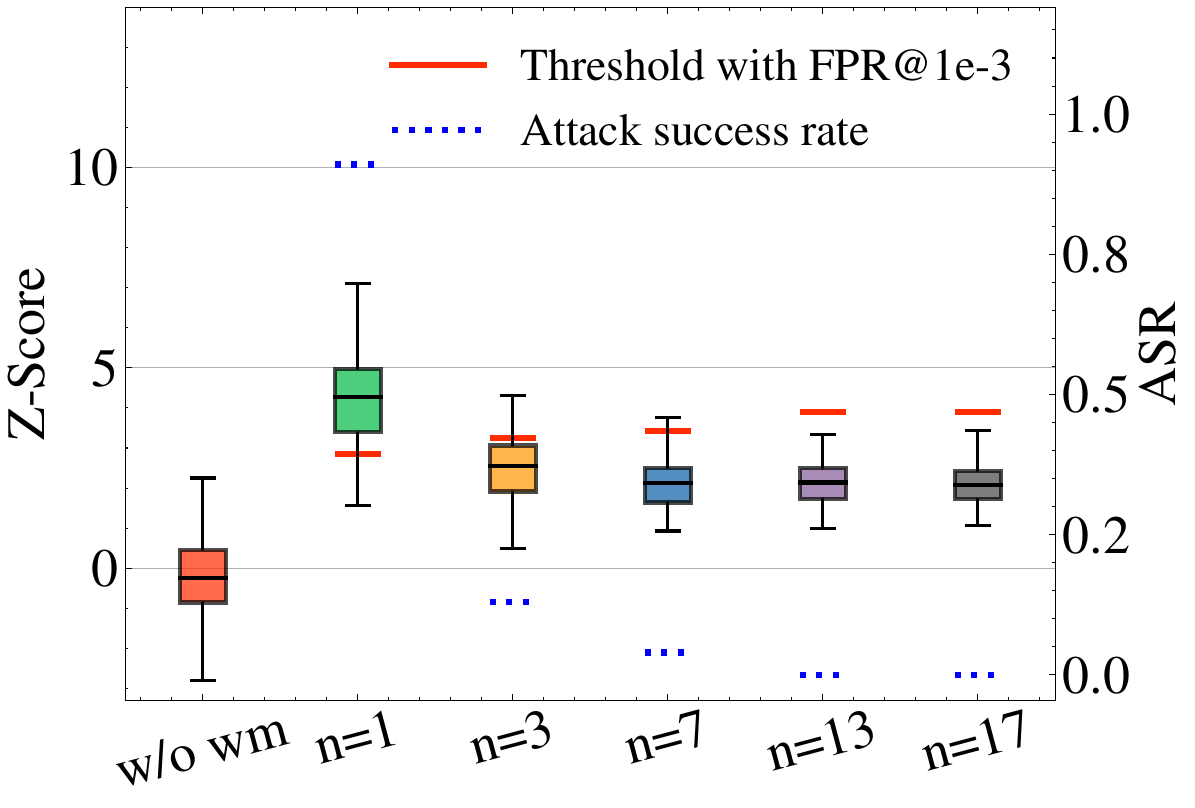}
  \vspace{-15pt}
    \caption{Z-Score and attack success rate (ASR) of watermark stealing~\cite{jovanovic2024watermark}.}
    \label{fig:wm-steal}
  \end{subfigure}
  \begin{subfigure}[b]{0.34\textwidth}
  \centering
  \includegraphics[width=\textwidth]{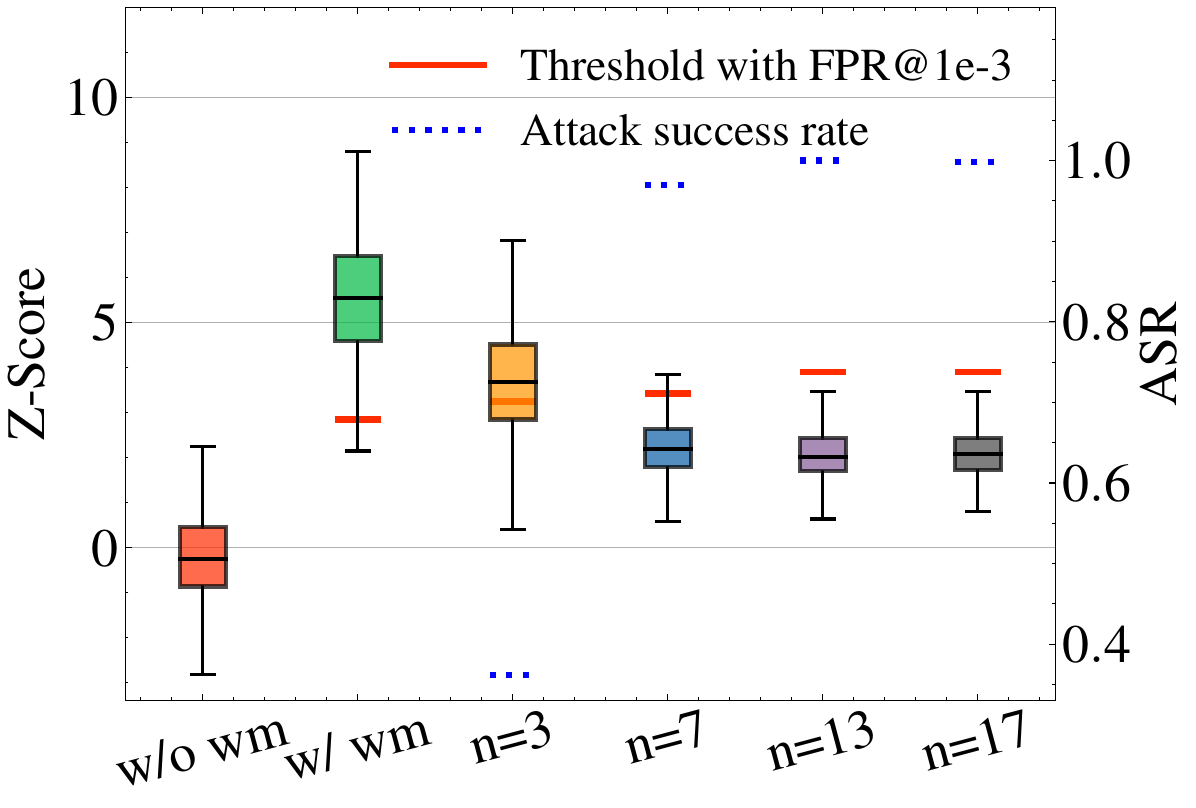}
  \vspace{-15pt}
    \caption{Z-Score and attack success rate (ASR) of watermark-removal.}
    \label{fig:wm-removal-unbiasedness-a}
  \end{subfigure}
  \begin{subfigure}[b]{0.30\textwidth}
  \centering
    \includegraphics[width=\textwidth]{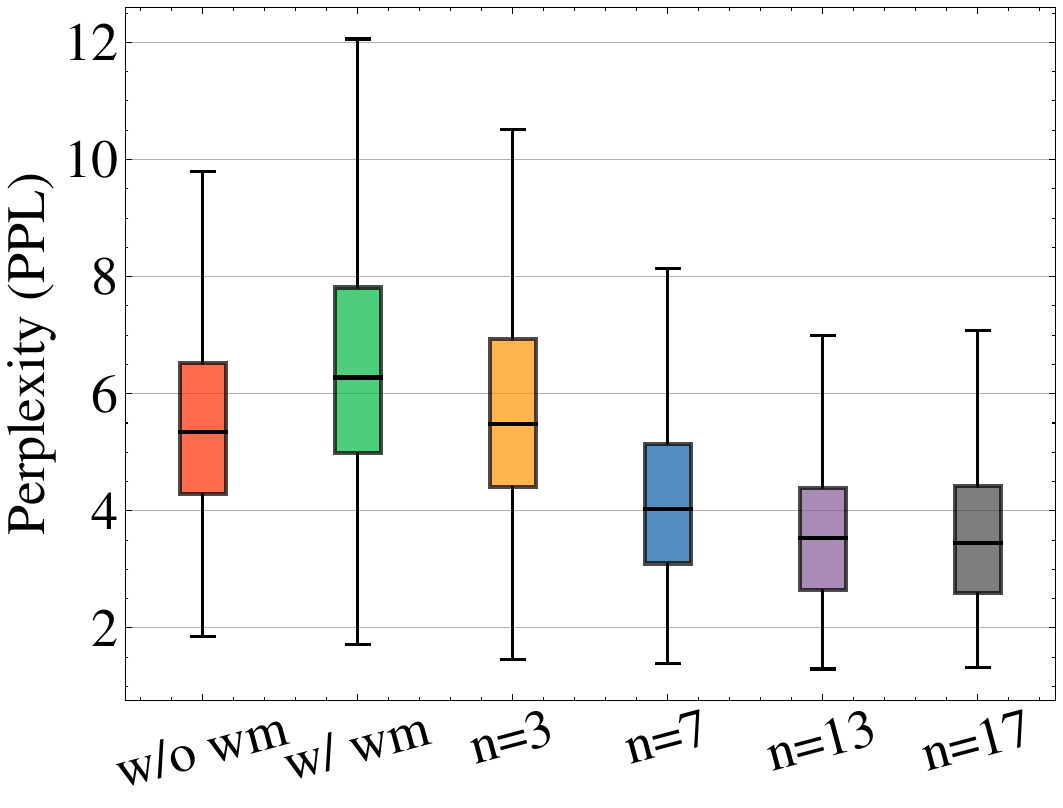}
    \vspace{-15pt}
    \caption{Perplexity (PPL) of watermark-removal.}
    \label{fig:wm-removal-unbiasedness-b}
  \end{subfigure}
  \vspace{-5pt}
  \caption{Spoofing attack based on watermark stealing~\cite{jovanovic2024watermark} and watermark-removal attacks on KGW watermark and LLAMA-2-7B model with different number of watermark keys $n$. Higher z-score reflects more confidence in watermarking and lower perplexity indicates better sentence quality. The attack success rates are based on the threshold with FPR@1e-3.}
  \label{fig:wm-removal-unbiasedness}
  \vspace{-18pt}
\end{figure}

\subsection{Discussion}
\vspace{-5pt}
\begin{tcolorbox}[size=small, colback=blue!10, title=Guideline \#2, fonttitle=\small]
\small
Using a larger number of watermarking keys can defend against watermark stealing attacks, but increases vulnerability to watermark-removal attacks. 
Limiting users' query rates can help to mitigate both attacks.
\end{tcolorbox}

Many prior works have suggested using multiple keys to defend against watermark stealing attacks.
However, in this study, we reveal that a conflict exists between improving resistance to watermark stealing and the feasibility of removing watermarks. 
Our evaluation results show that finding a "sweet spot" in terms of the number of keys to use to mitigate both the watermark stealing and the watermark-removal attacks is not trivial. 
For example, our watermark-removal attack achieves a high ASR of $36.2\%$ just using three keys, and the corresponding watermark stealing-based spoofing's ASR is $13.0\%$. 
Using more keys can decrease the watermark stealing-based spoofing's ASR, but at the cost of making the system more vulnerable to watermark removal and vice-versa.
We note that the ASRs with three keys are not negligible, thus limiting the ability of potentially malicious users is necessary in practice to mitigate these attacks.
As a practical defense, we evaluate watermark stealing with various query limits on the watermarked LLM, and found that the ASR can be significantly reduced by limiting the attacker's query rate.
{Detailed results can be found in Appendix~\ref{app:eval-quality}.}
Given the trade-off that exists, we suggest that LLM service providers consider ``defense-in-depth'' techniques such as anomaly detection, query rate limiting, and user identification verification.

\section{Attacking Watermark Detection APIs}
\vspace{-.05in}
\label{sec:method-detection}

It is still an open question whether  watermark detection APIs should be made publicly available to users.
Although this makes it easier to detect watermarked text, it is a commonly acknowledged that it will make the system vulnerable to attacks~\cite{watermarktalk} given the existence of oracle attacks~\cite{bleichenbacher1998chosen, cramer2003design, linnartz1998analysis, kalker1998watermark}.
Here, we study this statement more precisely by examining the specific risk trade-offs that exist, as well as introducing a novel defense that may make the public detection API more feasible in practice.
In the following sections, we first introduce  attacks that exploit the APIs and then propose suggestions and defenses to mitigate these attacks.

\subsection{Attack Procedures}
\label{sec:eval-api-attack}
\vspace{-5pt}
\noindent\textbf{Watermark-Removal Attack.}~For the watermark-removal attack, we consider an attacker who has access to the target watermarked LLM's API, and can query the watermark detection results.
The attacker feeds a prompt into the watermarked LLM, which generates the response in an auto-regressive manner, similar to how LLMs generate sentences. That is, the attacker will select each token based on the prior tokens and the detection results. 
For the token $\textbf{x}_i$ the attacker will generate a list of possible replacements for $\textbf{x}_i$.
This list can be generated by querying the watermarked LLM, querying a local model, or simply returned by the watermarked LLM. 
In this work, we choose the third approach because of its simplicity and guarantee of synthesized sentences' quality. 
This is a common assumption made by prior works~\cite{naseh2023stealing}, and such an API is also provided by OpenAI ($\mathrm{top\_logprobs = 5}$), which can benefit the normal users in understanding the model confidence, debugging and analyzing the model's behavior, customizing sampling strategies, etc. 
Consider that the top $L = 5$ tokens and their probabilities are returned to the attackers.
The probability that the attacker can find an unwatermarked token in the token candidates' list of length $L$ is $1 - \gamma^L$ for KGW and Unigram, which becomes sufficiently large given $L=5$ and $\gamma=0.5$. The attacker will query the detection using these replacements and sample a token based on their probabilities and detection scores to remove the watermark while preserving a high output quality.

\noindent\textbf{Spoofing Attack.}~Spoofing attacks follow a similar procedure where the attacker can generate (harmful) content using a local model. 
When sampling the tokens, instead of selecting those that yield low confidence scores as in removal attacks, the attacker will choose tokens that have higher confidence scores upon watermark detection queries.
Thanks to the robustness of the LLM watermarks, attackers don't need to ensure every single token carries a watermark; only that the overall detection confidence score surpasses the threshold, thereby treating synthesized content as if generated by the watermarked LLM.

\begin{figure*}[!t]
\vspace{-13pt}
  \centering
  \begin{subfigure}[b]{0.33\textwidth}
  \centering
    \includegraphics[width=\textwidth]{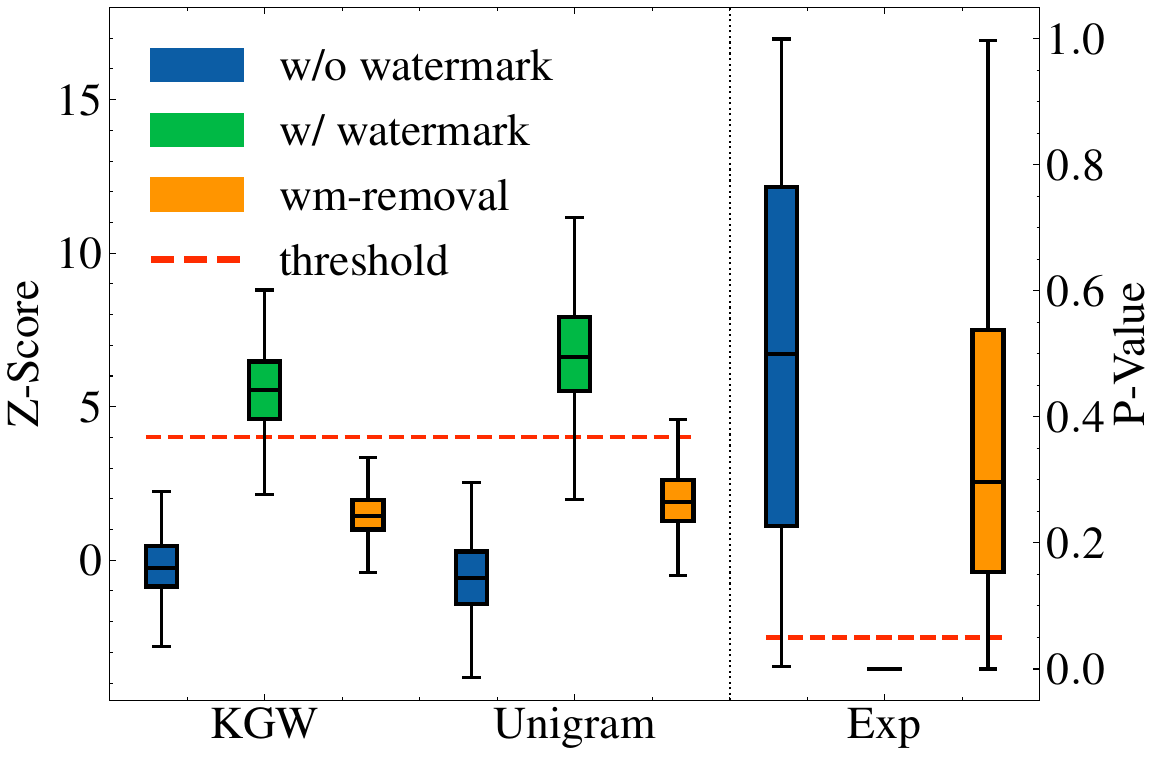}
    \vspace{-15pt}
    \caption{Z-Score/P-Value of wm-removal.}
    \label{fig:wm-removal-api-a}
  \end{subfigure}%
  \begin{subfigure}[b]{0.33\textwidth}
  \centering
    \includegraphics[width=\textwidth]{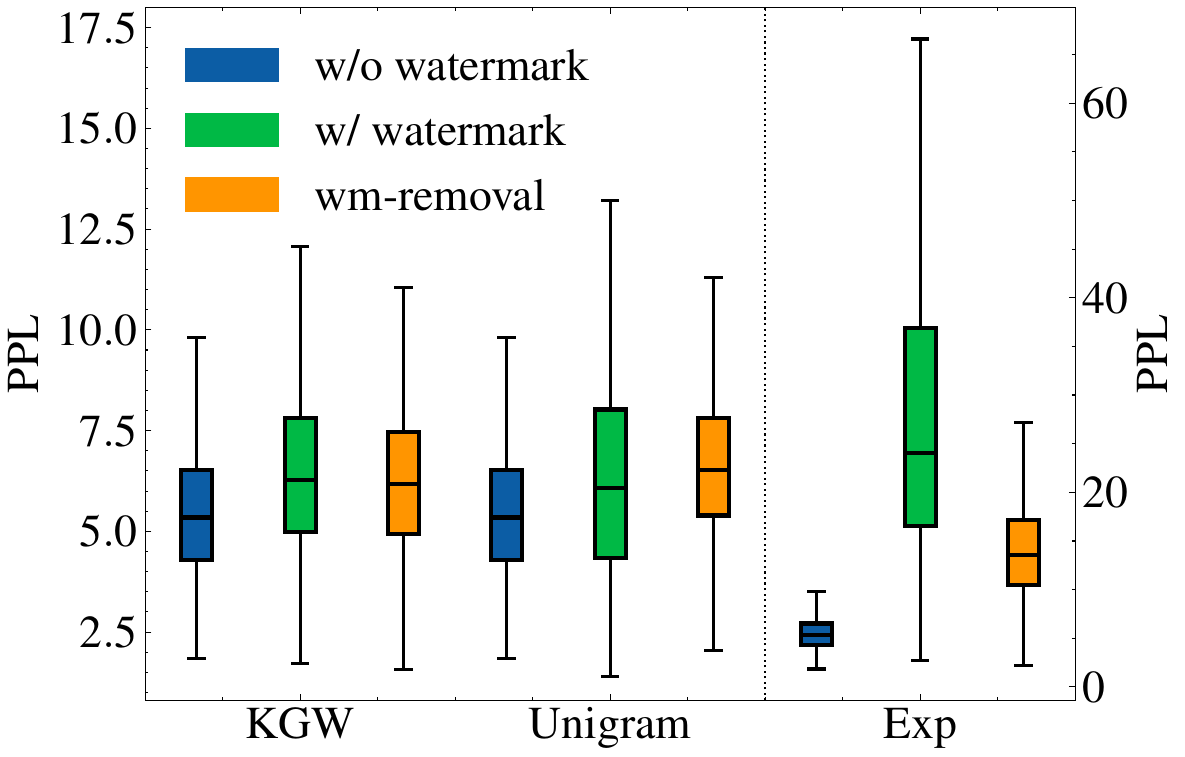}
    \vspace{-15pt}
    \caption{Perplexity of wm-removal.}
    \label{fig:wm-removal-api-b}
  \end{subfigure}%
  \begin{subfigure}[b]{0.33\textwidth}
  \centering
    \includegraphics[width=\textwidth]{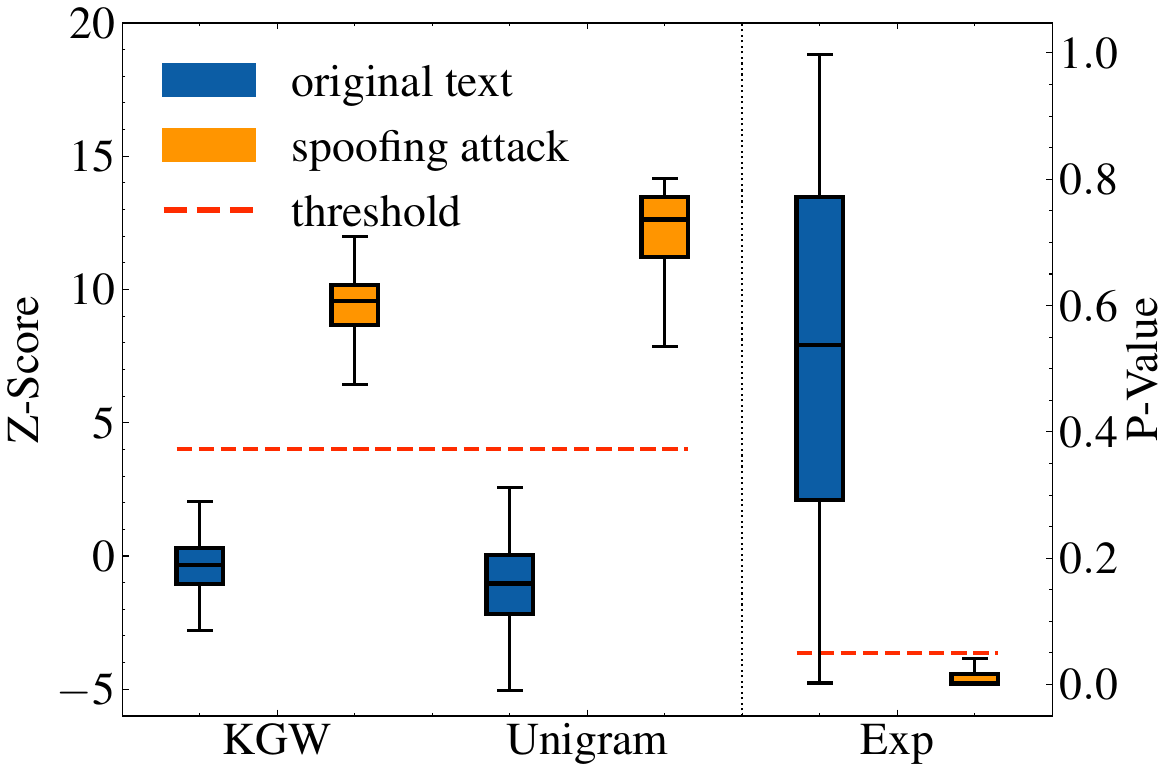}
    \vspace{-15pt}
    \caption{Z-Score/P-Value of spoofing.}
    \label{fig:spoofing-api}
  \end{subfigure}
  \vspace{-5pt}
  \caption{Attacks exploiting detection APIs on LLAMA-2-7B model.}
  \label{fig:api-attack}
  \vspace{-18pt}
\end{figure*}

\subsection{Evaluation}
\vspace{-5pt}

\noindent\textbf{Experiment Setup.}~ We use the same evaluation setup as in \S~\ref{sec:eval-robustness} and \S~\ref{sec:eval-unbiasedness}.
We evaluate the detection scores for both the watermark-removal and the spoofing attacks. 
We also report the number of queries to the detection API.
Furthermore, for the watermark-removal attack, where the attackers care more about the output quality, we report the output PPL.
For spoofing attacks, the attackers' local models are LLAMA-2-7B and OPT-1.3B.

\begin{wraptable}{r}{0.44\textwidth}
\vspace{-10pt}
\footnotesize
    \centering
    \resizebox{\linewidth}{!}{
    \begin{tabular}{c|c|c|c|c}
    \hline
         & \multicolumn{2}{c|}{wm-removal} & \multicolumn{2}{c}{spoofing} \\ \cline{2-5}
         & ASR  & \#queries & ASR & \#queries \\ \hline
       KGW & $1.00$ & $2.42$ & $0.98$ & $2.95$ \\ \hline
   Unigram & $0.96$ & $2.66$ & $0.98$ & $2.96$ \\ \hline
       Exp & $0.96$ & $1.55$ & $0.85$ & $2.89$ \\ \hline
    \end{tabular}}
    \vspace{-5pt}
    \caption{The attack success rate (ASR), and the average query numbers per token for the watermark-removal and spoofing attacks exploiting the detection API on LLAMA-2-7B model.}
    \label{tab:eval-api}
    \vspace{-10pt}
\end{wraptable}

\noindent\textbf{Evaluation Result.}~
As shown in \F~\ref{fig:wm-removal-api-a} and \F~\ref{fig:wm-removal-api-b}, watermark-removal attacks exploiting the detection API significantly reduce detection confidence while maintaining high output quality. 
For instance, for the KGW watermark on LLAMA-2-7B model, we achieve a median z-score of $1.43$, which is much lower than the threshold $4$. 
The PPL is also close to the watermarked outputs ($6.17$ vs. $6.28$).
We observe that the Exp watermark has higher PPL than the other two watermarks. 
This is because that Exp watermark is deterministic, while other watermarks enable random sampling during inference.
Our attack also employs sampling based on the token probabilities and detection scores, thus we can improve the output quality for the Exp watermark.

The spoofing attacks also significantly boost the detection confidence even though the content is not from the watermarked LLM, as depicted in~\F~\ref{fig:spoofing-api}.
We report the attack success rate (ASR) and the number of queries for both of the attacks in \T~\ref{tab:eval-api}.
The ASR quantifies how much of the generated content surpasses or falls short of the detection threshold.
These attacks use a reasonable number of queries to the detection API and achieve high success rate, demonstrating practical feasibility.
We observe consistent results on OPT-1.3B, please see Appendix~\ref{app:eval-api}.

\subsection{Defending Detection with Differential Privacy}
\label{sec:dp-defense}
In light of the issues above, we propose an effective defense using ideas from differential privacy (DP)~\cite{dwork2014algorithmic} to counteract detection API based spoofing attacks.
DP adds random noise to function results evaluated on private dataset such that the results from neighbouring datasets are indistinguishable.
Similarly, we consider adding Gaussian noise to the distance score in the watermark detection, making the detection $(\epsilon, \delta)$-DP~\cite{dwork2014algorithmic}, and ensuring that attackers cannot tell the difference between two queries by replacing a single token in the content,
thus increasing the hardness of launching the attacks.
Considering an attacker can average multiple query results to reduce noise and estimate original scores without DP protection,
we propose to calculate the noise based on the random seed generated by a pseudorandom function (PRF) with the sentence to be detected as the input.
Specifically, $\mathtt{seed} = \mathtt{PRF}_{sk}(\textbf{x})$, where $sk$ is the secret key held by the detection service.
The users without the secret key cannot reverse or reduce the noise in the detection score.
Thus, we can successfully mitigate the noise reduction via averaging multiple query results without comprising on utility or protection of the DP defense.
In the following, we evaluate the utility of the DP defense and its performance in mitigating the spoofing attacks.

\begin{wrapfigure}{r}{0.6\textwidth}
\vspace{-12pt}
    \centering
    \begin{minipage}[b]{0.3\textwidth}
        \centering
        \includegraphics[width=\textwidth]{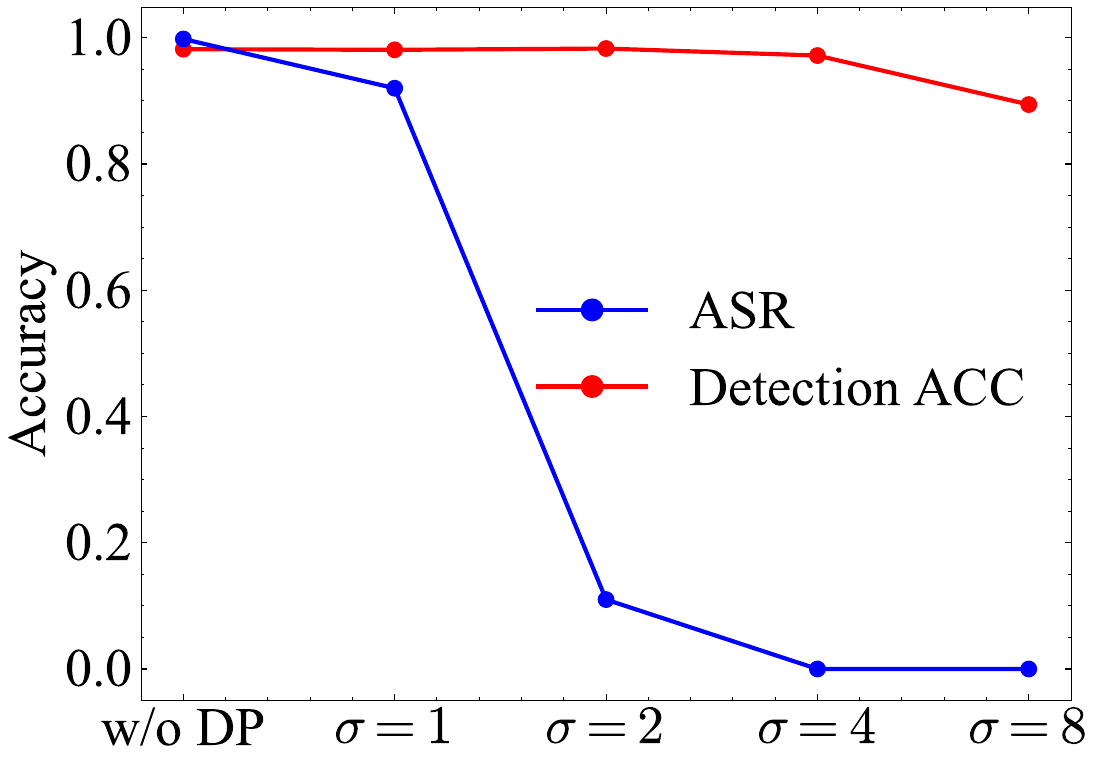}
        \vspace{-15pt}
        \subcaption{Spoofing ASR and detection ACC.}
        \label{fig:dp-a}
    \end{minipage}%
    \begin{minipage}[b]{0.3\textwidth}
        \centering
        \includegraphics[width=\textwidth]{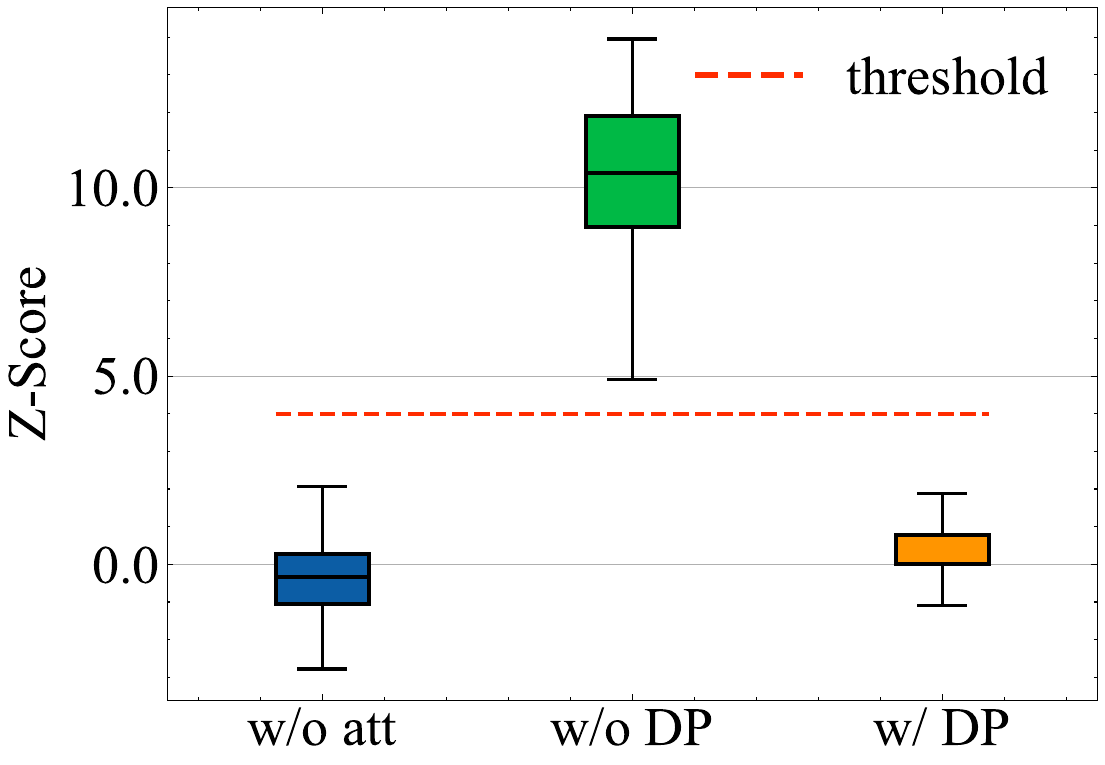}
        \vspace{-15pt}
        \subcaption{Z-scores with/without DP.}
        \label{fig:dp-b}
    \end{minipage}
    \vspace{-12pt}
    \caption{Evaluation of DP detection on KGW watermark and LLAMA-2-7B model. \textbf{(a).} Spoofing attack success rate (ASR) and detection accuracy (ACC) without and with DP watermark detection under different noise parameters. \textbf{(b).} Z-scores of original text without attack, spoofing attack without DP, and spoofing attacks with DP. We use the best $\sigma=4$ from \textbf{(a)}.}
  \label{fig:dp}
\vspace{-15pt}
\end{wrapfigure}

\noindent\textbf{Experiment Setup.}~Firstly, we assess the utility of DP defense by evaluating the accuracy of the detection under various noise scales.
Next, we evaluate the efficacy of the spoofing against DP detection defense using the same method as in \S~\ref{sec:eval-api-attack}.
We select the optimal noise scale that provides best defense while keeping the drop in accuracy within $2\%$.
We note that for KGW and Unigram watermarks, we add noise to the z-scores. 
Sensitivity varies with sentence length (e.g., $\Delta = \frac{h+1}{\sqrt{\gamma (1 - \gamma) l}}$ for replacement editing, where $l$ is the sentence length, $h$ is the context width of the watermark, and $\gamma$ is the portion of the tokens in green list).
The actual noise scale is proportional to $\sigma\Delta$.

\noindent\textbf{Evaluation Result.}~As shown in \F~\ref{fig:dp-a}, 
with a noise scale of $\sigma = 4$, the DP detection's accuracy drops from the original $98.2\%$ to $97.2\%$ on KGW and LLAMA-2-7B, while the spoofing ASR becomes $0\%$ using the same attack procedure as \S~\ref{sec:eval-api-attack}. 
The results are consistent for Unigram and Exp and OPT-1.3B as shown in Appendix~\ref{app:eval-dp}, 
which illustrates that the DP defense has a great utility-defense trade-off, with a negligible accuracy drop and significantly mitigates the spoofing.

\vspace{-5pt}
\subsection{Discussion}
\vspace{-5pt}
\begin{tcolorbox}[size=small, colback=blue!10, title=Guideline \#3, fonttitle=\small]
\small
Public detection APIs can enable both spoofing and removal attacks. To defend against these attacks, we propose a DP-inspired defense, which combined with techniques such as anomaly detection, query rate limiting, and user identification verification can help to make public detection more feasible in practice. 
\end{tcolorbox}

The detection API, available to the public, aids users in differentiating between AI and human-created materials. 
However, it can be exploited by attackers to gradually remove watermarks or launch spoofing attacks. 
We propose a defense utilizing the ideas in differential privacy, which significantly increases the difficulty for spoofing attacks.
However, this method is less effective against watermark-removal attacks that exploit the detection API because attackers' actions will be close to random sampling, which, even though with less success rates, remains an effective way of removing watermarks. 
Therefore, we leave developing a more powerful defense mechanism against watermark-removal attacks exploiting detection API as future work.
Additionally, we note that the attacker may increase the sensitivity of the input sentences by substituting multiple tokens and infer whether these tokens are in the green list or not to launch the spoofing attack, but this will require much more queries to the detection API.
We recommend detection services adopt a defense-in-depth approach~\cite{barni2014you, el2009reliability}.
E.g., they should detect and curb malicious behavior by limiting query rates from potential attackers, and also verify the identity of the users to protect against Sybil attacks.
\section{Discussion, Limitation \& Future Work}
\vspace{-5pt}
\noindent \textbf{Generalizability of our attacks.}~We focus on three SOTA PRF-based robust watermarks, which are a natural set to explore given their popularity and formal guarantees. 
There are other watermarks like the semantics-based watermarks~\cite{liu2024a, ren2024robust}. 
While attacking semantics-based watermarks is outside the scope of our study, we deem this an interesting future direction to explore.

Recently, researchers have also proposed signature-based publicly detectable watermarks~\cite{fairoze2023publicly} to mitigate the spoofing attacks by exploiting robustness.
Unlike the watermarks we study, these watermarks usually have weaker robustness guarantees, which further highlights the trade-offs between robustness and vulnerability to spoofing attacks, as we have discussed in \S~\ref{sec:robustness}.
A more recent work~\cite{christ2024pseudorandom} proposed a new watermarking scheme to embed undetectable watermarks.
To achieve the undetectability, instead of relying on using multiple watermark keys as the popular watermarks we study~\cite{pmlr-v202-kirchenbauer23a, zhao2023provable, kuditipudi2023robust}, they use pseudorandom error-correcting codes, which are computationally indistinguishable from uniform random strings but can be detected using the trapdoor matrix.
Given that they do not require using multiple watermark keys to mitigate watermark stealing, our removal attacks exploiting the use of multiple keys cannot be directly generalized to this watermarking scheme.
We deem it an interesting direction to explore attack opportunities on such watermarking schemes.

Our findings, such as exploiting robustness properties and publicly available detection APIs, can also be generalized to image watermarks~\cite{wen2023tree, yang2024gaussian}.
The attackers must integrate domain-specific constraints to ensure that the generated sentences or images are meaningful and high-quality. 
We deem studying the fundamental trade-offs for image watermarks a promising future direction.

\noindent \textbf{Trade-offs of watermark context width.}~There are two effective strategies to mitigate the watermark stealing attacks for the KGW watermark~\cite{pmlr-v202-kirchenbauer23a}: 1) using a larger context width $h$ and 2) using multiple watermark keys.
In this work (\S~\ref{sec:unbiasedness}), we primarily explore the fundamental trade-offs in using multiple watermark keys, which prior work has underexplored. 
Trade-offs in context widths were discussed in previous work~\cite{jovanovic2024watermark, pmlr-v202-kirchenbauer23a, zhao2023provable}.
Using a larger $h$ increases resistance against watermark stealing but reduces robustness. 
Recent work~\cite{jovanovic2024watermark} shows successful watermark stealing even with $h=4$. 
Using multiple keys, as shown in \S~\ref{sec:unbiasedness} of our paper, mitigates stealing attacks but introduces new attack vectors of watermark removal.
Our attacks will work under different choices of context width, as we exploit properties or design choices orthogonal to the context width.
To demonstrate this point, we provide more experimental results in Appendix~\ref{app:eval-context-width}.

\noindent \textbf{The influence of how detection proceeds with multiple keys.}~For the scenarios where multiple keys are used, we consider the detector using min/max aggregation to obtain the detection score.
More robust aggregations exist including the Harmonic mean p-value~\cite{wilson2019harmonic}.
We note that our watermark-removal attack exploiting the use of multiple keys is not dependent on the aggregation method as we do not rely on the server’s watermark detection. 
However, the trade-off analysis and the sweet spot for the number of keys may slightly change given the different detection performance.

\noindent \textbf{Changing to p-values in KGW and Unigram.}~P-values are used for Exp~\cite{kuditipudi2023robust} watermark in our paper, and the observations are consistent with KGW~\cite{pmlr-v202-kirchenbauer23a} and Unigram~\cite{zhao2023provable}. 
We expect no impact on results from this change since p-values are monotonic to z-scores.
To ease the figures' presentation, we adopt the z-statistics in the main paper, we present more results of using p-values in Appendix~\ref{app:eval-context-width}.

\vspace{-5pt}
\section{Conclusion}
\label{sec:conclusion}
\vspace{-5pt}

In this work, we reveal new attack vectors that exploit common features and design choices of LLM watermarks.
In particular, while these design choices may enhance robustness, resistance against watermark stealing attacks, and public detection ease, they also allow malicious actors to launch attacks that can easily remove the watermark or damage the model's reputation.
Based on the theoretical and empirical analysis of our attacks, we suggest guidelines for designing and deploying LLM watermarks along with possible defenses to establish more reliable LLM watermark systems. 

\vspace{-5pt}
\section*{Acknowledgements}
\vspace{-5pt}
This work was supported in part by the National Science Foundation grants IIS2145670, CCF2107024, CNS2326312 and funding from Amazon, Apple, Google, Intel, Meta, and the CyLab Security and Privacy Institute. Any opinions, findings and conclusions or recommendations expressed in this material are those of the author(s) and do not necessarily reflect the views of any of these funding agencies.

\bibliographystyle{plainnat}
\bibliography{main.bib}

\begin{thebibliography}{46}
\providecommand{\natexlab}[1]{#1}
\providecommand{\url}[1]{\texttt{#1}}
\expandafter\ifx\csname urlstyle\endcsname\relax
  \providecommand{\doi}[1]{doi: #1}\else
  \providecommand{\doi}{doi: \begingroup \urlstyle{rm}\Url}\fi

\bibitem[Aaronson(2023)]{watermarktalk}
Scott Aaronson.
\newblock Watermarking of large language models.
\newblock https://simons.berkeley.edu/talks/scott-aaronson-ut-austin-openai-2023-08-17, 2023.

\bibitem[Barni et~al.(2014)Barni, Comesa{\~n}a-Alfaro, P{\'e}rez-Gonz{\'a}lez, and Tondi]{barni2014you}
Mauro Barni, Pedro Comesa{\~n}a-Alfaro, Fernando P{\'e}rez-Gonz{\'a}lez, and Benedetta Tondi.
\newblock Are you threatening me?: Towards smart detectors in watermarking.
\newblock In \emph{Media Watermarking, Security, and Forensics 2014}, volume 9028, pages 51--62. SPIE, 2014.

\bibitem[Bleichenbacher(1998)]{bleichenbacher1998chosen}
Daniel Bleichenbacher.
\newblock Chosen ciphertext attacks against protocols based on the rsa encryption standard pkcs\# 1.
\newblock In \emph{Advances in Cryptology—CRYPTO'98: 18th Annual International Cryptology Conference Santa Barbara, California, USA August 23--27, 1998 Proceedings 18}, pages 1--12. Springer, 1998.

\bibitem[Brown et~al.(2020)Brown, Mann, Ryder, Subbiah, Kaplan, Dhariwal, Neelakantan, Shyam, Sastry, Askell, et~al.]{brown2020language}
Tom Brown, Benjamin Mann, Nick Ryder, Melanie Subbiah, Jared~D Kaplan, Prafulla Dhariwal, Arvind Neelakantan, Pranav Shyam, Girish Sastry, Amanda Askell, et~al.
\newblock Language models are few-shot learners.
\newblock \emph{Advances in neural information processing systems}, 33:\penalty0 1877--1901, 2020.

\bibitem[Christ and Gunn(2024)]{christ2024pseudorandom}
Miranda Christ and Sam Gunn.
\newblock Pseudorandom error-correcting codes.
\newblock In \emph{Annual International Cryptology Conference}, pages 325--347. Springer, 2024.

\bibitem[Christ et~al.(2023)Christ, Gunn, and Zamir]{christ2023undetectable}
Miranda Christ, Sam Gunn, and Or~Zamir.
\newblock Undetectable watermarks for language models.
\newblock \emph{arXiv preprint arXiv:2306.09194}, 2023.

\bibitem[Cramer and Shoup(2003)]{cramer2003design}
Ronald Cramer and Victor Shoup.
\newblock Design and analysis of practical public-key encryption schemes secure against adaptive chosen ciphertext attack.
\newblock \emph{SIAM Journal on Computing}, 33\penalty0 (1):\penalty0 167--226, 2003.

\bibitem[Dwork et~al.(2014)Dwork, Roth, et~al.]{dwork2014algorithmic}
Cynthia Dwork, Aaron Roth, et~al.
\newblock The algorithmic foundations of differential privacy.
\newblock \emph{Foundations and Trends{\textregistered} in Theoretical Computer Science}, 9\penalty0 (3--4):\penalty0 211--407, 2014.

\bibitem[El~Choubassi and Moulin(2009)]{el2009reliability}
Maha El~Choubassi and Pierre Moulin.
\newblock On reliability and security of randomized detectors against sensitivity analysis attacks.
\newblock \emph{IEEE Transactions on Information Forensics and Security}, 4\penalty0 (3):\penalty0 273--283, 2009.

\bibitem[Fairoze et~al.(2023)Fairoze, Garg, Jha, Mahloujifar, Mahmoody, and Wang]{fairoze2023publicly}
Jaiden Fairoze, Sanjam Garg, Somesh Jha, Saeed Mahloujifar, Mohammad Mahmoody, and Mingyuan Wang.
\newblock Publicly detectable watermarking for language models.
\newblock \emph{Cryptology ePrint Archive}, 2023.

\bibitem[Gu et~al.(2023)Gu, Li, Liang, and Hashimoto]{gu2023learnability}
Chenchen Gu, Xiang~Lisa Li, Percy Liang, and Tatsunori Hashimoto.
\newblock On the learnability of watermarks for language models.
\newblock \emph{arXiv preprint arXiv:2312.04469}, 2023.

\bibitem[Gumbel(1948)]{gumbel1948statistical}
Emil~Julius Gumbel.
\newblock \emph{Statistical theory of extreme values and some practical applications: a series of lectures}, volume~33.
\newblock US Government Printing Office, 1948.

\bibitem[Hu et~al.(2023)Hu, Chen, Wu, Wu, Zhang, and Huang]{hu2023unbiased}
Zhengmian Hu, Lichang Chen, Xidong Wu, Yihan Wu, Hongyang Zhang, and Heng Huang.
\newblock Unbiased watermark for large language models.
\newblock \emph{arXiv preprint arXiv:2310.10669}, 2023.

\bibitem[Iyyer et~al.(2018)Iyyer, Wieting, Gimpel, and Zettlemoyer]{iyyer2018adversarial}
Mohit Iyyer, John Wieting, Kevin Gimpel, and Luke Zettlemoyer.
\newblock Adversarial example generation with syntactically controlled paraphrase networks.
\newblock In \emph{Proceedings of the 2018 Conference of the North American Chapter of the Association for Computational Linguistics: Human Language Technologies, Volume 1 (Long Papers)}, pages 1875--1885, 2018.

\bibitem[Jovanovi{\'c} et~al.(2024)Jovanovi{\'c}, Staab, and Vechev]{jovanovic2024watermark}
Nikola Jovanovi{\'c}, Robin Staab, and Martin Vechev.
\newblock Watermark stealing in large language models.
\newblock \emph{arXiv preprint arXiv:2402.19361}, 2024.

\bibitem[Kalker et~al.(1998)Kalker, Linnartz, and van Dijk]{kalker1998watermark}
Ton Kalker, J-P Linnartz, and Marten van Dijk.
\newblock Watermark estimation through detector analysis.
\newblock In \emph{Proceedings 1998 International Conference on Image Processing. ICIP98 (Cat. No. 98CB36269)}, volume~1, pages 425--429. IEEE, 1998.

\bibitem[Kirchenbauer et~al.(2023{\natexlab{a}})Kirchenbauer, Geiping, Wen, Katz, Miers, and Goldstein]{pmlr-v202-kirchenbauer23a}
John Kirchenbauer, Jonas Geiping, Yuxin Wen, Jonathan Katz, Ian Miers, and Tom Goldstein.
\newblock A watermark for large language models.
\newblock In \emph{Proceedings of the 40th International Conference on Machine Learning}, volume 202 of \emph{Proceedings of Machine Learning Research}, pages 17061--17084. PMLR, 23--29 Jul 2023{\natexlab{a}}.
\newblock URL \url{https://proceedings.mlr.press/v202/kirchenbauer23a.html}.

\bibitem[Kirchenbauer et~al.(2023{\natexlab{b}})Kirchenbauer, Geiping, Wen, Shu, Saifullah, Kong, Fernando, Saha, Goldblum, and Goldstein]{kirchenbauer2023reliability}
John Kirchenbauer, Jonas Geiping, Yuxin Wen, Manli Shu, Khalid Saifullah, Kezhi Kong, Kasun Fernando, Aniruddha Saha, Micah Goldblum, and Tom Goldstein.
\newblock On the reliability of watermarks for large language models.
\newblock \emph{arXiv preprint arXiv:2306.04634}, 2023{\natexlab{b}}.

\bibitem[Krishna et~al.(2023)Krishna, Song, Karpinska, Wieting, and Iyyer]{krishna2023paraphrasing}
Kalpesh Krishna, Yixiao Song, Marzena Karpinska, John~Frederick Wieting, and Mohit Iyyer.
\newblock Paraphrasing evades detectors of {AI}-generated text, but retrieval is an effective defense.
\newblock In \emph{Thirty-seventh Conference on Neural Information Processing Systems}, 2023.
\newblock URL \url{https://openreview.net/forum?id=WbFhFvjjKj}.

\bibitem[Kuditipudi et~al.(2023)Kuditipudi, Thickstun, Hashimoto, and Liang]{kuditipudi2023robust}
Rohith Kuditipudi, John Thickstun, Tatsunori Hashimoto, and Percy Liang.
\newblock Robust distortion-free watermarks for language models.
\newblock \emph{arXiv preprint arXiv:2307.15593}, 2023.

\bibitem[Li et~al.(2018)Li, Jiang, Shang, and Li]{li2018paraphrase}
Zichao Li, Xin Jiang, Lifeng Shang, and Hang Li.
\newblock Paraphrase generation with deep reinforcement learning.
\newblock In \emph{Proceedings of the 2018 Conference on Empirical Methods in Natural Language Processing}, pages 3865--3878, 2018.

\bibitem[Lin et~al.(2021)Lin, Cai, and Wan]{lin2021towards}
Zhe Lin, Yitao Cai, and Xiaojun Wan.
\newblock Towards document-level paraphrase generation with sentence rewriting and reordering.
\newblock In \emph{Findings of the Association for Computational Linguistics: EMNLP 2021}, pages 1033--1044, 2021.

\bibitem[Linnartz and Van~Dijk(1998)]{linnartz1998analysis}
Jean Paul~MG Linnartz and Marten Van~Dijk.
\newblock Analysis of the sensitivity attack against electronic watermarks in images.
\newblock In \emph{Information Hiding: Second International Workshop, IH’98 Portland, Oregon, USA, April 14--17, 1998 Proceedings 2}, pages 258--272. Springer, 1998.

\bibitem[Liu et~al.(2024)Liu, Pan, Hu, Meng, and Wen]{liu2024a}
Aiwei Liu, Leyi Pan, Xuming Hu, Shiao Meng, and Lijie Wen.
\newblock A semantic invariant robust watermark for large language models.
\newblock In \emph{The Twelfth International Conference on Learning Representations}, 2024.
\newblock URL \url{https://openreview.net/forum?id=6p8lpe4MNf}.

\bibitem[Mitchell et~al.(2023)Mitchell, Lee, Khazatsky, Manning, and Finn]{detectgpt}
Eric Mitchell, Yoonho Lee, Alexander Khazatsky, Christopher~D. Manning, and Chelsea Finn.
\newblock Detectgpt: zero-shot machine-generated text detection using probability curvature.
\newblock In \emph{Proceedings of the 40th International Conference on Machine Learning}, ICML'23. JMLR.org, 2023.

\bibitem[Naseh et~al.(2023)Naseh, Krishna, Iyyer, and Houmansadr]{naseh2023stealing}
Ali Naseh, Kalpesh Krishna, Mohit Iyyer, and Amir Houmansadr.
\newblock Stealing the decoding algorithms of language models.
\newblock In \emph{Proceedings of the 2023 ACM SIGSAC Conference on Computer and Communications Security}, pages 1835--1849, 2023.

\bibitem[OpenAI()]{openaiapi}
OpenAI.
\newblock Openai api of returning top-k logprobs.
\newblock https://platform.openai.com/docs/api-reference/completions/create.

\bibitem[OpenAI(2022)]{chatgpt}
OpenAI.
\newblock Chatgpt: Optimizing language models for dialogue.
\newblock OpenAI blog, https://openai.com/blog/chatgpt, 2022.

\bibitem[OpenAI(2023{\natexlab{a}})]{openai2023gpt4}
OpenAI.
\newblock Gpt-4 technical report.
\newblock \emph{arXiv preprint arXiv:2303.08774}, 2023{\natexlab{a}}.

\bibitem[OpenAI(2023{\natexlab{b}})]{openaimoderation}
OpenAI.
\newblock Openai moderation endpoint.
\newblock https://platform.openai.com/docs/guides/moderation, 2023{\natexlab{b}}.

\bibitem[OpenAI(2023{\natexlab{c}})]{openaipolicy}
OpenAI.
\newblock Openai usage policies.
\newblock https://openai.com/policies/usage-policies, 2023{\natexlab{c}}.

\bibitem[Ouyang et~al.(2022)Ouyang, Wu, Jiang, Almeida, Wainwright, Mishkin, Zhang, Agarwal, Slama, Ray, et~al.]{ouyang2022training}
Long Ouyang, Jeffrey Wu, Xu~Jiang, Diogo Almeida, Carroll Wainwright, Pamela Mishkin, Chong Zhang, Sandhini Agarwal, Katarina Slama, Alex Ray, et~al.
\newblock Training language models to follow instructions with human feedback.
\newblock \emph{Advances in Neural Information Processing Systems}, 35:\penalty0 27730--27744, 2022.

\bibitem[Piet et~al.(2023)Piet, Sitawarin, Fang, Mu, and Wagner]{piet2023mark}
Julien Piet, Chawin Sitawarin, Vivian Fang, Norman Mu, and David Wagner.
\newblock Mark my words: Analyzing and evaluating language model watermarks.
\newblock \emph{arXiv preprint arXiv:2312.00273}, 2023.

\bibitem[Ren et~al.(2024)Ren, Xu, Liu, Cui, Wang, Yin, and Tang]{ren2024robust}
Jie Ren, Han Xu, Yiding Liu, Yingqian Cui, Shuaiqiang Wang, Dawei Yin, and Jiliang Tang.
\newblock A robust semantics-based watermark for large language model against paraphrasing.
\newblock In \emph{Findings of the Association for Computational Linguistics: NAACL 2024}, pages 613--625, 2024.

\bibitem[Sadasivan et~al.(2023)Sadasivan, Kumar, Balasubramanian, Wang, and Feizi]{sadasivan2023can}
Vinu~Sankar Sadasivan, Aounon Kumar, Sriram Balasubramanian, Wenxiao Wang, and Soheil Feizi.
\newblock Can ai-generated text be reliably detected?
\newblock \emph{arXiv preprint arXiv:2303.11156}, 2023.

\bibitem[Saharia et~al.(2022)Saharia, Chan, Saxena, Li, Whang, Denton, Ghasemipour, Gontijo~Lopes, Karagol~Ayan, Salimans, et~al.]{saharia2022photorealistic}
Chitwan Saharia, William Chan, Saurabh Saxena, Lala Li, Jay Whang, Emily~L Denton, Kamyar Ghasemipour, Raphael Gontijo~Lopes, Burcu Karagol~Ayan, Tim Salimans, et~al.
\newblock Photorealistic text-to-image diffusion models with deep language understanding.
\newblock \emph{Advances in Neural Information Processing Systems}, 35:\penalty0 36479--36494, 2022.

\bibitem[Solaiman et~al.(2019)Solaiman, Brundage, Clark, Askell, Herbert-Voss, Wu, Radford, Krueger, Kim, Kreps, McCain, Newhouse, Blazakis, McGuffie, and Wang]{solaiman2019release}
Irene Solaiman, Miles Brundage, Jack Clark, Amanda Askell, Ariel Herbert-Voss, Jeff Wu, Alec Radford, Gretchen Krueger, Jong~Wook Kim, Sarah Kreps, Miles McCain, Alex Newhouse, Jason Blazakis, Kris McGuffie, and Jasmine Wang.
\newblock Release strategies and the social impacts of language models, 2019.

\bibitem[Touvron et~al.(2023)Touvron, Martin, Stone, Albert, Almahairi, Babaei, Bashlykov, Batra, Bhargava, Bhosale, et~al.]{touvron2023llama}
Hugo Touvron, Louis Martin, Kevin Stone, Peter Albert, Amjad Almahairi, Yasmine Babaei, Nikolay Bashlykov, Soumya Batra, Prajjwal Bhargava, Shruti Bhosale, et~al.
\newblock Llama 2: Open foundation and fine-tuned chat models.
\newblock \emph{arXiv preprint arXiv:2307.09288}, 2023.

\bibitem[Wang et~al.(2023)Wang, Yang, Chen, Zhou, Lin, Meng, Zhou, and Sun]{wang2023towards}
Lean Wang, Wenkai Yang, Deli Chen, Hao Zhou, Yankai Lin, Fandong Meng, Jie Zhou, and Xu~Sun.
\newblock Towards codable text watermarking for large language models.
\newblock \emph{arXiv preprint arXiv:2307.15992}, 2023.

\bibitem[Wen et~al.(2023)Wen, Kirchenbauer, Geiping, and Goldstein]{wen2023tree}
Yuxin Wen, John Kirchenbauer, Jonas Geiping, and Tom Goldstein.
\newblock Tree-ring watermarks: Fingerprints for diffusion images that are invisible and robust.
\newblock \emph{arXiv preprint arXiv:2305.20030}, 2023.

\bibitem[Wilson(2019)]{wilson2019harmonic}
Daniel~J Wilson.
\newblock The harmonic mean p-value for combining dependent tests.
\newblock \emph{Proceedings of the National Academy of Sciences}, 116\penalty0 (4):\penalty0 1195--1200, 2019.

\bibitem[Wu et~al.(2023)Wu, Hu, Zhang, and Huang]{wu2023dipmark}
Yihan Wu, Zhengmian Hu, Hongyang Zhang, and Heng Huang.
\newblock Dipmark: A stealthy, efficient and resilient watermark for large language models.
\newblock \emph{arXiv preprint arXiv:2310.07710}, 2023.

\bibitem[Yang et~al.(2024)Yang, Zeng, Chen, Fang, Zhang, and Yu]{yang2024gaussian}
Zijin Yang, Kai Zeng, Kejiang Chen, Han Fang, Weiming Zhang, and Nenghai Yu.
\newblock Gaussian shading: Provable performance-lossless image watermarking for diffusion models.
\newblock In \emph{Proceedings of the IEEE/CVF Conference on Computer Vision and Pattern Recognition}, pages 12162--12171, 2024.

\bibitem[Zhang et~al.(2023)Zhang, Edelman, Francati, Venturi, Ateniese, and Barak]{zhang2023watermarks}
Hanlin Zhang, Benjamin Edelman, Danilo Francati, Daniele Venturi, Giuseppe Ateniese, and Boaz Barak.
\newblock Watermarks in the sand: Impossibility of strong watermarking for generative models.
\newblock \emph{arXiv preprint arXiv:2311.04378}, 2023.

\bibitem[Zhang et~al.(2022)Zhang, Roller, Goyal, Artetxe, Chen, Chen, Dewan, Diab, Li, Lin, et~al.]{zhang2022opt}
Susan Zhang, Stephen Roller, Naman Goyal, Mikel Artetxe, Moya Chen, Shuohui Chen, Christopher Dewan, Mona Diab, Xian Li, Xi~Victoria Lin, et~al.
\newblock Opt: Open pre-trained transformer language models.
\newblock \emph{arXiv preprint arXiv:2205.01068}, 2022.

\bibitem[Zhao et~al.(2024)Zhao, Ananth, Li, and Wang]{zhao2023provable}
Xuandong Zhao, Prabhanjan~Vijendra Ananth, Lei Li, and Yu-Xiang Wang.
\newblock Provable robust watermarking for {AI}-generated text.
\newblock In \emph{The Twelfth International Conference on Learning Representations}, 2024.
\newblock URL \url{https://openreview.net/forum?id=SsmT8aO45L}.

\end{thebibliography}

\newpage
\appendix
\onecolumn
\section{Broader Impacts}
\label{app:broader-impact}

Our work studies the security implications of common LLM watermarking design choices.  By developing realistic attacks and defenses and a simple set of guidelines for watermarking in practice, we aim for the work to serve as a resource for the development of secure LLM watermarking systems. Of course,  by outlining such attacks, there is a risk that our work may in fact increase the prevalence of watermark removal or spoofing attacks performed in practice. We believe that this is nonetheless an important step towards educating the community about potential risks in watermarking systems and ultimately creating more effective defenses for secure LLM watermarking.

More generally, our work shows that a number of trade-offs exist in LLM watermarking (e.g., between utility, usability, robustness, and susceptibility to removal or spoofing attacks). The guidelines we propose provide rough proposals for considering these trade-offs, but we note that how to best navigate each trade-off will depend on the application at hand. Considering strategies to best navigate this space for specific LLM watermarking applications is an important direction of future study.

\section{Attacker's Motivation}
\label{app:attack-motivation}

We present two practical scenarios to motivate \textit{watermark-removal} attacks: 
\textit{(i)} A student or a journalist uses high-quality watermarked LLMs to write articles, 
but wants to remove the watermark to claim originality.
\textit{(ii)} A malicious company offering LLM services for clients, instead of developing their own LLMs, simply queries a watermarked LLM from a victim company and removes the watermark, potentially infringing upon IP rights of the victim company.

In \textit{piggyback and spoofing} attacks, an attacker can damage the reputation of a victim company offering an LLM service.
For example: 
\textit{(i)} The attacker can use a spoofing attack to generate fake news or incorrect facts and post them on social media. 
By claiming the material is generated by the LLM from the benign company, the attacker can damage the reputation of the company and their model.
\textit{(ii)} The attacker can use the spoofing attack to inject malicious code into some public software.
The code has the benign company's watermark embedded, and the benign company may thus be at fault and have to bear responsibility for the actions. 
\section{Watermarking Schemes \& Hyper-Parameters}
\label{app:wm-intro}

In this section, we introduce the three watermarking schemes we evaluate in the paper---KGW~\cite{pmlr-v202-kirchenbauer23a}, Unigram~\cite{zhao2023provable}, and Exp~\cite{kuditipudi2023robust}.
We also introduce the perplexity, a metric to evaluate the sentence quality.

\noindent\textbf{KGW.}~In the KGW watermarking scheme, when generating the current token $\textbf{x}_{t+1}$, all the tokens in the vocabulary is pseudorandomly shuffled and split into two lists---the green list and the red list.
The random seed used to determine the green and red lists is computed by a watermark secret key $sk$ and the prior $h$ tokens $\textbf{x}_{t-h-1} || \cdots || \textbf{x}_{t}$ using pseudorandom functions (PRFs):
\begin{equation*}
    \textsc{seed} = F_{sk}(\textbf{x}_{t-h-1}||\cdots||\textbf{x}_{t}),
\end{equation*}
where $h$ is the context width of the watermark.
We note that the choice of $h$ has minor influence on our attacks or defenses, as our algorithms are not dependent on $h$. 
Here we use their original algorithm with $h=1$.
Then, the seed is used to split the vocabulary into the green and red lists of tokens, with $\gamma$ portion of tokens in the green list:
\begin{equation*}
    L_\text{green}, L_\text{red} = \text{Shuffle}(\mathcal{V}, \textsc{seed}, \gamma)
\end{equation*}
Then, KGW generates a binary watermark mask vector for the current token prediction, which has the same size as the vocabulary. 
All the tokens in the green list $L_\text{green}$ have value $1$ in the mask, and all the tokens in the red list have value $0$ in the mask:
\begin{equation*}
    \textsc{mask} = \text{GenerateMask}(L_\text{green}, L_\text{red})
\end{equation*}
To embed the watermark, KGW add a constant to the logits of the LLM's prediction for token $\textbf{x}_{t+1}$:
\begin{equation*}
    \textsc{WatermarkedProb} = \text{Softmax}(\text{logits} + \delta \times \textsc{mask}),
\end{equation*}
where the $\text{logits}$ is from the LLM, and the $\delta$ is the watermark strength. 
Then the LLM will sample the token $\textbf{x}_{t+1}$ according to the watermarked probability distribution.

The detection involves computing the z-score:
\begin{equation*}
    z = \frac{g - \gamma l}{\sqrt{\gamma (1-\gamma) l}},
\end{equation*}
where $g$ is the number of tokens in the green list, $l$ is the total number of tokens in the input token sequence, and $\gamma$ is the portion of the vocabulary tokens in the green list.
Similar to the watermark embedding, the green and red lists for each token position are determined by watermark secret key and the token prior to the current token in the input token sequence.

\noindent\textbf{Unigram.}~Similar to KGW, Unigram also splits the vocabulary into green and red lists and prioritize the tokens in the green list by adding a constant to the logits before computing the softmax.
The difference is that Unigram uses global red and green lists instead of computing the green and red lists for each token.
That is, the seed to shuffle the list is only determined by the watermark secret key and generated by a Pseudo-Random Generator (PRG):
\begin{equation*}
    \textsc{seed} = G(sk)
\end{equation*}
Then, similar to KGW, the seed is used to split the vocabulary into the green and red lists of tokens, with $\gamma$ portion of tokens in the green list:
\begin{equation*}
    L_\text{green}, L_\text{red} = \text{Shuffle}(\mathcal{V}, \textsc{seed}, \gamma)
\end{equation*}
The watermark embedding and detection procedures are the same as KGW:
Unigram first compute the watermark mask:
\begin{equation*}
    \textsc{mask} = \text{GenerateMask}(L_\text{green}, L_\text{red})
\end{equation*}
And then embed the watermark by perturbing the logits of the LLM outputs:
\begin{equation*}
    \textsc{WatermarkedProb} = \text{Softmax}(\text{logits} + \delta \times \textsc{mask}),
\end{equation*}
where the $\text{logits}$ is from the LLM, and the $\delta$ is the watermark strength. 
Then the LLM will sample the token $\textbf{x}_{t+1}$ according to the watermarked probability distribution.

The detection also computes the z-score:
\begin{equation*}
    z = \frac{g - \gamma l}{\sqrt{\gamma (1-\gamma) l}},
\end{equation*}
where $g$ is the number of tokens in the green list, $l$ is the total number of tokens in the input token sequence, and $\gamma$ is the portion of the vocabulary tokens in the green list.
According to the analysis in~\cite{zhao2023provable} and also consistent with our results in \S~\ref{sec:eval-robustness}, by decoupling the green and red lists splitting with the prior tokens, Unigram is twice as robust as KGW.
But it's more likely to leak the pattern of the watermarked tokens given that it uses a global green-red list splitting.

\noindent\textbf{Exp.}~The Exp watermarking scheme from~\cite{kuditipudi2023robust} is an extension of~\cite{watermarktalk}.
Instead of using a single key as in KGW and Unigram, the usage of multiple watermark keys is inherent in Exp to provide the distortion-free guarantee.
Each key is a vector of size $|\mathcal{V}|$ with values uniformly distributed in $[0, 1]$.
That is, $sk = \xi_1, \xi_2, \cdots, \xi_n$, where $\xi_k \in [0, 1]^{|\mathcal{V}|}, k\in [n]$, and $n$ is the length of the watermark keys, default to $256$.

For the prediction of the token $\textbf{x}_{t+1}$, Exp firstly collects the output probability vector $\textbf{p} \in [0,1]^{|\mathcal{V}|}$ from the LLM.
A random shift $r \overset{{\scriptscriptstyle\$}}{\leftarrow} [n]$ is sampled at the beginning of receiving the prompt.
Then the token $\textbf{x}_{t+1}$ is sampled using the Gumbel trick~\cite{gumbel1948statistical}:
\begin{equation*}
    \textbf{x}_{t+1} = {\arg \max}_i \; (\xi_{k, i})^{1/\textbf{p}_i},
\end{equation*}
where $k=r + t + 1 \text{ mod } n$, i.e., each position uses a different watermark key which determines the uniform distribution sampling used in the Gumbel trick sampling.
This method guarantees that the output distribution is distortion-free, whose expectation is identical to the distribution without watermark given sufficiently large $n$.

The watermark detection also computes test statistics.
The basic test statistics is:
\begin{equation*}
    \phi = \sum_{t=1}^l - \log(1 - \xi_{k, \textbf{x}_t}),
\end{equation*}
where $k = t \text{ mod } n$. 
And Exp computes the minimum Levenshtein distance using the basic test statistic as a cost (see \S~2.4 in~\cite{kuditipudi2023robust}).

Instead of using single keys as KGW and Unigram, Exp uses multiple keys and incorporates Gumbel trick to rigorously provide distortion-free (unbiased) guarantee, whose expected output distribution over the key space is identical to the unwatermarked distribution.

\noindent\textbf{Sentence Quality.}~Perplexity (PPL) is one of the most common metrics for evaluating language models.
It can also be utilized to measure the quality of the sentences~\cite{zhao2023provable, pmlr-v202-kirchenbauer23a} based on the oracle of high-quality language models.
Formally, PPL returns the following quality score for an input sentence $\textbf{x}$:
\begin{equation}
    \textsc{PPL}(\textbf{x}) = \exp \{ -\frac{1}{t} \sum_{i = 1}^t \log[\Pr (\textbf{x}_{i} | \textbf{x}_0, \cdots \textbf{x}_{i-1})]\}
\end{equation}
In our evaluation, we utilize the GPT3~\cite{ouyang2022training} as the oracle model to evaluate sentence quality.

\noindent\textbf{Setups and Hyper-Parameters.}~For KGW~\cite{pmlr-v202-kirchenbauer23a} and Unigram~\cite{zhao2023provable} watermarks, we utilize the default parameters in~\cite{zhao2023provable}, where the watermark strength is $\delta = 2$, and the green list portion is $\gamma = 0.5$.
We employ a threshold of $T=4$ for these two watermarks with a single watermark key.
For the scenarios where multiple keys are used, we calculate the thresholds to guarantee that the false positive rates (FPRs) are below 1e-3.
For the Exp watermark (refered to as Exp-edit in~\cite{kuditipudi2023robust}), we use the default parameters, where the watermark key length is $n=256$ and the block size $k$ is default to be identical to the token length.
We set the p-value threshold for Exp to $0.05$ in our experiments.

For the spoofing attack exploiting detection APIs, we obtain the first three tokens with the highest probabilities from the unwatermarked LLM, and for the removal attack exploiting detection APIs, we obtain the first five tokens with the highest probabilities from the watermarked LLM.
For watermark removal attacks exploiting detection APIs on KGW and Unigram, we increase the probabilities of the tokens that have the smallest detection confidence, and then sample from the modified probability distribution.
For watermark removal attacks exploiting detection APIs on Exp, we simply sample the token that has the maximum p-value, but we will skip the tokens that have low probabilities (lower than $0.15$) when the detection p-value is high (higher than $0.1$).
The different setup for the Exp watermark is required to ensure that we can produce high-quality sentences.
For watermark spoofing attacks that exploit detection APIs, we sample the token that has the highest detection confidence for KGW, Unigram, and Exp watermarks.

We conduct the experiments on a cluster with 8 NVIDIA A100 GPUs, AMD EPYC 7763 64-Core CPU, and 1TB memory.
\section{Attack Feasibility Analysis of Piggyback Spoofing Exploiting Robustness}
\label{app:proof-robust}

We study the bound on the maximum number of tokens that are allowed to be inserted or edited in a watermarked sentence, and we present the following theorem on Unigram watermark~\cite{zhao2023provable} due to its clean robustness guarantee:
\begin{theorem}[Maximum insertion portion]
\label{thm:robust}
    Consider a watermarked token sequence $\textbf{x}$ of length $l$. The Unigram watermark z-score threshold is $T$, the portion of the tokens in the green list is $\gamma$, the detection z-score of $\textbf{x}$ is $z$, and the number of inserted tokens is $s$. Then, to guarantee the expected z-score of the edited text is greater than $T$, it suffices to guarantee $\frac{s}{l} \leq \frac{z^2 - T^2}{T^2}$.
\end{theorem}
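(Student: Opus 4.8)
The plan is to track how the Unigram z-score changes when $s$ tokens are inserted into the watermarked sequence $\textbf{x}$, and then impose the condition that the resulting z-score stays above the threshold $T$. Recall that for Unigram the detection statistic is $z = \frac{g - \gamma l}{\sqrt{\gamma(1-\gamma)l}}$, where $g$ is the green-list count and $l$ the length. A crucial feature of Unigram (as noted in the excerpt) is that the green/red partition is \emph{global}: it depends only on $sk$, not on context. So inserting a token is a clean event — each inserted token lands in the green list with probability $\gamma$ and the red list with probability $1-\gamma$, independently of where it is placed. This is the reason Unigram is singled out for the theorem rather than KGW.

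First I would set up the post-insertion quantities. After inserting $s$ tokens, the new length is $l' = l + s$, and the new green count is $g' = g + G$, where $G$ is the number of inserted tokens that happen to be green. Since the attacker is inserting arbitrary (e.g.\ toxic) tokens without regard to the secret partition, $\mathbb{E}[G] = \gamma s$. The new expected z-score is therefore
\begin{equation*}
\mathbb{E}[z'] = \frac{(g + \gamma s) - \gamma(l+s)}{\sqrt{\gamma(1-\gamma)(l+s)}} = \frac{g - \gamma l}{\sqrt{\gamma(1-\gamma)(l+s)}} = z \cdot \sqrt{\frac{l}{l+s}}.
\end{equation*}
The $\gamma s$ terms in the numerator cancel exactly, which is the heart of the argument: the inserted tokens dilute the numerator's "signal" $g - \gamma l$ by zero in expectation, but they do inflate the denominator by the factor $\sqrt{(l+s)/l}$. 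So the whole effect of insertion is the shrinkage $z \mapsto z\sqrt{l/(l+s)}$.

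Next I would enforce $\mathbb{E}[z'] \ge T$, i.e.\ $z\sqrt{l/(l+s)} \ge T$. Squaring (both sides positive, assuming $z \ge T > 0$, which holds since $\textbf{x}$ is detected as watermarked) gives $z^2 \frac{l}{l+s} \ge T^2$, hence $\frac{l+s}{l} \le \frac{z^2}{T^2}$, and finally $\frac{s}{l} \le \frac{z^2}{T^2} - 1 = \frac{z^2 - T^2}{T^2}$, which is exactly the claimed bound. The takeaway the paper wants — a more robust watermark (lower effective threshold $T$, or equivalently larger detection margin $z$) admits a larger insertion fraction — falls out immediately from the $z^2 - T^2$ numerator.

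The main thing to be careful about is the word "expected" in the statement: the computation above is about $\mathbb{E}[z']$, using $\mathbb{E}[G] = \gamma s$. A fully rigorous version that bounds $\Pr[z' \ge T]$ rather than $\mathbb{E}[z']$ would need a concentration argument on $G$ (it is a sum of $s$ independent Bernoulli$(\gamma)$ indicators, so Hoeffding applies cleanly), and the statement's phrasing suggests the authors deliberately restrict to the expectation to keep things clean — so I would do the same, and perhaps remark that concentration makes the high-probability version routine. A secondary subtlety is the independence claim: this genuinely relies on the global (context-independent) nature of the Unigram partition, and on the attacker choosing insertion tokens without knowledge of $sk$; I would state this assumption explicitly. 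The edit-distance/"replacement" variant of the bound (hinted at elsewhere in the paper with the $\frac{h+1}{\sqrt{\gamma(1-\gamma)l}}$ sensitivity) is a separate, slightly messier computation that I would not pursue here since the theorem as stated is only about insertion.
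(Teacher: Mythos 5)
Your proposal is correct and matches the paper's own proof essentially line for line: both compute $\mathbb{E}[z'] = z\sqrt{l/(l+s)}$ by noting that the $\gamma s$ contribution of the inserted tokens cancels in the numerator while the denominator inflates with the new length, and then rearrange $\mathbb{E}[z'] \ge T$ into the stated bound. Your added remarks on why the Unigram global partition makes the independence clean, and on what a concentration-based (rather than expectation-based) version would require, are sensible but do not change the argument.
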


\begin{proof}
Recall that the watermarking schemes' detections usually involve computing the statistical testing. 
Unigram splits the vocabulary into two lists---the green list and the red list.
It prioritizes the tokens in the green list during watermark embedding, and the detection computes the z-score:
\begin{equation*}
    z = \frac{g - \gamma l}{\sqrt{\gamma (1-\gamma) l}},
\end{equation*}
where $g$ is the number of tokens in the green list, $l$ is the total number of tokens in the input token sequence, and $\gamma$ is the portion of the vocabulary tokens in the green list.
Let the number of the inserted toxic tokens be $s$. Since toxic tokens are independent of the secret key $sk$,
the expected new z-score $z'$ is:
\begin{equation*}
    \mathbb{E}(z') = \frac{g + \gamma s - \gamma (l + s)}{\sqrt{\gamma (1-\gamma) (l + s)}} = z \sqrt{\frac{l}{l+s}},
\end{equation*}
To guarantee that $\mathbb{E}(z') \geq T$, it suffices to guarantee
\begin{equation*}
    \frac{s}{l} \leq \frac{z^2 - T^2}{T^2}
\end{equation*}
\end{proof}

Different from the analysis in the Unigram paper on how the z-score changes given a specific number of edits, we have a tight bound on the maximum possible number of edits, which is also more straightforward for the attack feasibility analysis.
According to Theorem~\ref{thm:robust}, as long as the number of toxic tokens inserted is bounded by $l \frac{z^2 - T^2}{T^2}$, the attacker can execute a piggyback attack to generate toxic content with the target watermark embedded.
The editing distance bound (Def.~\ref{def:robustness}) for a sentence is $\epsilon = l \frac{z^2 - T^2}{T^2}$.
A stronger watermark makes piggyback spoofing attacks easier by allowing more toxic tokens to be inserted. 
This conclusion applies universally to all robust watermarking schemes. 
This is a fundamental design trade-off: if a watermark is robust, such spoofing attacks are inevitable and may be extremely difficult to detect, as even one toxic token can render the entire content harmful or inaccurate.
\section{Validation of Theorem~\ref{thm:robust}}
\label{app:val-thm-robust}
In this section, we validate Theorem~\ref{thm:robust} by using watermarked texts of varying lengths $l$ and z-scores $z$ to study the relationship between $\frac{s}{l}$ and $\frac{z^t - T^2}{T^2}$ of Unigram watermark.
The results are shown in \F~\ref{fig:spoofing-curve}.
As anticipated, 85.78\% of the maximum allowable tokens to be inserted into the watermarked content satisfy Theorem~\ref{thm:robust}. 
Given that this equation analyzes expected $s/l$, a small portion of outliers is reasonable.
We primarily visualize this result for Unigram due to its clean robustness guarantee.
Other watermarks can also reach similar conclusions, but their bounds on $s$ are either complex~\cite{pmlr-v202-kirchenbauer23a} or lack a closed form~\cite{kuditipudi2023robust}, making them difficult to visualize.
Our empirical findings in \F~\ref{fig:spoofing-robustness} sufficiently prove an attacker can insert nontrivial portions of toxic or incorrect tokens into the watermarked text to launch the spoofing attack,
which can be generalized across all robust watermarking schemes.

\begin{figure}[!h]
  \centering
  \begin{subfigure}[b]{0.33\textwidth}
  \centering
    \includegraphics[width=\textwidth]{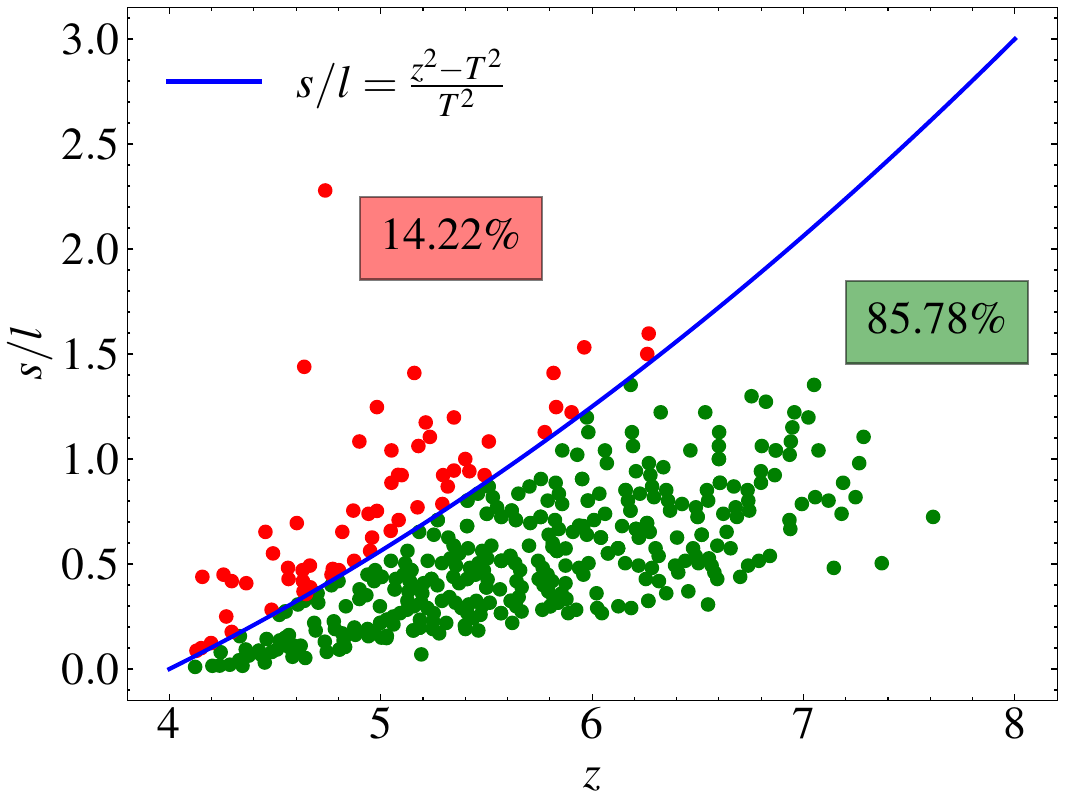}
  \end{subfigure}
  \caption{The relationship between $s/l$ and $z$. The data points are evaluated on Unigram using LLAMA-2-7B and $500$ samples from OpenGen dataset.}
  \label{fig:spoofing-curve}
\end{figure}
\section{Piggyback Attack Examples}
\label{app:robust-example}

Here we present more piggyback attack results using the edition strategy by querying GPT4 using the prompt \textit{``Modify less than 3 words in the following sentence and make it inaccurate or have opposite meanings.''}
The attack is launched on KGW watermark and LLAMA-2-7B model.

\begin{tcolorbox}[size=small, colback=green!10, title={Watermarked content, z-score: $4.93$, PPL: $4.61$}, fonttitle=\small]
\small
\textit{Earth has a history of} 4.5 billion years and humans have been around for 200,000 years. Yet humans have been using computers for just over \textbf{70} years and even then the term was first used in \textbf{1945}. In the age of technology, we are still just getting started. The first computer, ENIAC (Electronic Numerical Integrator And Calculator), was built at the University of Pennsylvania between 1943 and 1946. The ENIAC took up 1800 sq ft and had 18,000 vacuum tube and mechanical parts. The ENIAC was used for mathematical calculations, ballistics, and code breaking. The ENIAC was 1000 times \textbf{faster} than any other calculator of the time. The first computer to run a program was the Z3, built by Konrad Zuse at his house.
\end{tcolorbox}

\begin{tcolorbox}[size=small, colback=red!10, title={Piggyback attack, z-score: $4.36$, PPL: $5.68$}, fonttitle=\small]
\small
\textit{Earth has a history of} 4.5 billion years and humans have been around for 200,000 years. Yet humans have been using computers for just over \textcolor{red}{\textbf{700}} years and even then the term was first used in \textcolor{red}{\textbf{1445}}. In the age of technology, we are still just getting started. The first computer, ENIAC (Electronic Numerical Integrator And Calculator), was built at the University of Pennsylvania between 1943 and 1946. The ENIAC took up 1800 sq ft and had 18,000 vacuum tube and mechanical parts. The ENIAC was used for mathematical calculations, ballistics, and code breaking. The ENIAC was 1000 times \textcolor{red}{\textbf{slower}} than any other calculator of the time. The first computer to run a program was the Z3, built by Konrad Zuse at his house.
\end{tcolorbox}

\begin{tcolorbox}[size=small, colback=green!10, title={Watermarked content, z-score: $4.57$, PPL: $3.51$}, fonttitle=\small]
\small
\textit{Harry Potter is} a 2001 film adaptation of J.K. Rowling’s best-selling novels. In the year \textbf{1991}, Harry Potter is rescued from his aunt and uncle’s home and taken to Hogwarts, a school for magic. During the course of his first year, Harry makes friends and learns about the evil Lord Voldemort and his attempts to kill him. The film was directed by Chris Columbus, and stars Daniel Radcliffe as Harry Potter, with Richard Harris, Maggie Smith, Robbie Coltrane, Alan Rickman, Emma Watson, Rupert Grint, Gary Oldman, and Julie Walters in supporting roles. In the first film, Harry Potter is taken to Hogwarts School of Witchcraft and Wizardry, where he makes friends and discovers that Lord Voldemort wants to kill him.
\end{tcolorbox}

\begin{tcolorbox}[size=small, colback=red!10, title={Piggyback attack, z-score: $4.47$, PPL: $3.78$}, fonttitle=\small]
\small
\textit{Harry Potter is} a 2001 film adaptation of J.K. Rowling’s best-selling novels. In the year \textcolor{red}{\textbf{2991}}, Harry Potter is rescued from his aunt and uncle’s home and taken to Hogwarts, a school for magic. During the course of his first year, Harry makes friends and learns about the evil Lord Voldemort and his attempts to kill him. The film was directed by Chris Columbus, and stars Daniel Radcliffe as Harry Potter, with Richard Harris, Maggie Smith, Robbie Coltrane, Alan Rickman, Emma Watson, Rupert Grint, Gary Oldman, and Julie Walters in supporting roles. In the first film, Harry Potter is taken to Hogwarts School of Witchcraft and Wizardry, where he makes friends and discovers that Lord Voldemort wants to kill him.
\end{tcolorbox}

\begin{tcolorbox}[size=small, colback=green!10, title={Watermarked content, z-score: $6.15$, PPL: $4.29$}, fonttitle=\small]
\small
\textit{Usain Bolt is} a 6’6”, 200-pound sprinter from the island of Jamaica. He’s also arguably the greatest athlete to ever walk the earth. In an era where many of us are glued to our phones, he has over 18 million followers on Instagram. The 31-year-old athlete is known for his incredible speed and endurance. He’s won \textbf{eight} Olympic gold medals, including three in 2012 alone. He’s also set world records in both the 100-meter dash and 200-meter dash. Bolt is the only person to ever hold the world record in both events at the same time, which he did twice. He’s also the only person to ever hold the world record in both events at the same Olympics. Bolt is known for his incredible speed and endurance.
\end{tcolorbox}

\begin{tcolorbox}[size=small, colback=red!10, title={Piggyback attack, z-score: $5.58$, PPL: $4.46$}, fonttitle=\small]
\small
\textit{Usain Bolt is} a 6’6”, 200-pound sprinter from the island of Jamaica. He’s also arguably the greatest athlete to ever walk the earth. In an era where many of us are glued to our phones, he has over 18 million followers on Instagram. The 31-year-old athlete is known for his incredible speed and endurance. He’s won \textcolor{red}{\textbf{three}} Olympic gold medals, including three in 2012 alone. He’s also set world records in both the 100-meter dash and 200-meter dash. Bolt is the only person to ever hold the world record in both events at the same time, which he did twice. He’s also the only person to ever hold the world record in both events at the same Olympics. Bolt is known for his incredible speed and endurance.
\end{tcolorbox}

\begin{tcolorbox}[size=small, colback=green!10, title={Watermarked content, z-score: $6.01$, PPL: $6.68$}, fonttitle=\small]
\small
\textit{The history of the modern airplane is} 100 years old this month. And yet it’s not been 100 years since the Wright Brothers’ first flight. The first airplane flight took place on Dec. 17, 1903. After three years of development, Orville and Wilbur Wright’s first flight lasted only 12 seconds. But within a decade, the first airliner flew. In 1924, the Ford Motor Co. \textbf{flew} the first commercial plane on the U.S. East Coast. In the next year, the company built a 10-passenger airliner with passenger windows and seats and an aisle. The 10-seat plane was called the Model T, and Ford executives said it would have been better if the company made a 10-passenger car instead of a plane.
\end{tcolorbox}

\begin{tcolorbox}[size=small, colback=red!10, title={Piggyback attack, z-score: $5.03$, PPL: $7.19$}, fonttitle=\small]
\small
\textit{The history of the modern airplane is} 100 years old this month. And yet it’s not been 100 years since the Wright Brothers’ first flight. The first airplane flight took place on Dec. 17, 1903. After three years of development, Orville and Wilbur Wright’s first flight lasted only 12 seconds. But within a decade, the first airliner flew. In 1924, the Ford Motor Co. \textcolor{red}{\textbf{never flew}} the first commercial plane on the U.S. East Coast. In the next year, the company built a 10-passenger airliner with passenger windows and seats and an aisle. The 10-seat plane was called the Model T, and Ford executives said it would have been better if the company made a 10-passenger car instead of a plane.
\end{tcolorbox}
\section{Additional Results of Piggyback Spoofing Attack}
\label{app:eval-fluent-edit}

In \S~\ref{sec:robustness}, we present the piggyback spoofing attack using toxic token insertion strategy on LLAMA-2-7B model.
Here, we present the results on OPT-1.3B model, which are consistent with LLAMA-2-7B model's results.

\begin{figure}[!h]
    \centering
    \includegraphics[width=0.4\textwidth]{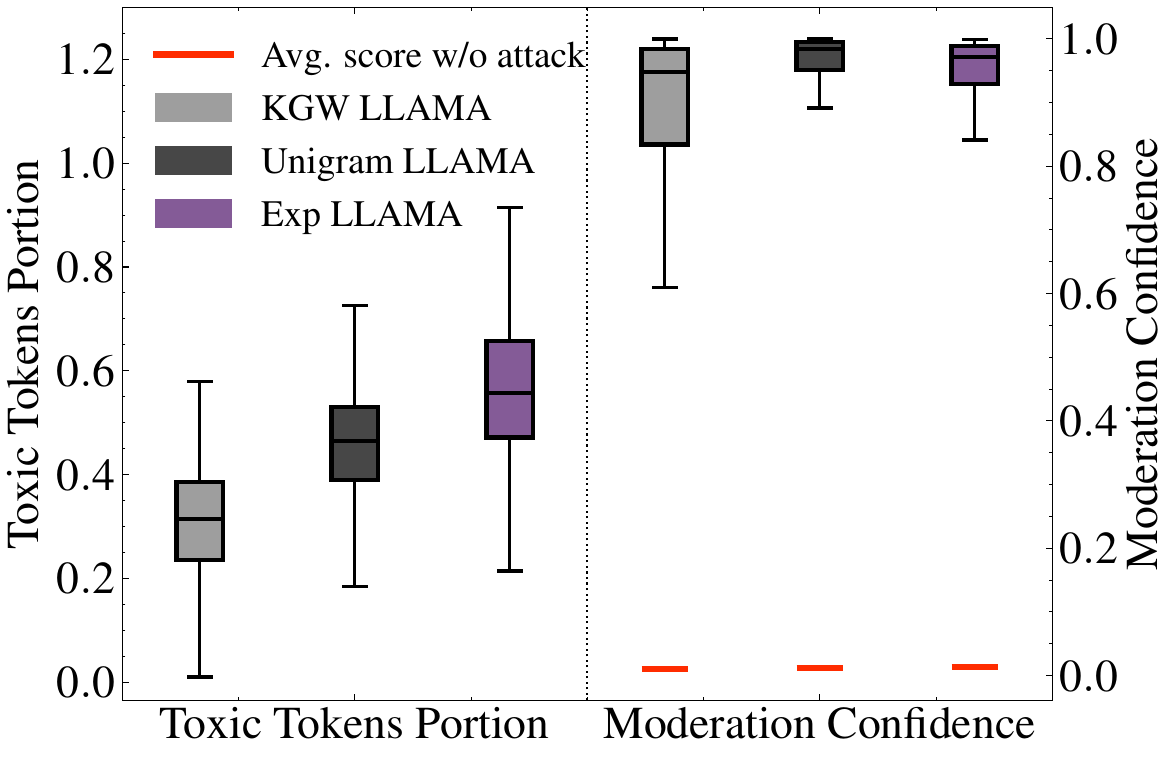}
    \caption{Piggyback spoofing of robust watermarks with toxic token insertion strategy on OPT-1.3B.}
    \label{fig:robustness-opt}
\end{figure}

In \S~\ref{sec:robustness}, we present the fluent inaccurate editing strategy by querying the GPT4 on KGW watermark and LLAMA-2-7B model.
Here we present more results of this strategy on all the three watermarks (KGW, Unigram, and Exp) and two models (LLAMA-2-7B and OPT-1.3B).
The results are shown in \F~\ref{fig:fluent-edit-kgw}, \F~\ref{fig:fluent-edit-unigram}, and \F~\ref{fig:fluent-edit-exp}, which are consistent with our findings in \F~\ref{fig:spoofing-robustness}, indicating that our piggyback spoofing attack can be generalized across various robust watermarks and models.

\begin{figure}[!h]
  \centering
  \begin{subfigure}[b]{0.30\textwidth}
  \centering
  \includegraphics[width=\textwidth]{figs/robustness-modify.pdf}
    \caption{LLAMA-2-7B model.}
  \end{subfigure}\hspace{5pt}
  \begin{subfigure}[b]{0.30\textwidth}
  \centering
    \includegraphics[width=\textwidth]{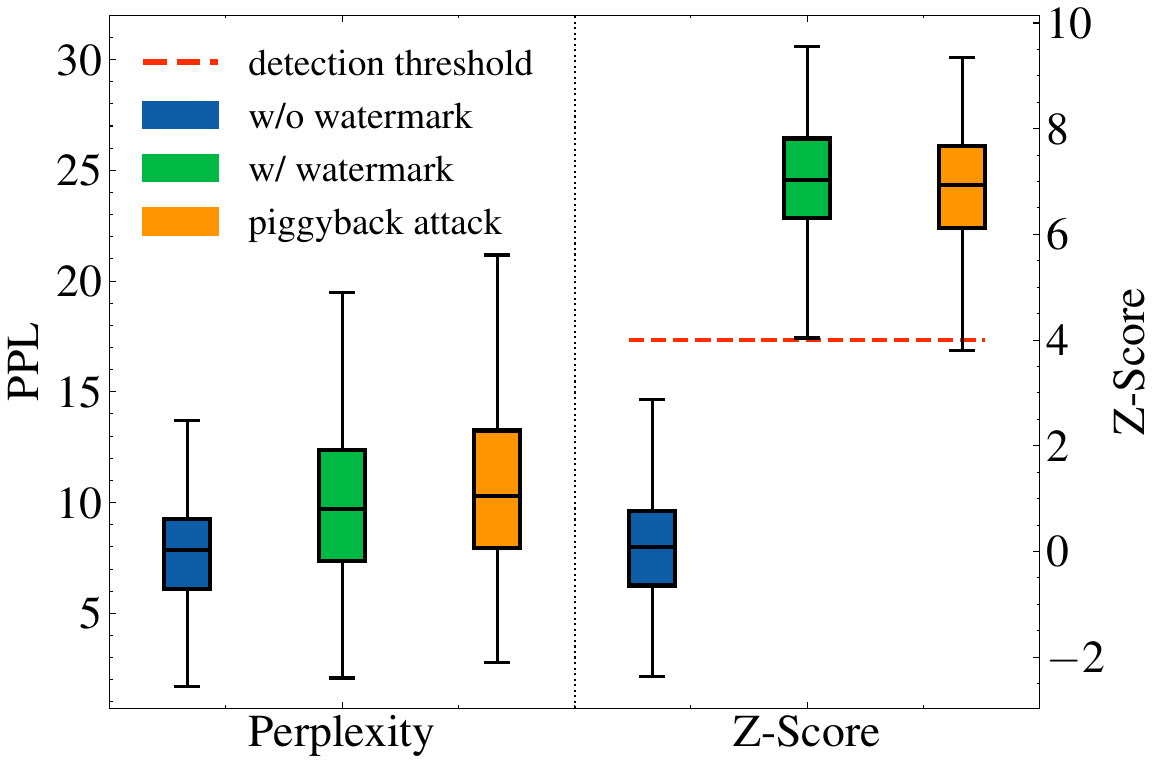}
    \caption{OPT-1.3B model.}
  \end{subfigure}
  \caption{Fluent inaccurate editing strategy on KGW watermark and LLAMA-2-7B and OPT-1.3B models.}
  \label{fig:fluent-edit-kgw}
\end{figure}

\begin{figure}[!ht]
  \centering
  \begin{subfigure}[b]{0.30\textwidth}
  \centering
  \includegraphics[width=\textwidth]{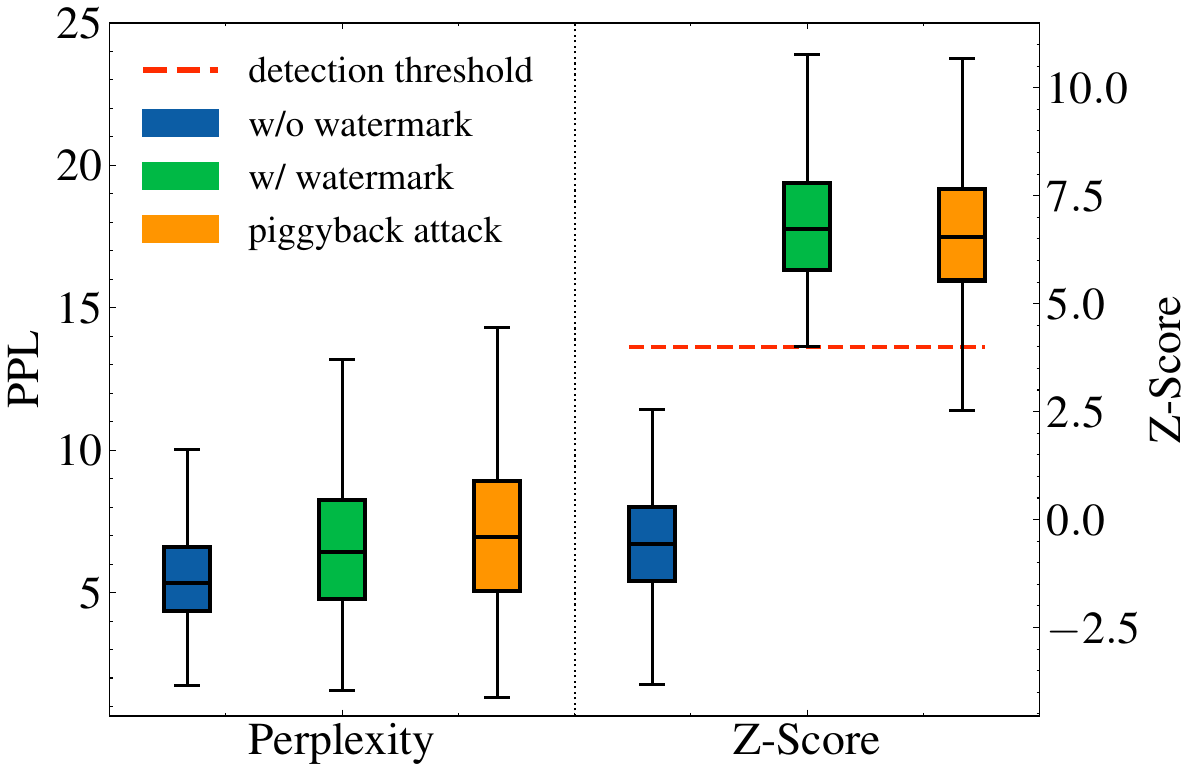}
    \caption{LLAMA-2-7B model.}
  \end{subfigure}\hspace{5pt}
  \begin{subfigure}[b]{0.30\textwidth}
  \centering
    \includegraphics[width=\textwidth]{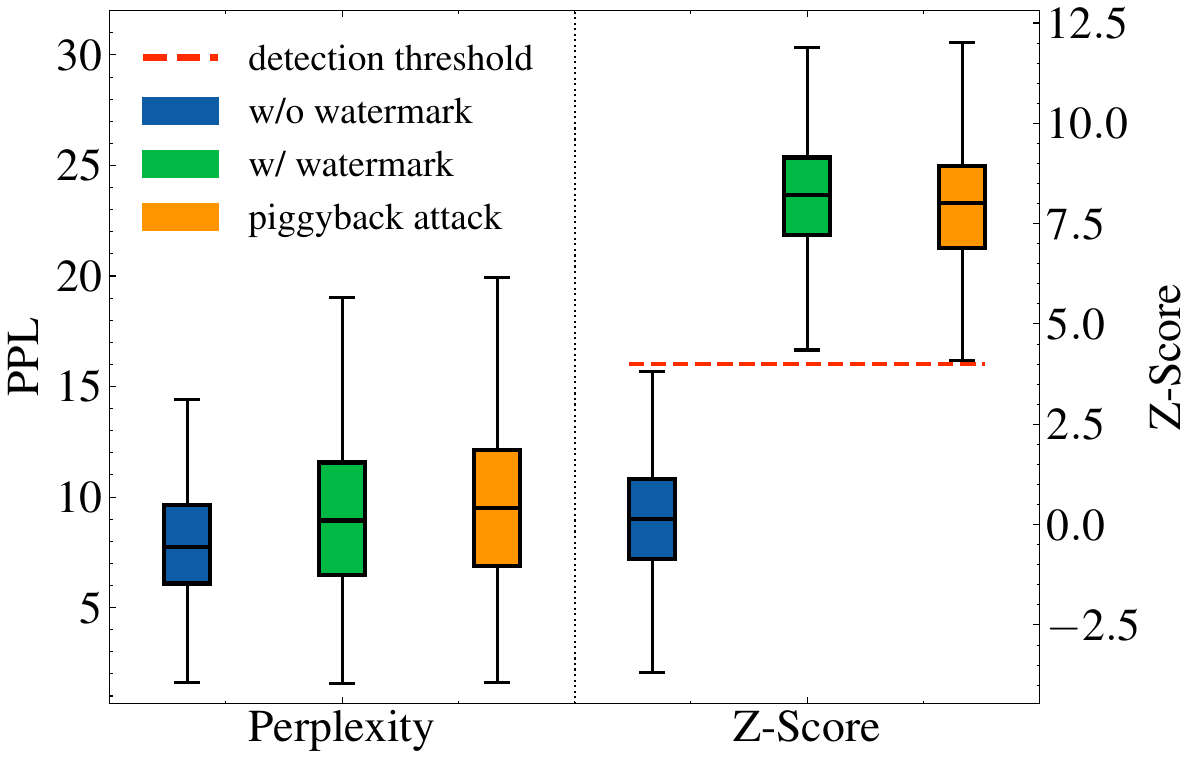}
    \caption{OPT-1.3B model.}
  \end{subfigure}
  \caption{Fluent inaccurate editing strategy on Unigram watermark and LLAMA-2-7B and OPT-1.3B models.}
  \label{fig:fluent-edit-unigram}
\end{figure}

\begin{figure}[!h]
  \centering
  \begin{subfigure}[b]{0.30\textwidth}
  \centering
  \includegraphics[width=\textwidth]{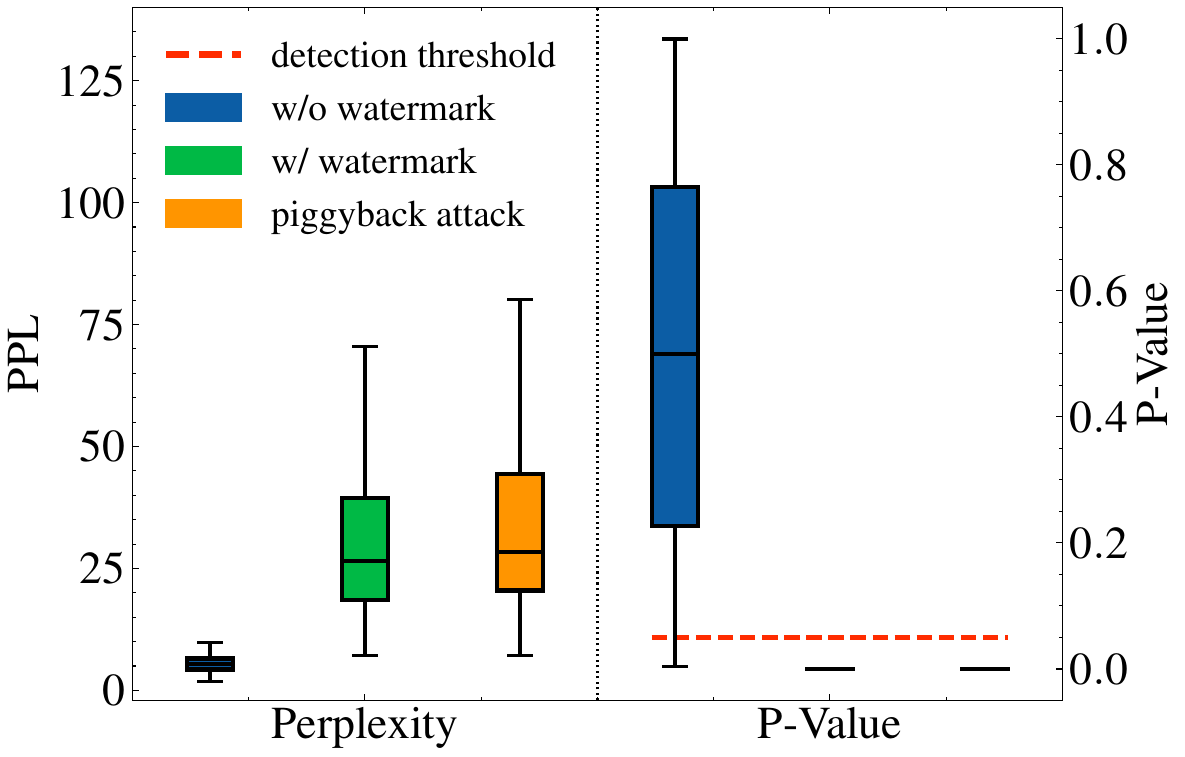}
    \caption{LLAMA-2-7B model.}
  \end{subfigure}\hspace{5pt}
  \begin{subfigure}[b]{0.30\textwidth}
  \centering
    \includegraphics[width=\textwidth]{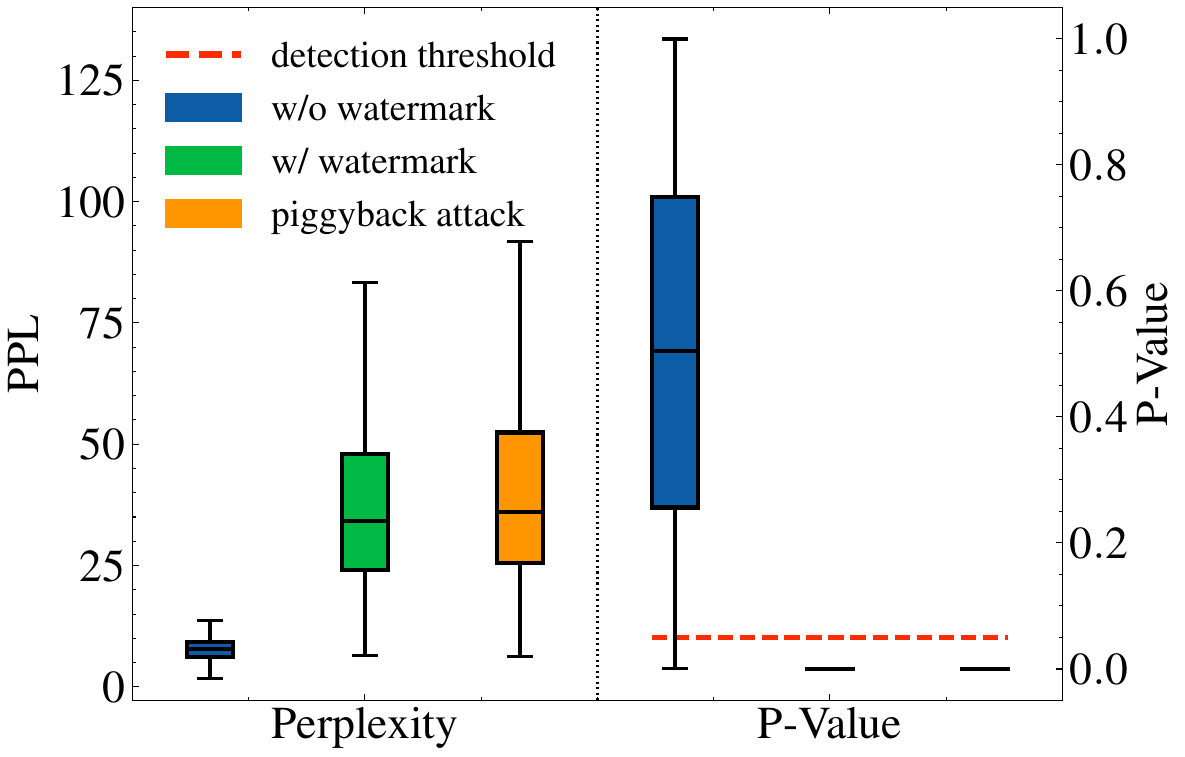}
    \caption{OPT-1.3B model.}
  \end{subfigure}
  \caption{Fluent inaccurate editing strategy on Exp watermark and LLAMA-2-7B and OPT-1.3B models.}
  \label{fig:fluent-edit-exp}
\end{figure}

\section{Watermark Key Number Analysis for Watermark-Removal Attacks Exploiting the Use of Multiple Watermark Keys}
\label{app:proof-quality}

Now we analyze the number of required queries under different keys to estimate the token with the highest probability without a watermark.
We have the following probability bound for KGW and Unigram with the corresponding proof, and present the bound for Exp in Appendix~\ref{app:exp-quality-proof}.
\begin{theorem}[Probability bound of unwatermarked token estimation]
\label{thm:num-keys}
Suppose there are $n$ observations under different keys, the portion of the green list in KGW or Unigram is $\gamma$. 
Then the probability that the most frequent token is the same as the original unwatermarked token is
\begin{equation}
\footnotesize
\label{eq:unbiasedness-grlist}
    1 - \sum_{k=0}^{\lfloor n / 2 \rfloor} \binom{n}{k} \gamma^k (1 - \gamma)^{n-k} \times p(k),
\end{equation}
where $p(k)=1 - \Bigl(\sum_{m=0}^{k-1} \binom{n-k}{m} \gamma^m (1-\gamma)^{n-k-m} \Bigr)^c$, $c$ is the number of other tokens whose watermarked probability can exceed that of the highest unwatermarked token.
\end{theorem}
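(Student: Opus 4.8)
The plan is to follow a single distinguished token through the $n$ independent key-queries and count votes. Let $x^\star$ be the token of largest probability under $M_{\text{orig}}(\textbf{x})$, i.e.\ the token the attacker hopes to recover as the mode of the frequency histogram. First I would record the effect of one key: for KGW or Unigram the key induces a pseudorandom green/red partition of $\mathcal{V}$ with green fraction $\gamma$, and every green token has its logit raised by the same constant $\delta$. Hence if $x^\star$ falls in the green list it remains the top-scoring token overall (it was already top, it keeps the uniform boost, and red tokens are untouched), so the query returns $x^\star$; if $x^\star$ falls in the red list the query returns one of the finitely many competitor tokens that can overtake $x^\star$ once boosted, and by definition there are $c$ of these. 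Modeling a query's output as $x^\star$ exactly when $x^\star$ is green is the idealization under which the stated formula holds.

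Next I would exploit the independence of the pseudorandom partitions across the $n$ keys. The indicator that $x^\star$ is green for key $i$ is Bernoulli$(\gamma)$, independent over $i$, so the number $K$ of keys for which $x^\star$ is green is $\mathrm{Binom}(n,\gamma)$, giving the factor $\binom{n}{k}\gamma^k(1-\gamma)^{n-k}$. Conditioning on $K=k$, the token $x^\star$ appears exactly $k$ times among the $n$ observations; if $k>n/2$ this is already a strict majority, so $x^\star$ is the mode with certainty and these $k$ contribute nothing to the failure probability — this is precisely why the outer sum terminates at $\lfloor n/2\rfloor$. For $k\le\lfloor n/2\rfloor$ I would bound the chance that some competitor ties or beats $x^\star$: among the $n-k$ ``red'' keys, model the event that a fixed competitor $y$ is returned as Bernoulli$(\gamma)$ (the chance $y$ lies in that key's green list), independent across these keys, so the number of returns of $y$ is $\mathrm{Binom}(n-k,\gamma)$ and $\Pr[\text{count}\le k-1]=\sum_{m=0}^{k-1}\binom{n-k}{m}\gamma^m(1-\gamma)^{n-k-m}$; treating the $c$ competitors as independent, $\Pr[\text{no competitor reaches }k]$ is the $c$-th power of this, so the conditional failure probability is $p(k)=1-(\cdots)^c$. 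The law of total probability over $k$ and passing to the complement then gives exactly \eqref{eq:unbiasedness-grlist}.

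The step I expect to be the main obstacle is the independence modeling in the second paragraph. Strictly, green lists are fixed-size random subsets of $\mathcal{V}$, so within one key the events ``$y_1$ green'', ``$y_2$ green'' and ``$x^\star$ green'' are weakly negatively correlated; ``the query returns $y$'' really needs $y$ green \emph{and} every higher-logit competitor red, not merely $y$ green; and a red key in which no competitor is green in fact still returns $x^\star$, so $x^\star$ can appear more than $k$ times. I would therefore state the result under the idealization that each query is a deterministic function of the key and that competitor green-list memberships are independent with marginal $\gamma$, conservatively charging $x^\star$ only its $k$ guaranteed appearances, and note that the neglected correlations are $O(c/|\mathcal{V}|)$ — lower order for realistic vocabularies — consistent with the approximate ($\approx_\epsilon$) spirit of the unbiasedness property invoked earlier. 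The remaining pieces, the binomial bookkeeping and the complement, are routine.
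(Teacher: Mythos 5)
Your argument is essentially identical to the paper's: condition on the binomial count $k$ of keys for which the top unwatermarked token is green (under which greedy sampling returns it), note that $k>\lfloor n/2\rfloor$ guarantees it is the mode, and for smaller $k$ bound the failure probability by the chance that at least one of the $c$ competitors is green for at least $k$ of the remaining $n-k$ keys, treating competitors as independent to get the $c$-th power in $p(k)$. Your explicit flagging of the idealizations (weak negative correlation of green-list memberships, ties, and the fact that a key with the top token red but no competitor green still returns the top token) is more careful than the paper's proof, but the decomposition and the resulting formula are the same.
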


In a practical scenario where $n = 13, \gamma = 0.5$, and $c = 3$, Theorem~\ref{thm:num-keys} suggests that the attacker has a  probability of $0.71$ in finding the token with the highest unwatermarked probability. 
This implies that we can successfully remove watermarks from over $71\%$ of tokens using a small number of observations under different keys ($n=13$), yielding high-quality unwatermarked content. 

\begin{proof}
Recall that KGW and Unigram randomly split the tokens in the vocabulary into the green list and the red list.
We consider the greedy sampling, where the token with the highest (watermarked) probability is sampled.
We have $n$ independent observations under different watermark keys.
For each key, the token $\textbf{x}_i$ with the highest unwatermarked probability is in the green list is $\gamma$.
As long as $\textbf{x}_i$ is the green list, the greedy sampling will always yield $\textbf{x}_i$ since the watermarks add the same constant to all the tokens' loogits in the green list.

Thus, the probability that the most frequent token among these $n$ observations is $\textbf{x}_i$ is at least:
\begin{equation*}
    1 - \sum_{k=0}^{\lfloor n / 2 \rfloor} \binom{n}{k} \gamma^k (1 - \gamma)^{n-k},
\end{equation*}
which is the probability that $\textbf{x}_i$ is in the green list for at least half of the $n$ keys.

For another token $\textbf{x}_j$ whose probability can exceed $\textbf{x}_i$, if $\textbf{x}_j$ is in the green list and $\textbf{x}_i$ is in the red list.
Then if $\textbf{x}_i$ is in the green list for $k$ keys, the probability that $\textbf{x}_j$ is in the green list for at least $k$ keys among the other $n-k$ keys is:
\begin{equation*}
    1 - \sum_{m=0}^{k-1} \binom{n-k}{m} \gamma^m (1-\gamma)^{n-k-m}
\end{equation*}

Consider we have $c$ such tokens having potential to exceed $\textbf{x}_i$.
Then at least one of the $c$ tokens is in the green list for at least $k$ keys among the other $n-k$ keys is:
\begin{equation*}
    1 - \Bigl(\sum_{m=0}^{k-1} \binom{n-k}{m} \gamma^m (1-\gamma)^{n-k-m} \Bigr)^c
\end{equation*}
Thus, with all the above analysis, we have that if there are $c$ tokens that have the potential to exceed the probability of the token with highest unwatermarked probability (i.e., $\textbf{x}_i$), the probability that the most frequent token among the $n$ observations is the same as $\textbf{x}_i$ is:
\begin{equation*}
    1 - \sum_{k=0}^{\lfloor n / 2 \rfloor} \binom{n}{k} \gamma^k (1 - \gamma)^{n-k} \times \Biggl(1 - \Bigl(\sum_{m=0}^{k-1} \binom{n-k}{m} \gamma^m (1-\gamma)^{n-k-m} \Bigr)^c \Biggr),
\end{equation*}
which concludes the proof.
\end{proof}

Here we consider that the watermarked LLM is utilizing greedy sampling.
In practice, the greedy sampling might not be an optimal sampling strategy, but we note that it is extremely challenging to incorporate the multinomial sampling when analyzing the KGW and Unigram watermarks.
Because KGW and Unigram add bias to the output logits, which will go through the softmax function to calculate the probabilities for the tokens.
Given the softmax function is not unbiased, we cannot get a tight bound on its variance.
Thus, we leave this part as a future direction to further incorporate multinomial sampling in the analysis.
Nevertheless, our empirical results still show that the attackers can generate high-quality unwatermarked content when multinomial sampling is used.
Also, our analysis on Exp watermark in Appendix~\ref{app:exp-quality-proof} can naturally incorporate multinomial sampling.
\section{Probability Bound of Unwatermarked Token Estimation for Exp}
\label{app:exp-quality-proof}

In this section, we present and prove the probability bound of unwatermarked token estimation for the Exp watermark~\cite{kuditipudi2023robust}.

\begin{theorem}[Probability bound of unwatermarked token estimation for Exp]
\label{thm:exp-num-keys}
Suppose there are $n$ observations under different keys, the highest probability for the unwatermarked tokens is $p$.
Then the probability that the most frequently appeared token among the $n$ observations is the same as the original unwatermarked token with highest probability is:
\begin{equation}
\label{eq:unbiasedness-grlist}
    1 - \sum_{k=0}^{\lfloor n / 2 \rfloor} \binom{n}{k} p^k (1 - p)^{n-k}
\end{equation}
\end{theorem}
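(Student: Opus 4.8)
The plan is to reduce the statement to a binomial tail probability by exploiting the distortion‑free property of the Exp scheme. First I would recall the exact marginal fact underlying that property. For a fixed prompt and position, if the key column $\xi$ is drawn uniformly from $[0,1]^{|\mathcal{V}|}$, then the token $\arg\max_i \xi_i^{1/\mathbf{p}_i}$ returned by the Gumbel trick is distributed exactly as the unwatermarked next‑token distribution $\mathbf{p}$: writing $\xi_i = e^{-E_i}$ with $E_i$ i.i.d.\ $\mathrm{Exp}(1)$, the rule becomes $\arg\min_i E_i/\mathbf{p}_i$, and since $E_i/\mathbf{p}_i \sim \mathrm{Exp}(\mathbf{p}_i)$ independently, the minimiser is index $i$ with probability $\mathbf{p}_i/\sum_j \mathbf{p}_j = \mathbf{p}_i$. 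Under the attack model of this section the $n$ observations are produced with independent keys, so the $n$ observed tokens are i.i.d.\ draws from $\mathbf{p}$; let $x^\star$ denote the token of largest unwatermarked probability $p$.

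Next I would count occurrences of $x^\star$. Let $N$ be the number of the $n$ observations equal to $x^\star$; since each observation independently equals $x^\star$ with probability $p$, we have $N \sim \mathrm{Bin}(n,p)$. The elementary observation driving the bound is that whenever $N > \lfloor n/2 \rfloor$, i.e.\ $x^\star$ appears in strictly more than half the observations, every other token appears strictly fewer than $N$ times, so $x^\star$ is the (unique) most frequent token. Hence the event ``$x^\star$ is the most frequently appeared token'' contains $\{N > \lfloor n/2 \rfloor\}$, and, exactly as in the derivation of Theorem~\ref{thm:num-keys}, the probability it records is
\[
    \Pr[\,N > \lfloor n/2 \rfloor\,] \;=\; 1 - \Pr[\,N \le \lfloor n/2 \rfloor\,] \;=\; 1 - \sum_{k=0}^{\lfloor n/2 \rfloor} \binom{n}{k} p^k (1-p)^{n-k},
\]
which is the claimed expression.

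The reason the Exp bound comes out cleaner than the KGW/Unigram bound of Theorem~\ref{thm:num-keys} — no second‑token budget $c$, no nested binomial sum — is that the Gumbel‑trick sampling already realises the true distribution $\mathbf{p}$, so no auxiliary ``greedy sampling'' assumption is needed and multinomial sampling is handled automatically; there is no analogue of a biased logit letting some other token overtake $x^\star$ within a single key. The step I expect to require the most care is the reduction itself: one must state precisely that the attacker's $n$ queries use independent keys (so the observations are genuinely i.i.d.\ rather than repeats of one deterministic Gumbel draw), and verify the identity $\arg\max_i \xi_i^{1/\mathbf{p}_i}\sim\mathbf{p}$ rather than invoking only the weaker ``distortion‑free for sufficiently large $n$'' phrasing quoted in Appendix~\ref{app:wm-intro}. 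I would also flag, as in the KGW case, that a strict majority is sufficient but not necessary for $x^\star$ to be the most frequent token, so the binomial expression is the tractable quantity tracked here; a sharper count would need the full multinomial plurality probability, which has no comparably clean closed form.
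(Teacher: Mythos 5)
Your proposal is correct and follows essentially the same route as the paper's own proof: the Gumbel-trick sampling makes each of the $n$ independent-key observations an exact draw from the unwatermarked distribution, so the count of the top token is $\mathrm{Bin}(n,p)$ and the strict-majority event gives the stated binomial tail expression. Your added care — verifying the $\arg\max_i \xi_i^{1/\mathbf{p}_i}\sim\mathbf{p}$ identity explicitly and noting that a strict majority is sufficient but not necessary (so the expression is really a lower bound, matching the "at least" phrasing in the paper's own argument) — only makes the argument tighter than the paper's terse version.
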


\begin{proof}
The proof of Theorem~\ref{thm:exp-num-keys} is straightforward.
As we have introduced in Appendix~\ref{app:wm-intro}, the Exp watermark employs the Gumbel trick sampling~\cite{gumbel1948statistical} when embedding the watermark.
Thus, the probability that we observe the token whose original unwatermarked probability is $p$ is exactly $p$ for each of the independent keys.
Thus, if we make $n$ observations under different keys, then at least half of them yields the token with the highest original probability $p$ is:
\begin{equation*}
    1 - \sum_{k=0}^{\lfloor n / 2 \rfloor} \binom{n}{k} p^k (1 - p)^{n-k},
\end{equation*}
which concludes the proof.
\end{proof}

\section{Additional Results of Watermark-Removal Attacks Exploiting the use of Multiple Watermark Keys}
\label{app:eval-quality}

In this section, we provide more evaluation results of the
watermark stealing~\cite{jovanovic2024watermark} and our watermark-removal attacks exploiting the use of multiple watermark keys (see \S~\ref{sec:unbiasedness}) on all the three watermarks (KGW, Unigram, and Exp) and two models (LLAMA-2-7B and OPT-1.3B).
The results are shown in \F~\ref{fig:wm-removal-unbiasedness-ucsb-llama}, \F~\ref{fig:wm-removal-unbiasedness-stanford-llama}, \F~\ref{fig:wm-removal-unbiasedness-icml-opt}, \F~\ref{fig:wm-removal-unbiasedness-ucsb-opt}, \F~\ref{fig:wm-removal-unbiasedness-stanford-opt}.
For KGW watermark on OPT-1.3B model and Unigram watermark on LLAMA-2-7B and OPT-1.3B models, we have consistent observations with the KGW watermark on LLAMA-2-7B as we present in \S~\ref{sec:eval-unbiasedness}, demonstrating the effectiveness and generalizability of our attacks.
For the Exp watermark, our results in \F~\ref{fig:wm-removal-unbiasedness-stanford-llama} and \F~\ref{fig:wm-removal-unbiasedness-stanford-opt} also show that the watermark can be easily removed using multiple queries to estimate the distribution of the unwatermarked tokens.

The results of the watermark stealing~\cite{jovanovic2024watermark} on Unigram watermark and OPT-1.3B model are also consistent with our observations in \S~\ref{sec:unbiasedness}.
Using more keys can effectively mitigate the watermark stealing; however, it will make the system more vulnerable to our watermark removal attacks.
Throughout these experiments, we observe that using three keys is the optimal choice to defend against both attacks.
However, the attack success rates with three keys are not negligible.
Thus, consistent with our guidelines in \S~\ref{sec:unbiasedness}, we highly recommend that the LLM service provider to simultaneously limit the ability of the potentially malicious users.

To further verify that the LLM service provider can mitigate the watermark stealing attacks by limiting the attacker's query rates, we present the stealing attack results with various numbers of queries on the KGW watermark and LLAMA-2-7B model using three keys in \F~\ref{fig:wm-steal-various-query}.
The results show that by limiting the query rates of the attacker, the attack success rate of the watermark stealing attack can be significantly decreased.
Thus, we recommend that the LLM service provider follow a ``defense-in-depth'' approach and utilize complementary techniques such as anomaly detection, query rate limiting, and user identification verification to mitigate stealing and removal attacks.

\begin{figure}[!h]
    \centering
    \includegraphics[width=0.4\textwidth]{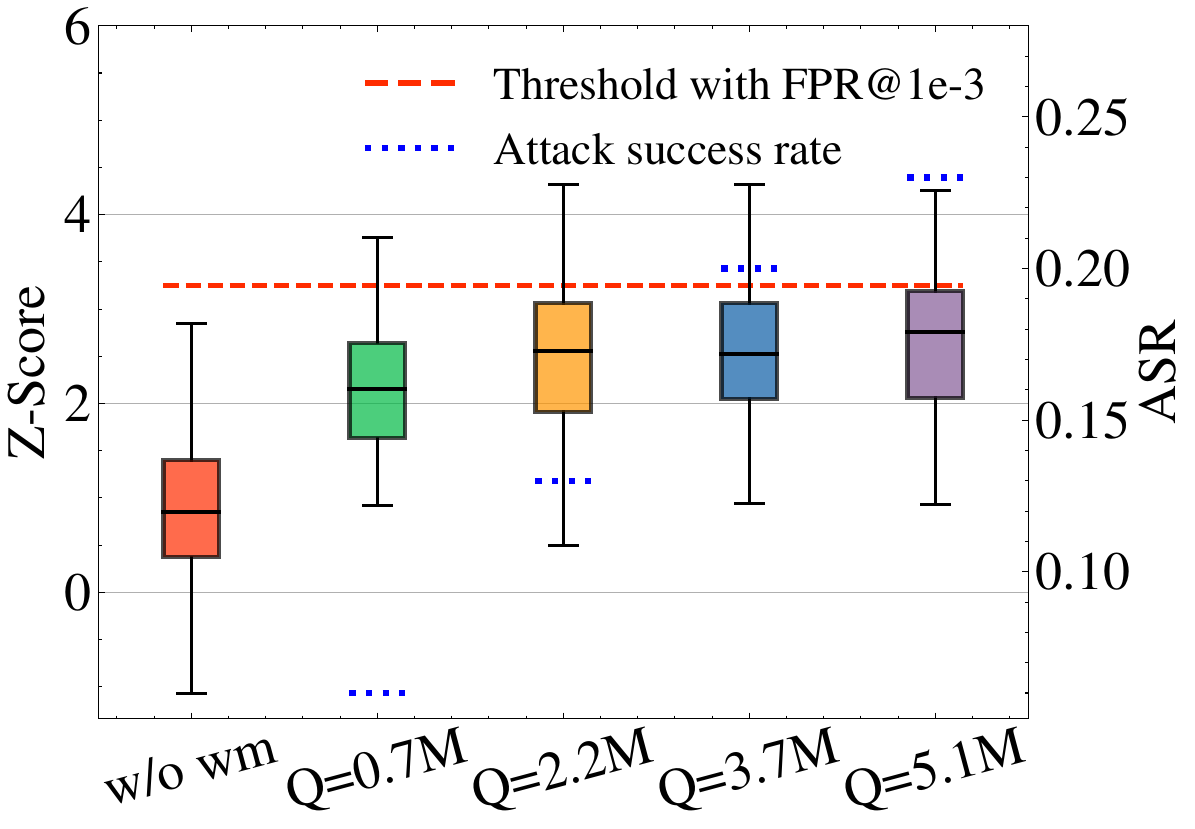}
    \caption{Watermark stealing attack~\cite{jovanovic2024watermark} on KGW watermark and LLAMA-2-7B model using three keys with different numbers of attacker obtained tokens Q (in million). The attack success rates are based on the threshold with FPR@1e-3.}
    \label{fig:wm-steal-various-query}
\end{figure}

We note that the watermark stealing attacks do not work on the Exp watermark~\cite{kuditipudi2023robust}, as the use of a large number of watermark keys is inherent in their design, which defaults to $256$.
Thus, we omit the watermark stealing results on Exp, but we show that this watermark is inherently vulnerable to our watermark removal attack.
From the results in \F~\ref{fig:wm-removal-unbiasedness-stanford-llama} and \F~\ref{fig:wm-removal-unbiasedness-stanford-opt}, we conclude that using $n=13$ queries, the resulting p-value is very close to that of the content without a watermark and is significantly different from the watermarked p-value, which shows that we can effectively remove the watermark using $13$ queries for each token.
We note that for Exp, the perplexity of the watermarked content is significantly higher than that of the unwatermarked content. 
This is mainly because Exp does not allow sampling in watermark embedding, which becomes a deterministic algorithm when the key is fixed.
In contrast, our watermark removal attack generates content with much lower perplexity, making it comparable to unwatermarked content when the query number under different keys exceeds $13$.
This can be attributed to our attack functioning as a layer of random sampling. Unlike greedy sampling methods, we have a probability to sample the token with the highest unwatermarked probability (see \S~\ref{sec:robustness}, Appendix~\ref{app:proof-quality}, and Appendix~\ref{app:exp-quality-proof}).
The results of the three watermarks and two models prove that the watermark-removal attack exploiting the use of multiple watermark keys can effectively eliminate the watermarks while maintaining high output quality.

\begin{figure}[!h]
  \centering
  \begin{subfigure}[b]{0.34\textwidth}
  \centering
  \includegraphics[width=\textwidth]{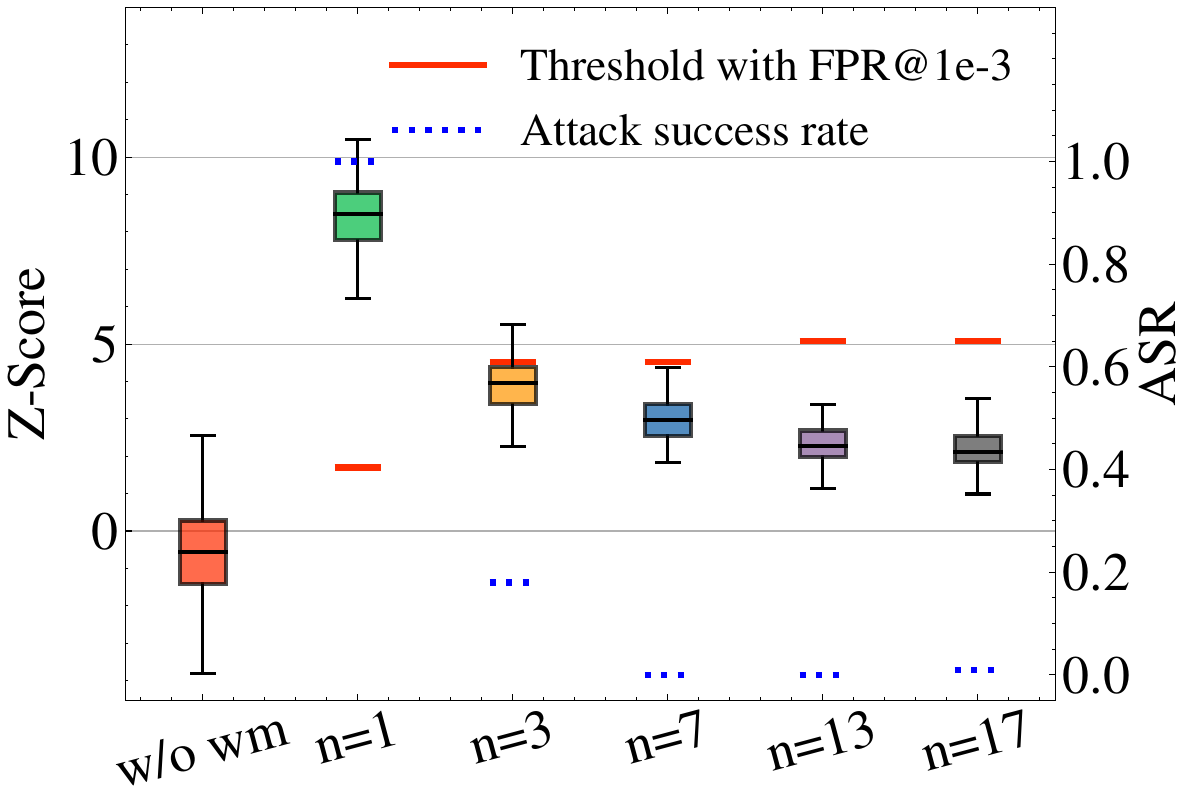}
    \caption{Z-Score and attack success rate (ASR) of watermark stealing~\cite{naseh2023stealing}.}
  \end{subfigure}
  \begin{subfigure}[b]{0.34\textwidth}
  \centering
  \includegraphics[width=\textwidth]{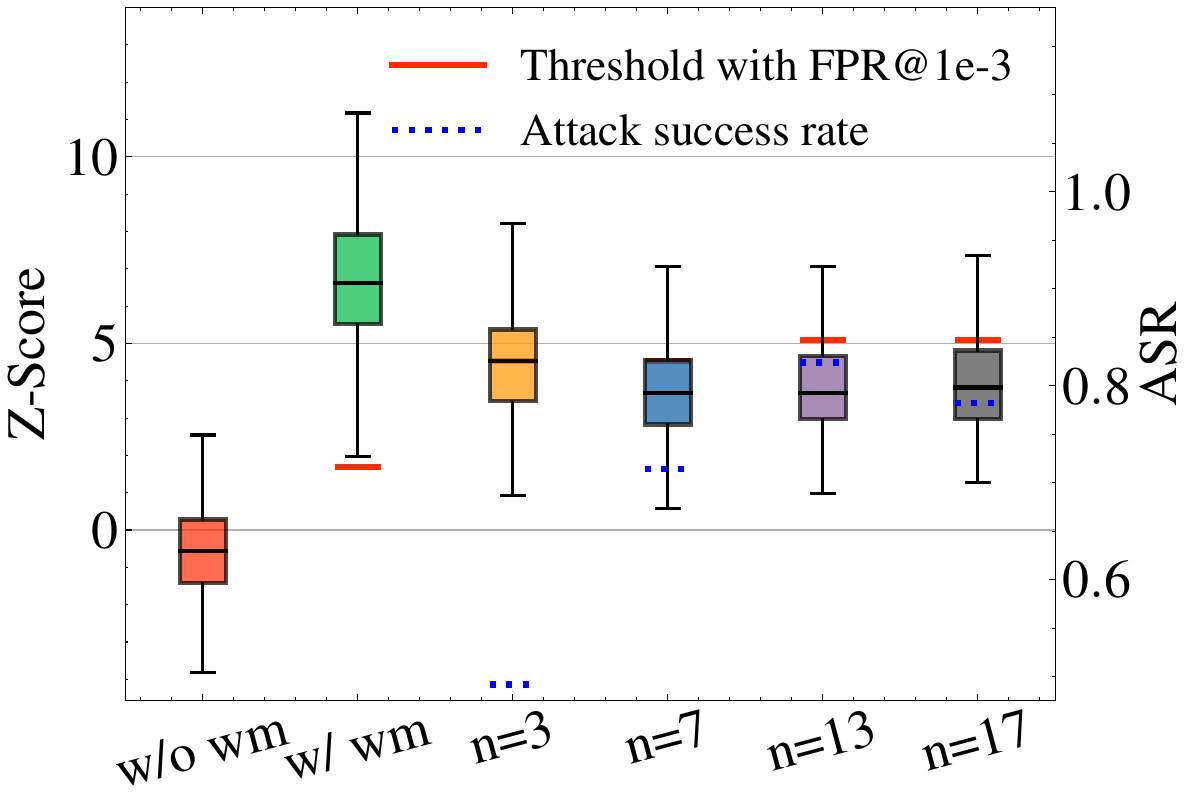}
    \caption{Z-Score and attack success rate (ASR) of watermark-removal.}
  \end{subfigure}
  \begin{subfigure}[b]{0.30\textwidth}
  \centering
    \includegraphics[width=\textwidth]{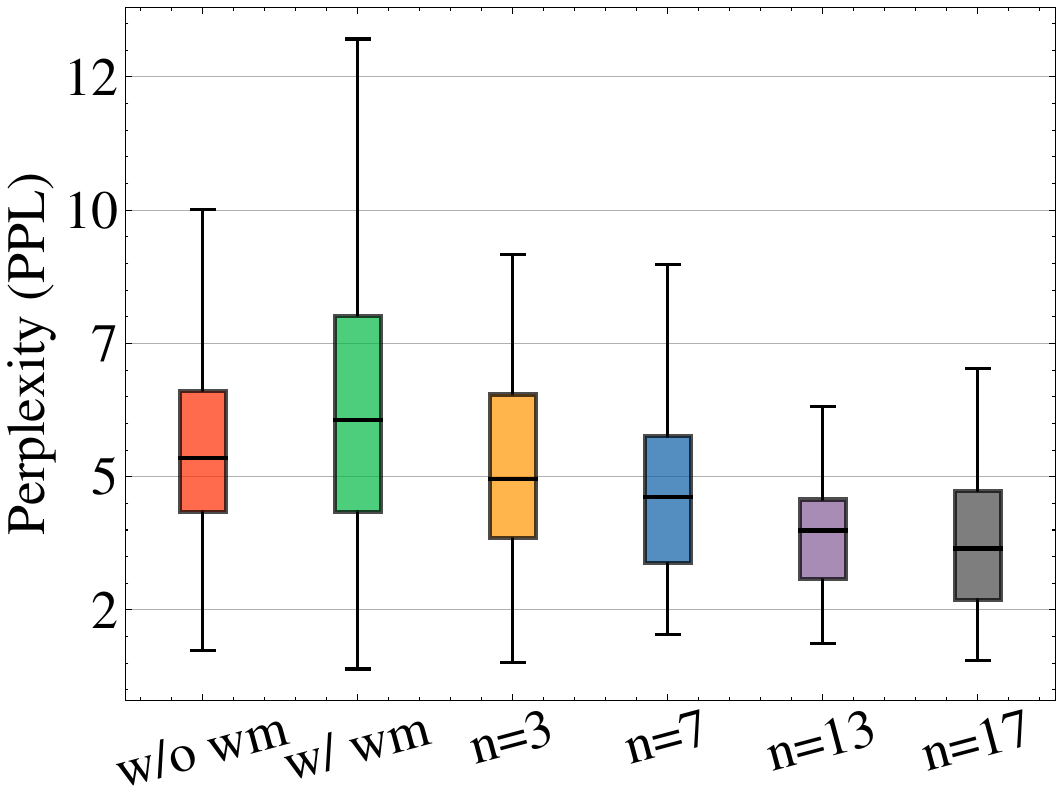}
    \caption{Perplexity (PPL) of watermark-removal.}
  \end{subfigure}
  \caption{Spoofing attack based on watermark stealing~\cite{naseh2023stealing} and watermark-removal attacks on Unigram watermark and LLAMA-2-7B model with different number of watermark keys $n$.}
  \label{fig:wm-removal-unbiasedness-ucsb-llama}
\end{figure}

\begin{figure}[!h]
  \centering
  \begin{subfigure}[b]{0.30\textwidth}
  \centering
  \includegraphics[width=\textwidth]{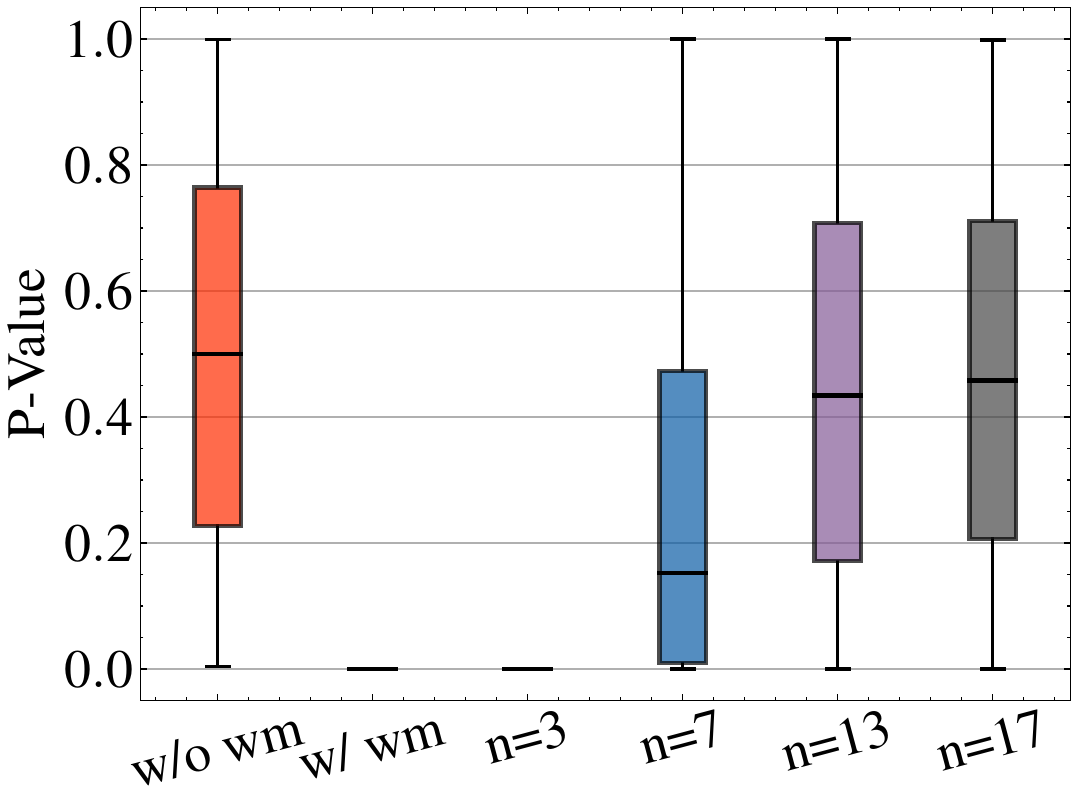}
    \caption{P-Value of watermark-removal attack.}
  \end{subfigure}\hspace{5pt}
  \begin{subfigure}[b]{0.30\textwidth}
  \centering
    \includegraphics[width=\textwidth]{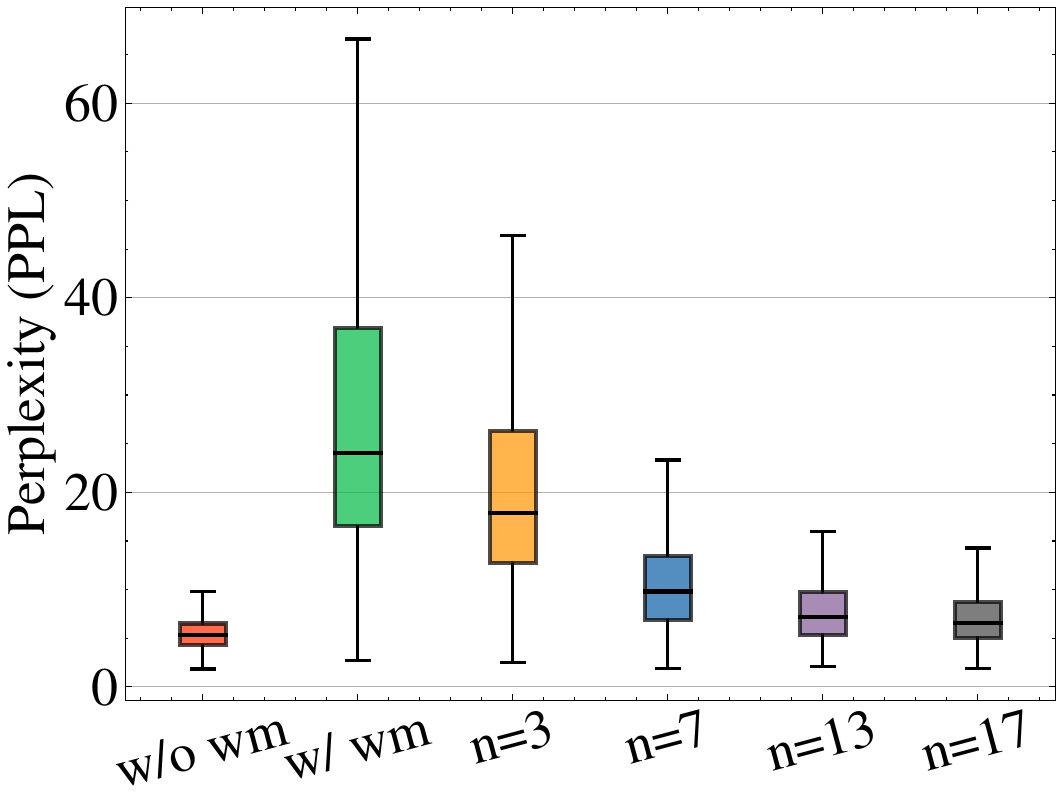}
    \caption{PPL of watermark-removal attack.}
  \end{subfigure}
  \caption{Watermark-removal on Exp watermark~\cite{kuditipudi2023robust} and LLAMA-2-7B model with multiple watermark keys.}
  \label{fig:wm-removal-unbiasedness-stanford-llama}
\end{figure}

\begin{figure}[!h]
  \centering
  \begin{subfigure}[b]{0.34\textwidth}
  \centering
  \includegraphics[width=\textwidth]{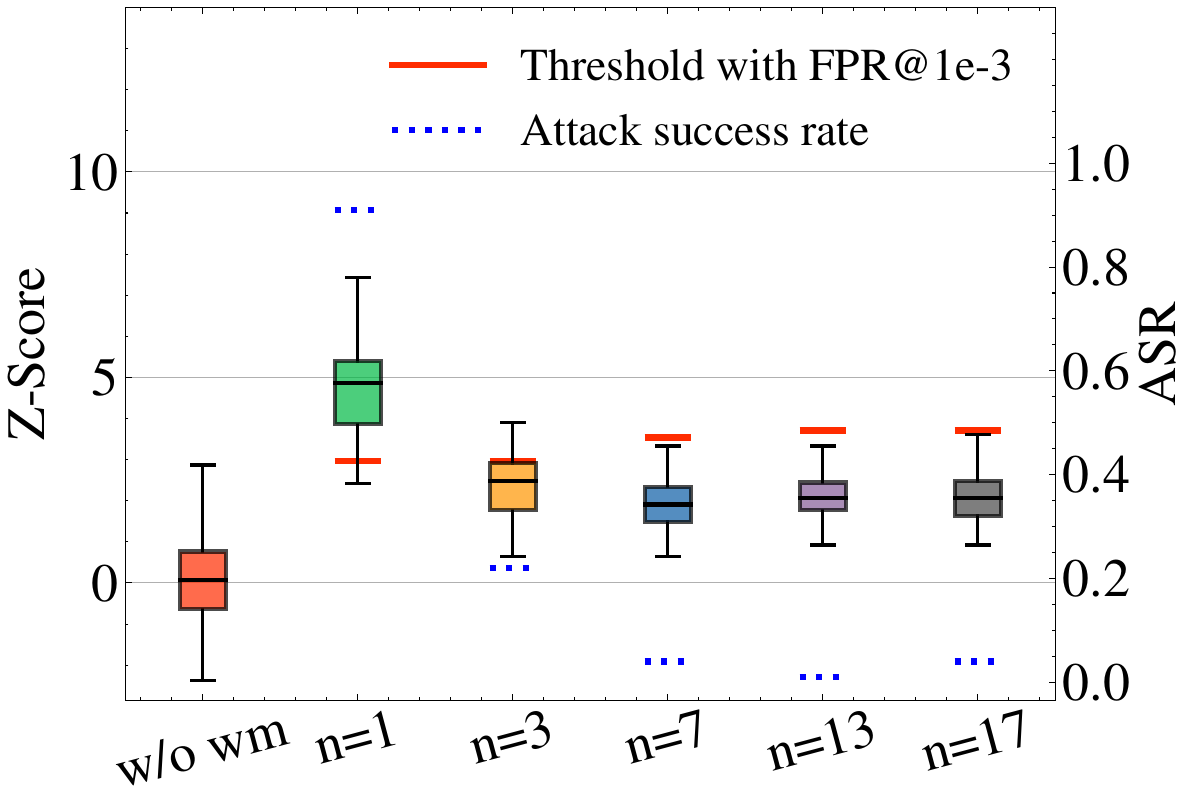}
    \caption{Z-Score and attack success rate (ASR) of watermark stealing~\cite{naseh2023stealing}.}
  \end{subfigure}
  \begin{subfigure}[b]{0.34\textwidth}
  \centering
  \includegraphics[width=\textwidth]{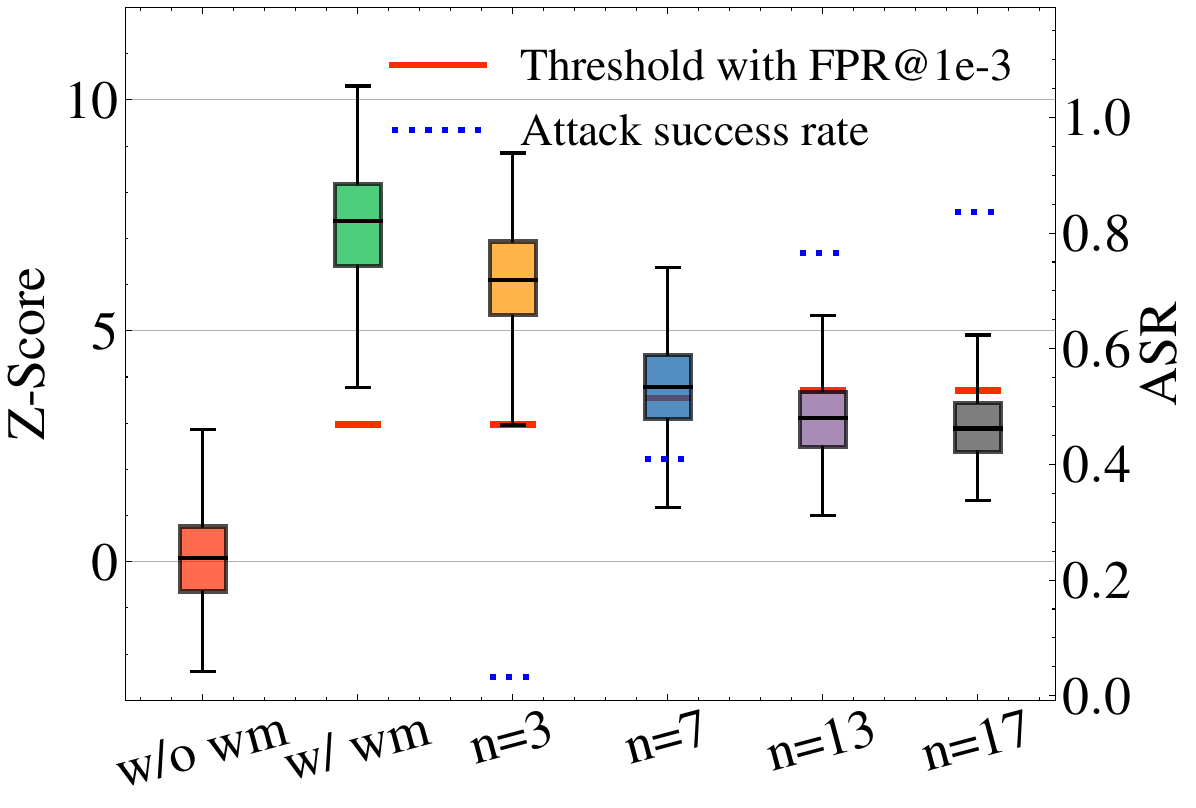}
    \caption{Z-Score and attack success rate (ASR) of watermark-removal.}
  \end{subfigure}
  \begin{subfigure}[b]{0.30\textwidth}
  \centering
    \includegraphics[width=\textwidth]{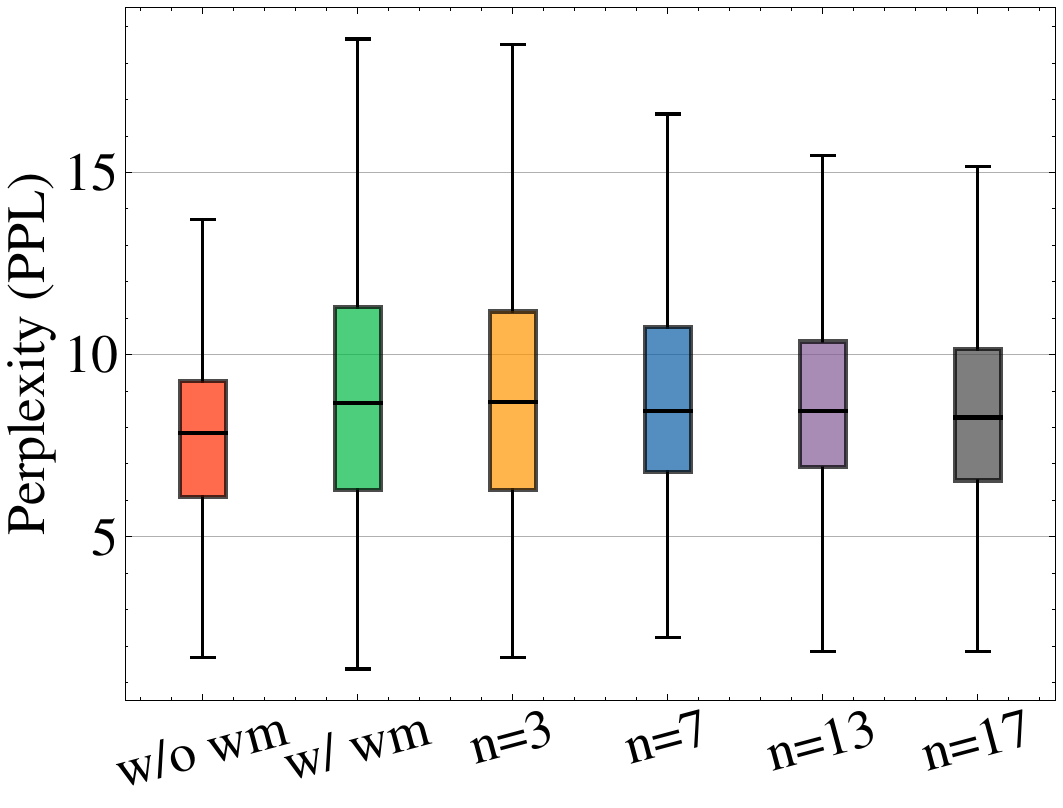}
    \caption{Perplexity (PPL) of watermark-removal.}
  \end{subfigure}
  \caption{Spoofing attack based on watermark stealing~\cite{naseh2023stealing} and watermark-removal attacks on KGW watermark and OPT-1.3B model with different number of watermark keys $n$.}
  \label{fig:wm-removal-unbiasedness-icml-opt}
\end{figure}

\begin{figure}[!h]
  \centering
  \begin{subfigure}[b]{0.34\textwidth}
  \centering
  \includegraphics[width=\textwidth]{figs/icml-opt-ws-3000.pdf}
    \caption{Z-Score and attack success rate (ASR) of watermark stealing~\cite{naseh2023stealing}.}
  \end{subfigure}
  \begin{subfigure}[b]{0.34\textwidth}
  \centering
  \includegraphics[width=\textwidth]{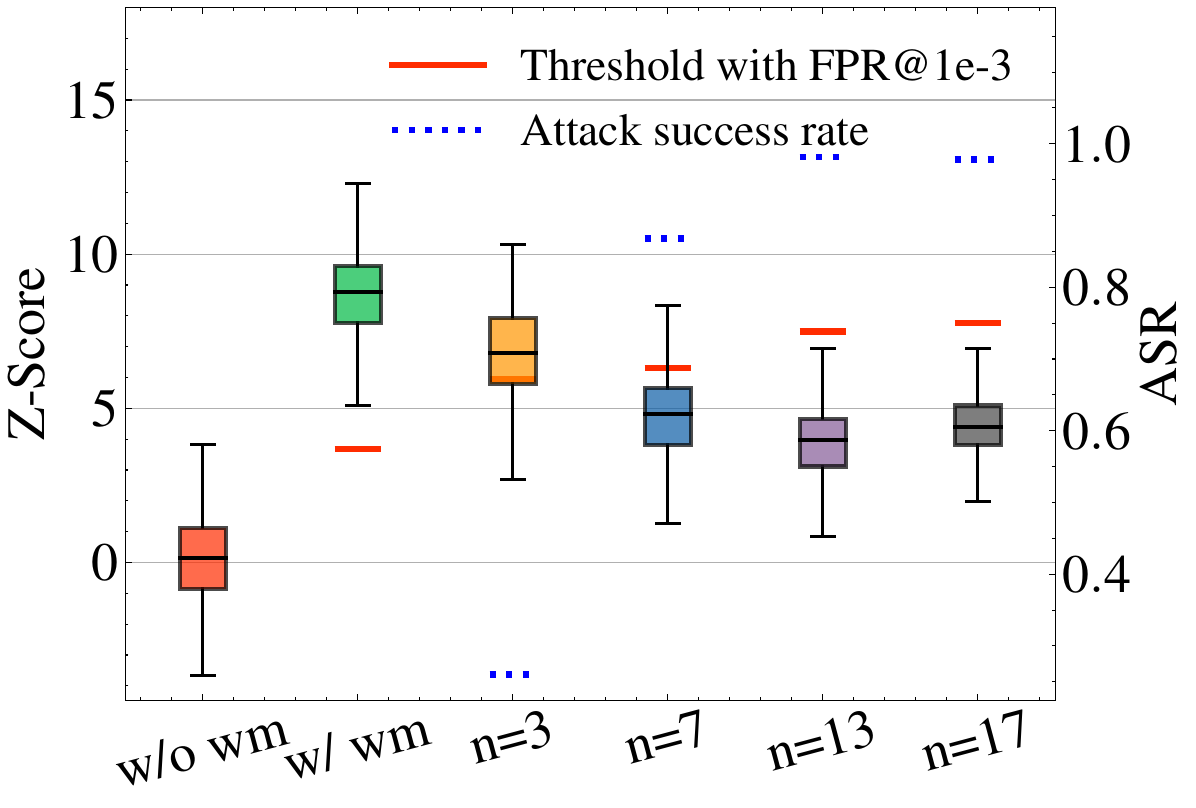}
    \caption{Z-Score and attack success rate (ASR) of watermark-removal.}
  \end{subfigure}
  \begin{subfigure}[b]{0.30\textwidth}
  \centering
    \includegraphics[width=\textwidth]{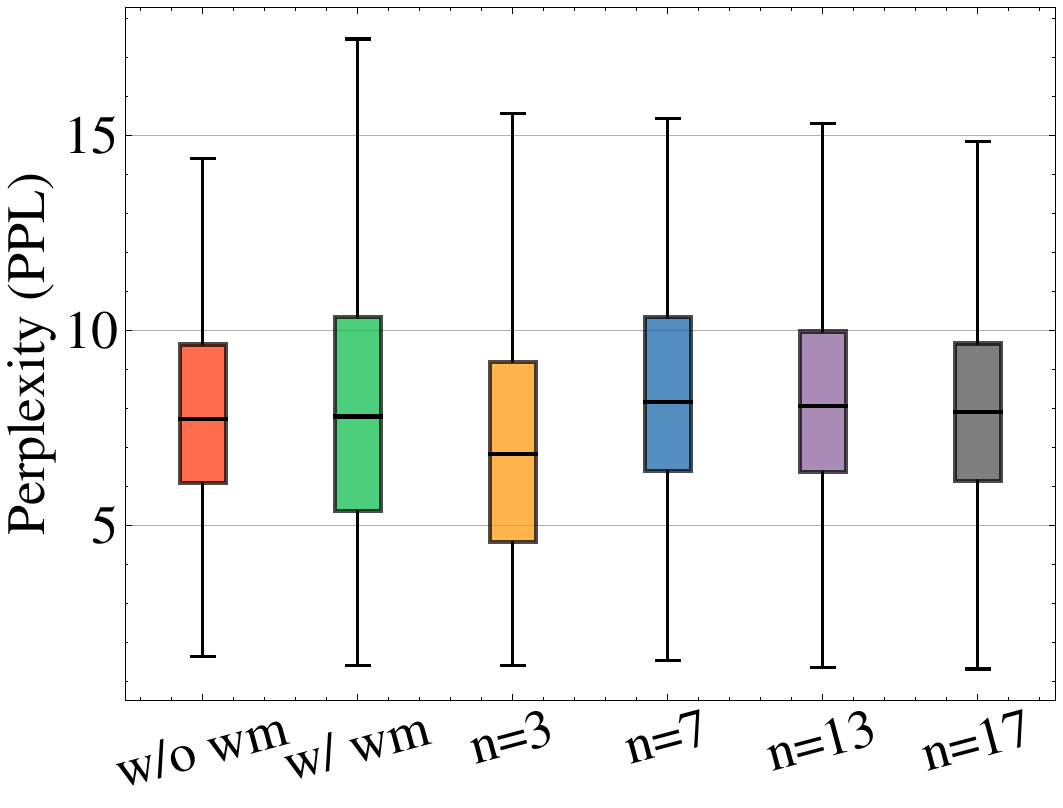}
    \caption{Perplexity (PPL) of watermark-removal.}
  \end{subfigure}
  \caption{Spoofing attack based on watermark stealing~\cite{naseh2023stealing} and watermark-removal attacks on Unigram watermark and OPT-1.3B model with different number of watermark keys $n$.}
  \label{fig:wm-removal-unbiasedness-ucsb-opt}
\end{figure}

\begin{figure}[!h]
  \centering
  \begin{subfigure}[b]{0.30\textwidth}
  \centering
  \includegraphics[width=\textwidth]{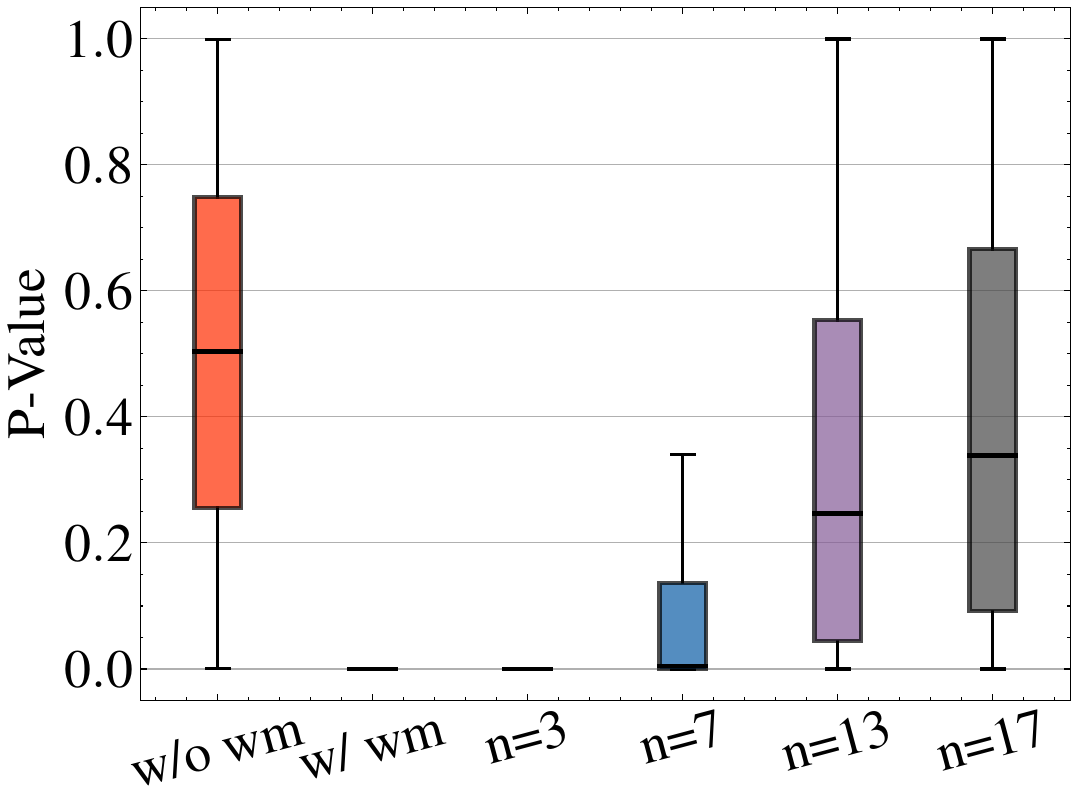}
    \caption{P-Value of watermark-removal attack.}
  \end{subfigure}\hspace{5pt}
  \begin{subfigure}[b]{0.30\textwidth}
  \centering
    \includegraphics[width=\textwidth]{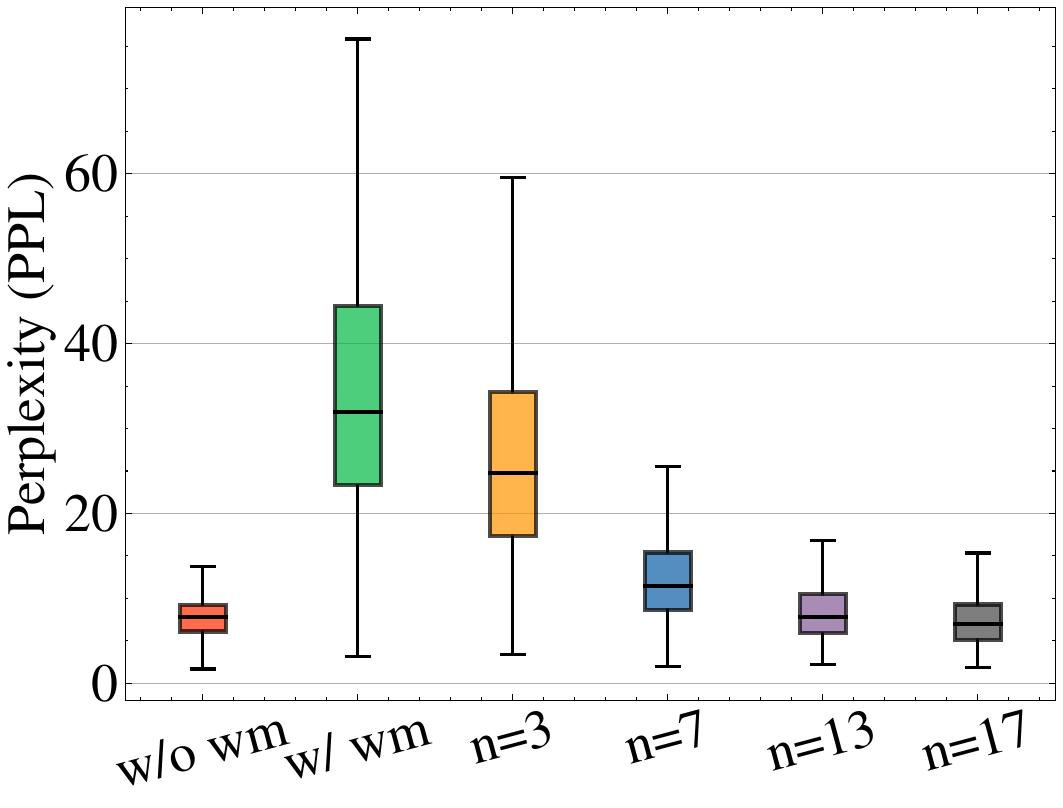}
    \caption{PPL of watermark-removal attack.}
  \end{subfigure}
  \caption{Watermark-removal on Exp watermark~\cite{kuditipudi2023robust} and OPT-1.3B model with multiple watermark keys.}
  \label{fig:wm-removal-unbiasedness-stanford-opt}
\end{figure}
\section{Additional Results of Attacks Exploiting Detection APIs}
\label{app:eval-api}

We present the results of watermark-removal and spoofing attacks on OPT-1.3B model in \F~\ref{fig:api-attack-opt} and \T~\ref{tab:eval-api-opt}.
The results are consistent with the LLAMA-2-7B model presented in \S~\ref{sec:eval-api-attack}.,
with all the attack success rates higher than $75\%$ using a small number of queries to the detection API of around $3$ per token.
The results on OPT-1.3B model further demonstrate the effectiveness of our attacks exploiting the detection API.

\begin{figure*}[!h]
  \centering
  \begin{subfigure}[b]{0.33\textwidth}
  \centering
    \includegraphics[width=\textwidth]{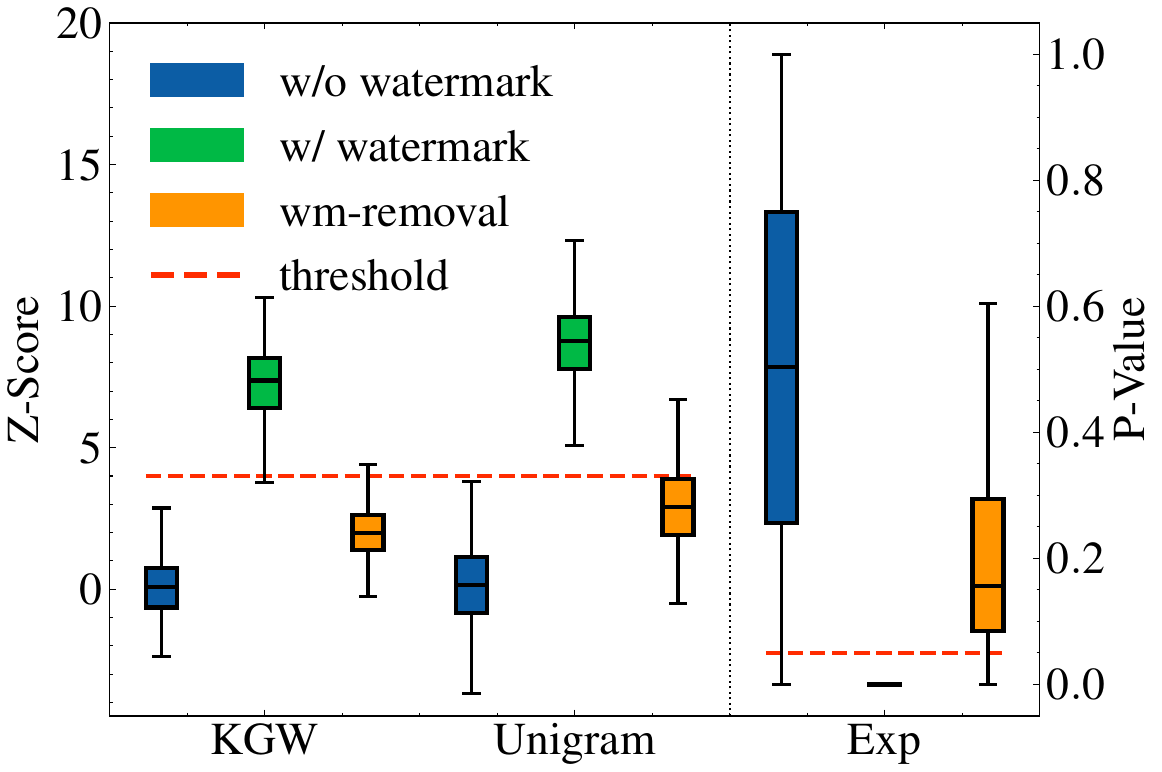}
    \caption{Z-Score/P-Value of wm-removal.}
    \label{fig:wm-removal-api-opt-a}
  \end{subfigure}%
  \begin{subfigure}[b]{0.33\textwidth}
  \centering
    \includegraphics[width=\textwidth]{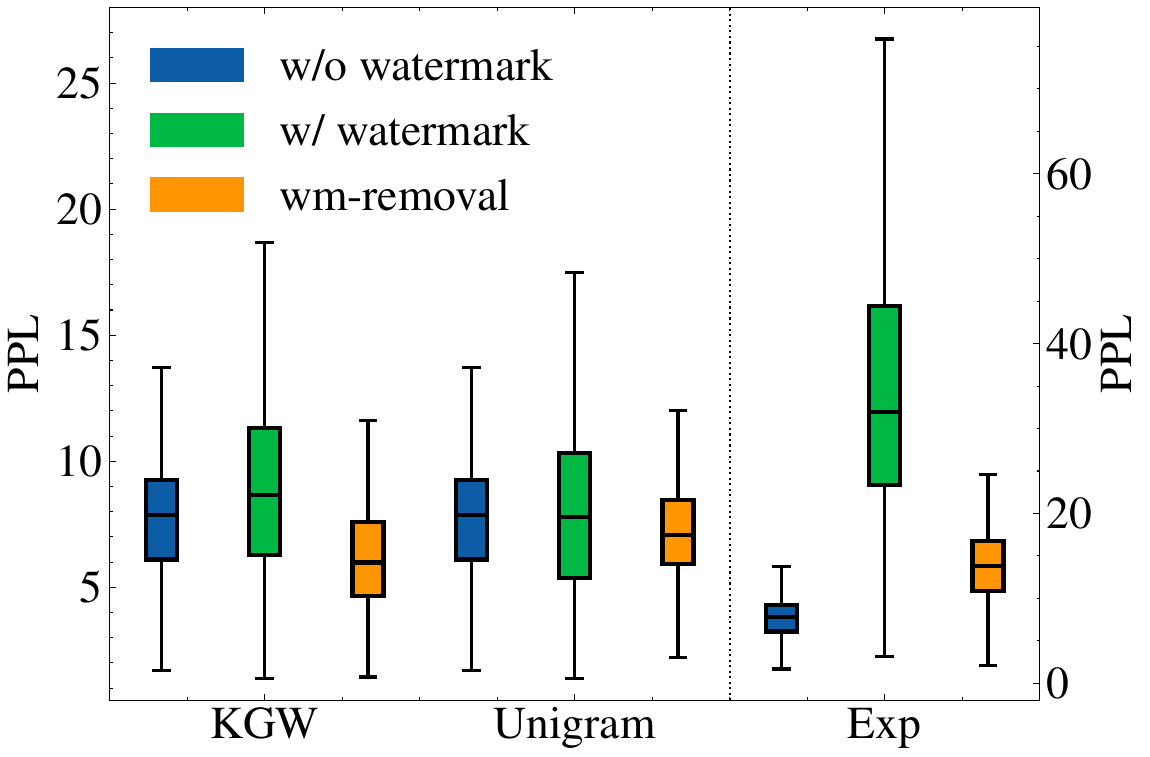}
    \caption{Perplexity of wm-removal.}
    \label{fig:wm-removal-api-opt-b}
  \end{subfigure}%
  \begin{subfigure}[b]{0.33\textwidth}
  \centering
    \includegraphics[width=\textwidth]{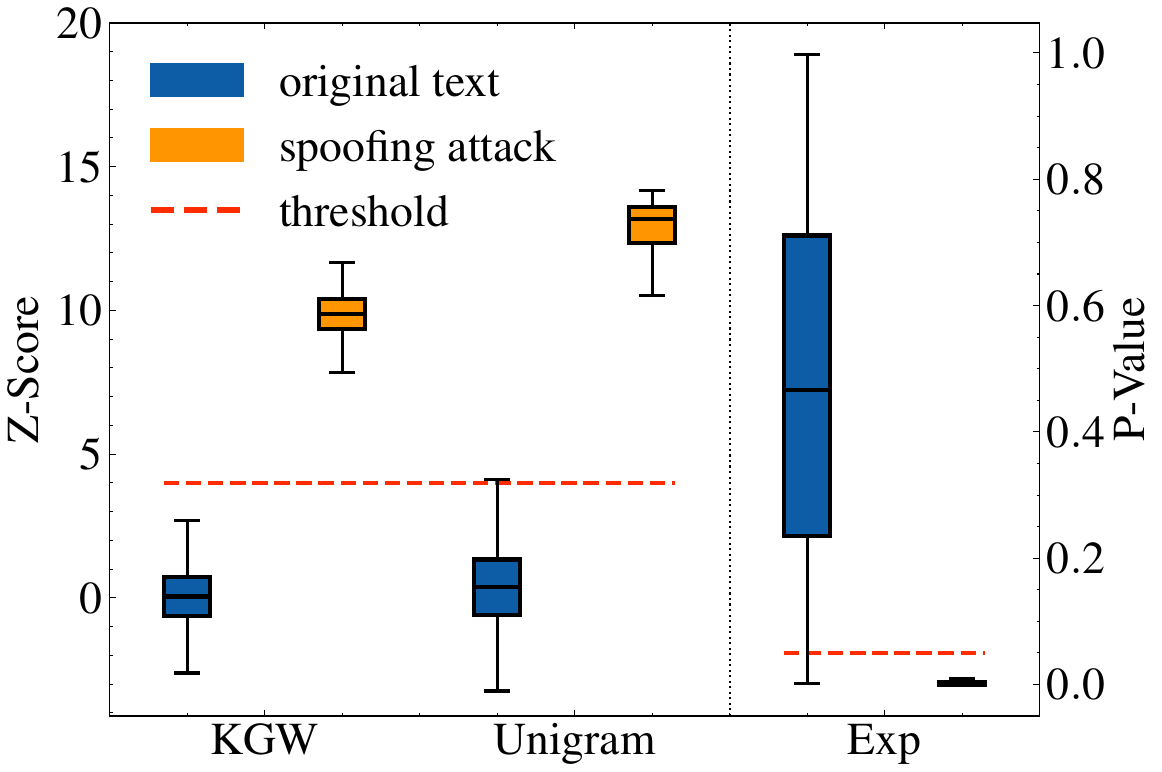}
    \caption{Z-Score/P-Value of spoofing.}
    \label{fig:spoofing-api-opt}
  \end{subfigure}
  \caption{Attacks exploiting detection APIs on OPT-1.3B model.}
  \label{fig:api-attack-opt}
\end{figure*}

\begin{table}[!h]
\footnotesize
    \centering
    \begin{tabular}{c|c|c|c|c}
    \hline
         & \multicolumn{2}{c|}{wm-removal} & \multicolumn{2}{c}{spoofing} \\ \cline{2-5}
         & ASR  & \#queries & ASR & \#queries \\ \hline
       KGW & $0.99$ & $2.87$ & $1.00$ & $2.96$ \\ \hline
   Unigram & $0.77$ & $3.25$ & $1.00$ & $2.97$ \\ \hline
       Exp & $0.86$ & $2.07$ & $0.93$ & $2.92$ \\ \hline
    \end{tabular}
    \caption{The attack success rate (ASR), and the average query numbers per token for the watermark-removal and spoofing attacks exploiting the detection API on OPT-1.3B model.}
    \label{tab:eval-api-opt}
\end{table}
\section{Additional Results of DP Defense}
\label{app:eval-dp}

We present additional evaluation results of our defence technique that enhances the watermark detection by utilizing the techniques of differential privacy (see \S~\ref{sec:method-detection}).
Consistent with \S~\ref{sec:dp-defense}, we evaluate the utility of the DP defense as well as its performance in mitigating the spoofing attack exploiting the detection API.
The results are shown in \F~\ref{fig:dp-ucsb-llama}, \F~\ref{fig:dp-exp-llama}, \F~\ref{fig:dp-kgw-opt}, \F~\ref{fig:dp-ucsb-opt}, \F~\ref{fig:dp-exp-opt}.

We first identify the optimal noise scale parameter $\sigma$ based on its detection accuracy and attack success rate, aiming for a drop in detection accuracy within $2\%$ and the lowest attack success rate. 
Then we assess the performance of the defense.
Our findings across three watermarks and two models consistently demonstrate that we can significantly reduce the attack success rate to around or below $20\%$.

Our defense can be generalized to all LLM watermarking schemes.
It allows us to substantially mitigate spoofing attacks exploiting the detection API while having negligible impact on utility.

\begin{figure}[!h]
  \centering
  \begin{subfigure}[b]{0.3\textwidth}
  \centering
    \includegraphics[width=\textwidth]{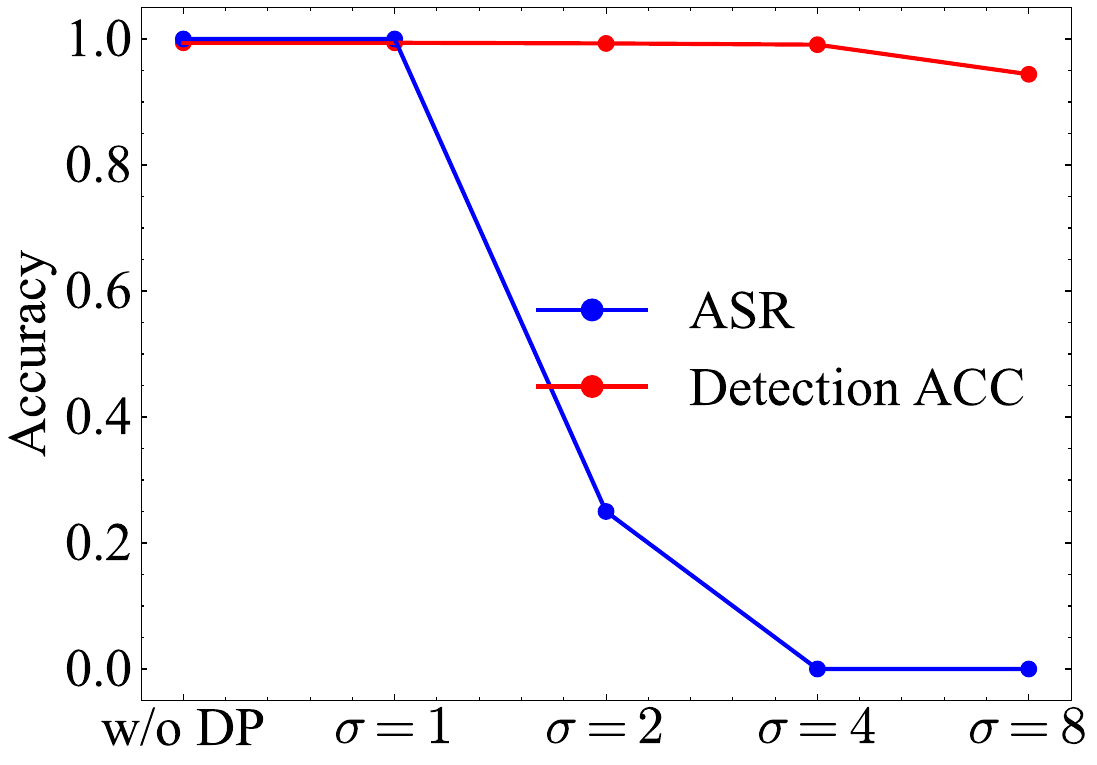}
    \caption{Detection ACC and spoofing ASR.}
  \end{subfigure}\hspace{5pt}
  \begin{subfigure}[b]{0.3\textwidth}
  \centering
    \includegraphics[width=\textwidth]{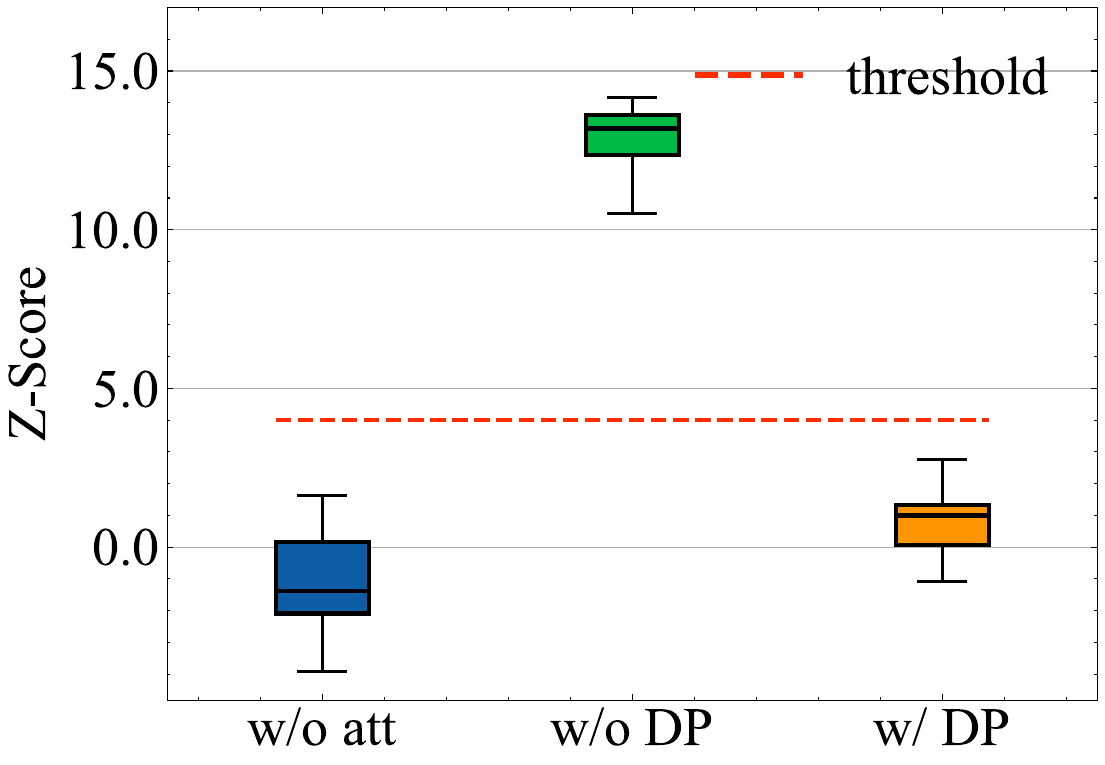}
    \caption{Z-scores with/without DP.}
  \end{subfigure}
  \caption{Evaluation of DP watermark detection on Unigram watermark and LLAMA-2-7B model. \textbf{(a).} Detection accuracy and spoofing attack success rate without and with DP watermark detection under different noise parameters. \textbf{(b).} Z-scores of original text without attack, spoofing attack without DP, and spoofing attacks with DP. We use the best $\sigma=4$ from \textbf{(a)}.}
  \label{fig:dp-ucsb-llama}
\end{figure}

\begin{figure}[!h]
  \centering
  \begin{subfigure}[b]{0.3\textwidth}
  \centering
    \includegraphics[width=\textwidth]{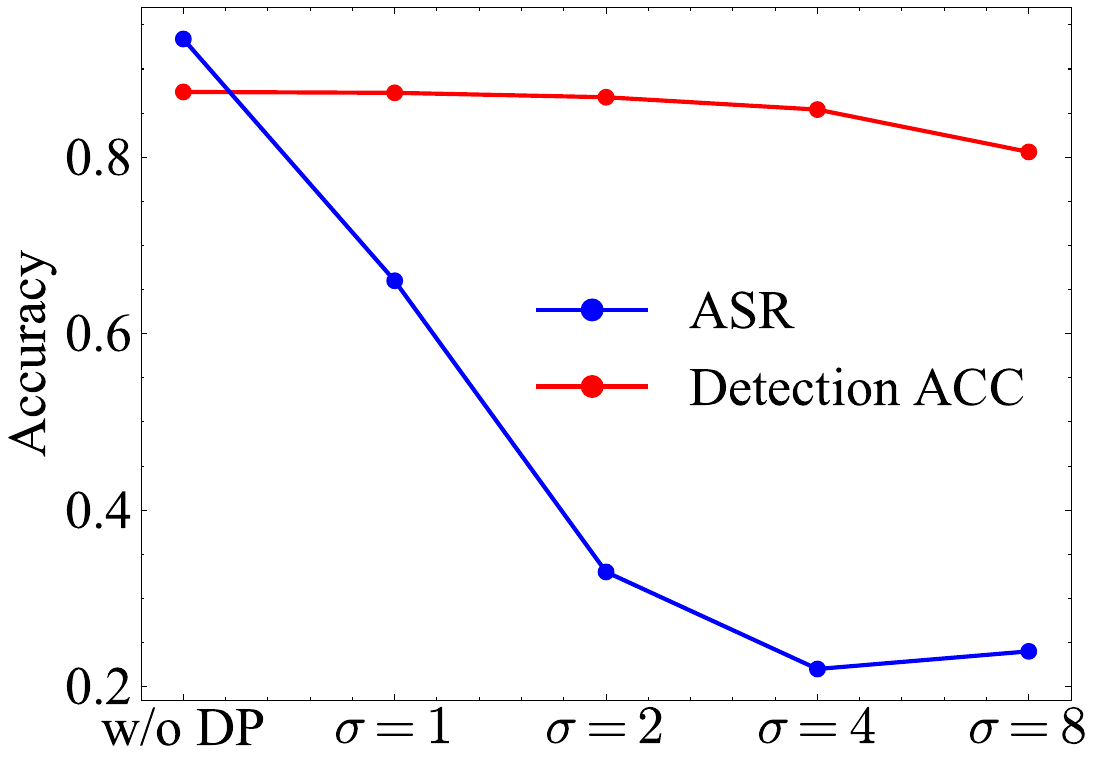}
    \caption{Detection accuracy and spoofing attack success rate.}
  \end{subfigure}\hspace{5pt}
  \begin{subfigure}[b]{0.3\textwidth}
  \centering
    \includegraphics[width=\textwidth]{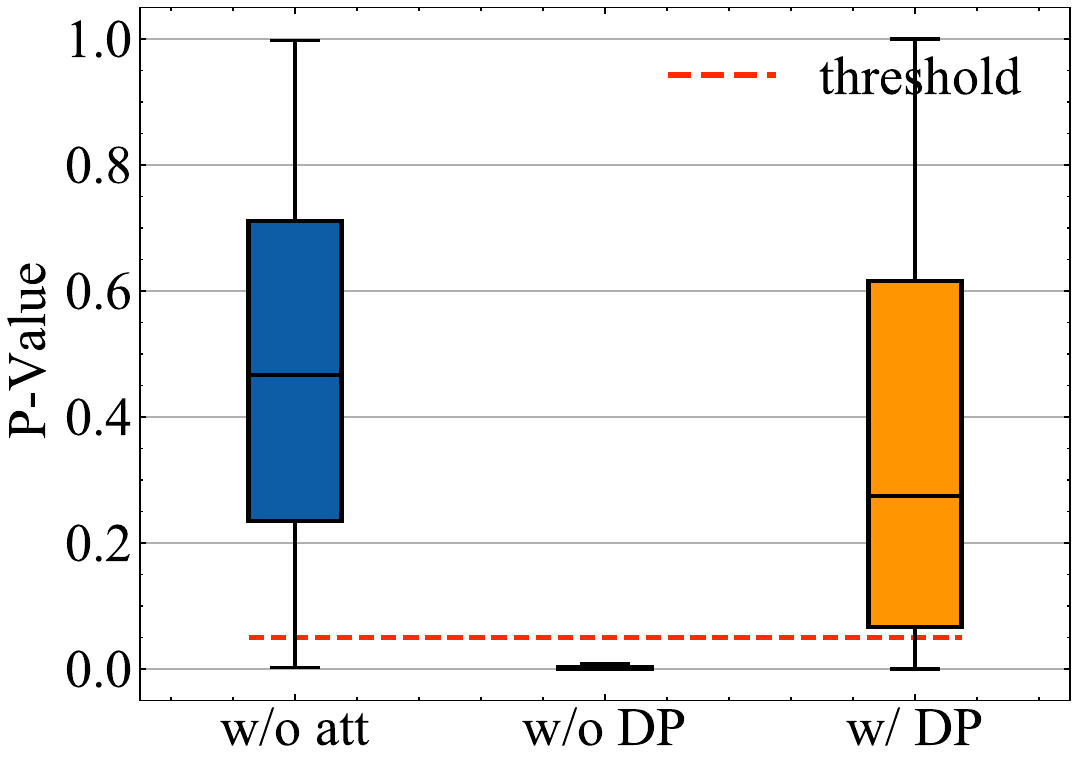}
    \caption{P-values with/without DP and under multiple queries.}
  \end{subfigure}
  \caption{Evaluation of DP watermark detection on Exp watermark and LLAMA-2-7B model. \textbf{(a).} Detection accuracy and spoofing attack success rate without and with DP watermark detection under different noise parameters. \textbf{(b).} Z-scores of original text without attack, spoofing attack without DP, and spoofing attacks with DP. We use the best $\sigma=4$ from \textbf{(a)}.}
  \label{fig:dp-exp-llama}
\end{figure}

\begin{figure}[!h]
  \centering
  \begin{subfigure}[b]{0.3\textwidth}
  \centering
    \includegraphics[width=\textwidth]{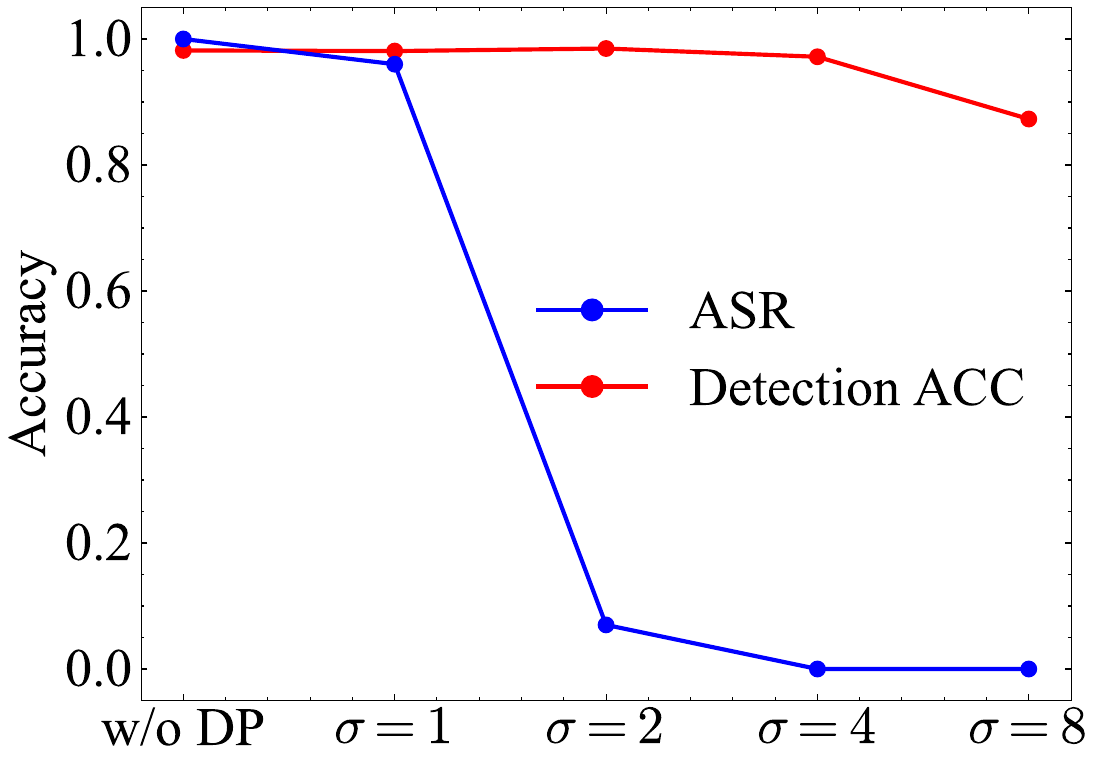}
    \caption{Detection accuracy and spoofing attack success rate.}
  \end{subfigure}\hspace{5pt}
  \begin{subfigure}[b]{0.3\textwidth}
  \centering
    \includegraphics[width=\textwidth]{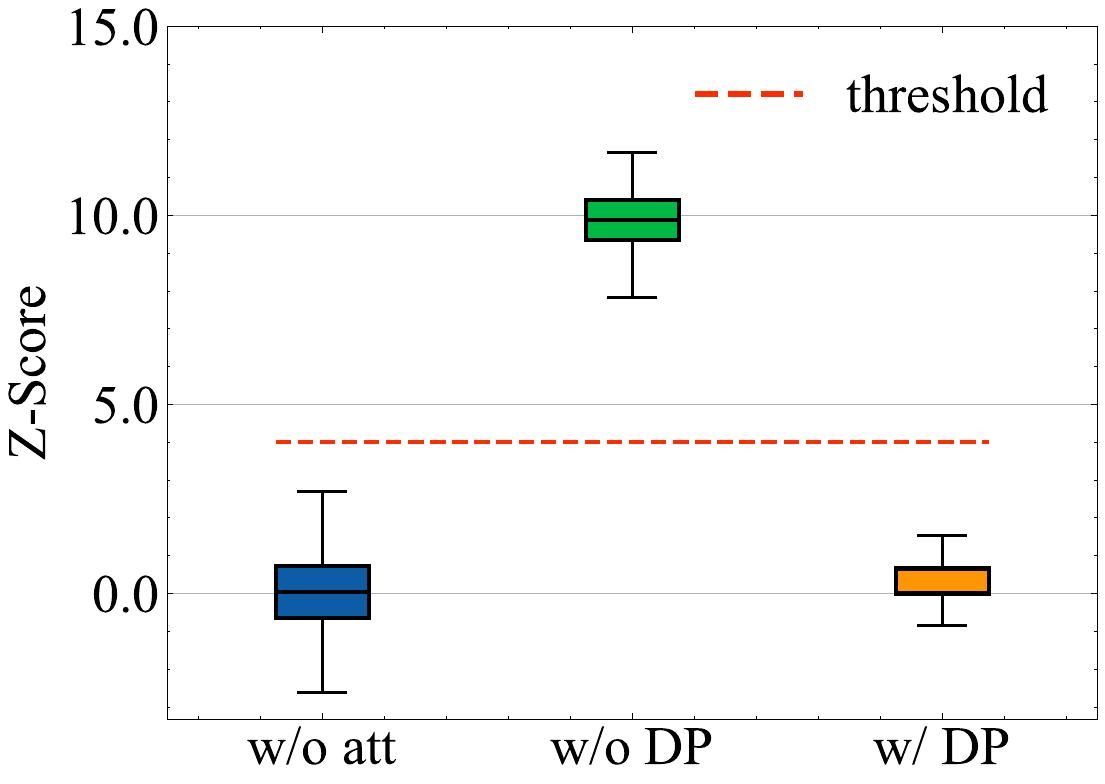}
    \caption{Z-scores with/without DP and under multiple queries.}
  \end{subfigure}
  \caption{Evaluation of DP watermark detection on KGW watermark and OPT-1.3B model. \textbf{(a).} Detection accuracy and spoofing attack success rate without and with DP watermark detection under different noise parameters. \textbf{(b).} Z-scores of original text without attack, spoofing attack without DP, and spoofing attacks with DP. We use the best $\sigma=4$ from \textbf{(a)}.}
  \label{fig:dp-kgw-opt}
\end{figure}

\begin{figure}[!h]
  \centering
  \begin{subfigure}[b]{0.3\textwidth}
  \centering
    \includegraphics[width=\textwidth]{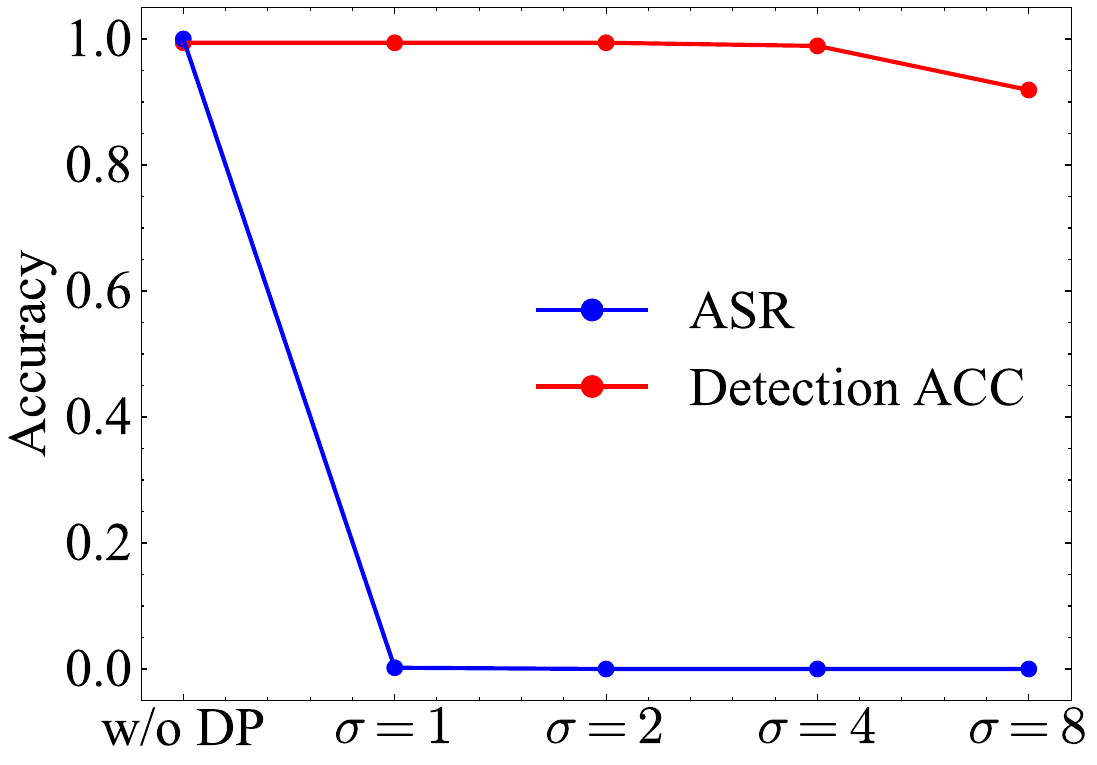}
    \caption{Detection accuracy and spoofing attack success rate.}
  \end{subfigure}\hspace{5pt}
  \begin{subfigure}[b]{0.3\textwidth}
  \centering
    \includegraphics[width=\textwidth]{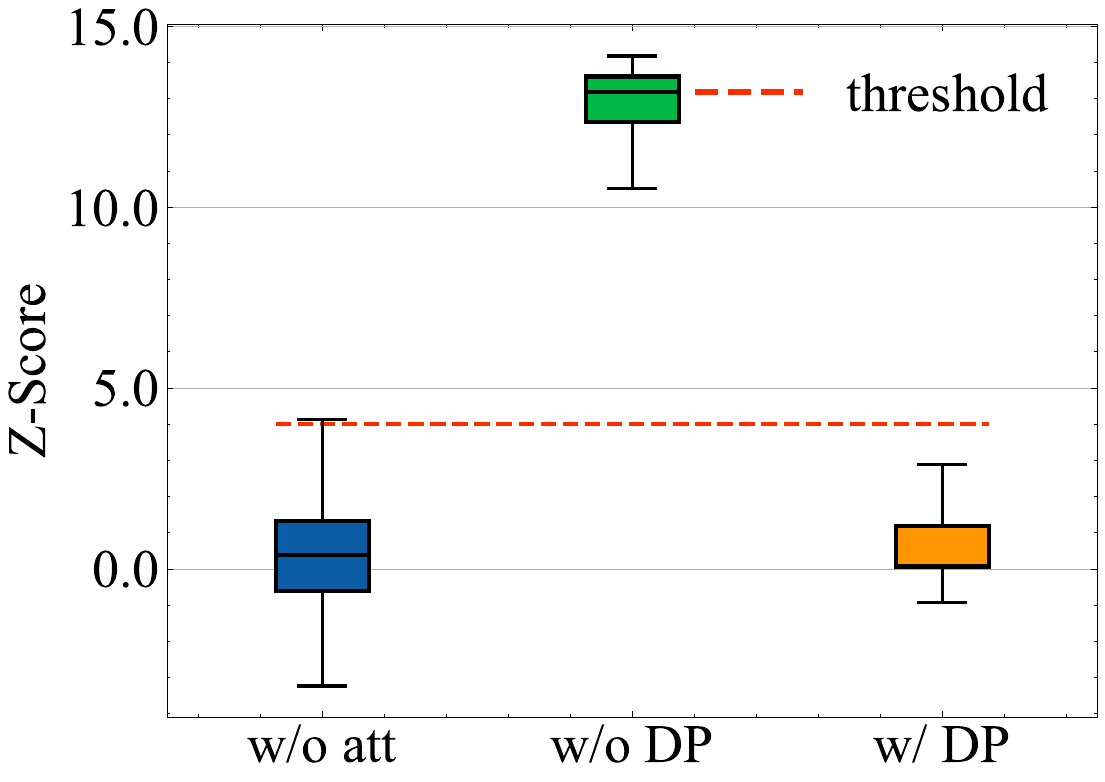}
    \caption{Z-scores with/without DP and under multiple queries.}
  \end{subfigure}
  \caption{Evaluation of DP watermark detection on Unigram watermark and OPT-1.3B model. \textbf{(a).} Detection accuracy and spoofing attack success rate without and with DP watermark detection under different noise parameters. \textbf{(b).} Z-scores of original text without attack, spoofing attack without DP, and spoofing attacks with DP. We use the best $\sigma=4$ from \textbf{(a)}.}
  \label{fig:dp-ucsb-opt}
\end{figure}

\begin{figure}[!h]
  \centering
  \begin{subfigure}[b]{0.3\textwidth}
  \centering
    \includegraphics[width=\textwidth]{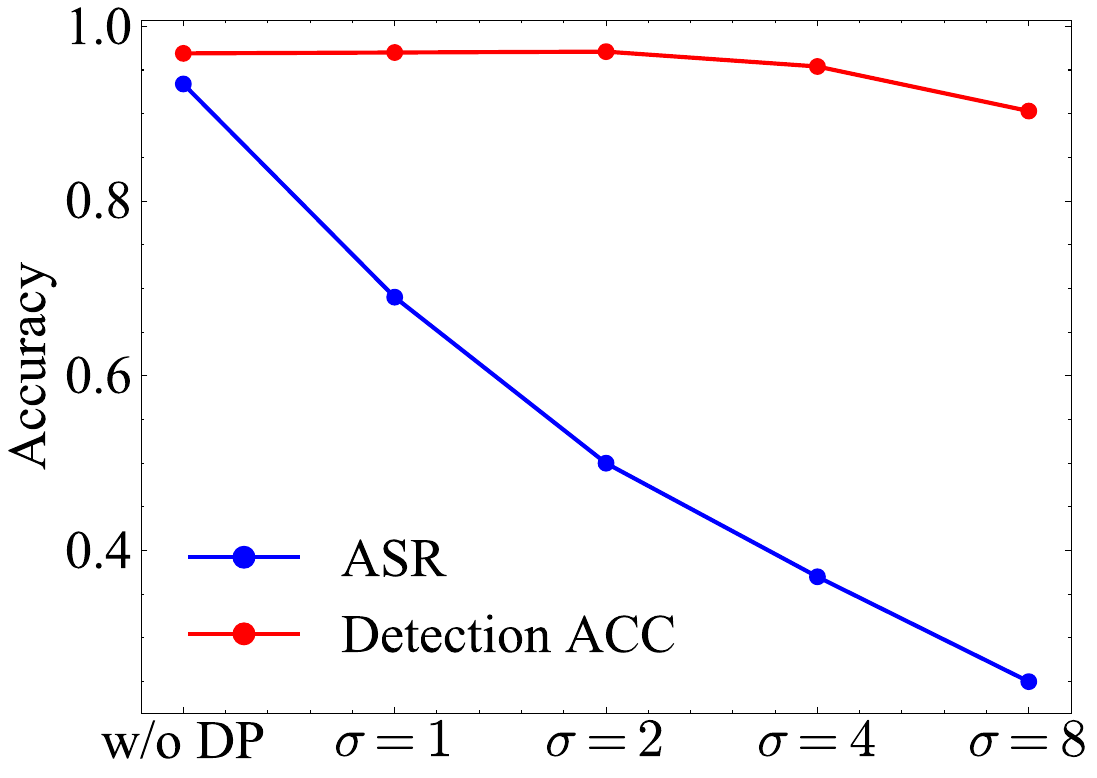}
    \caption{Detection accuracy and spoofing attack success rate.}
  \end{subfigure}\hspace{5pt}
  \begin{subfigure}[b]{0.3\textwidth}
  \centering
    \includegraphics[width=\textwidth]{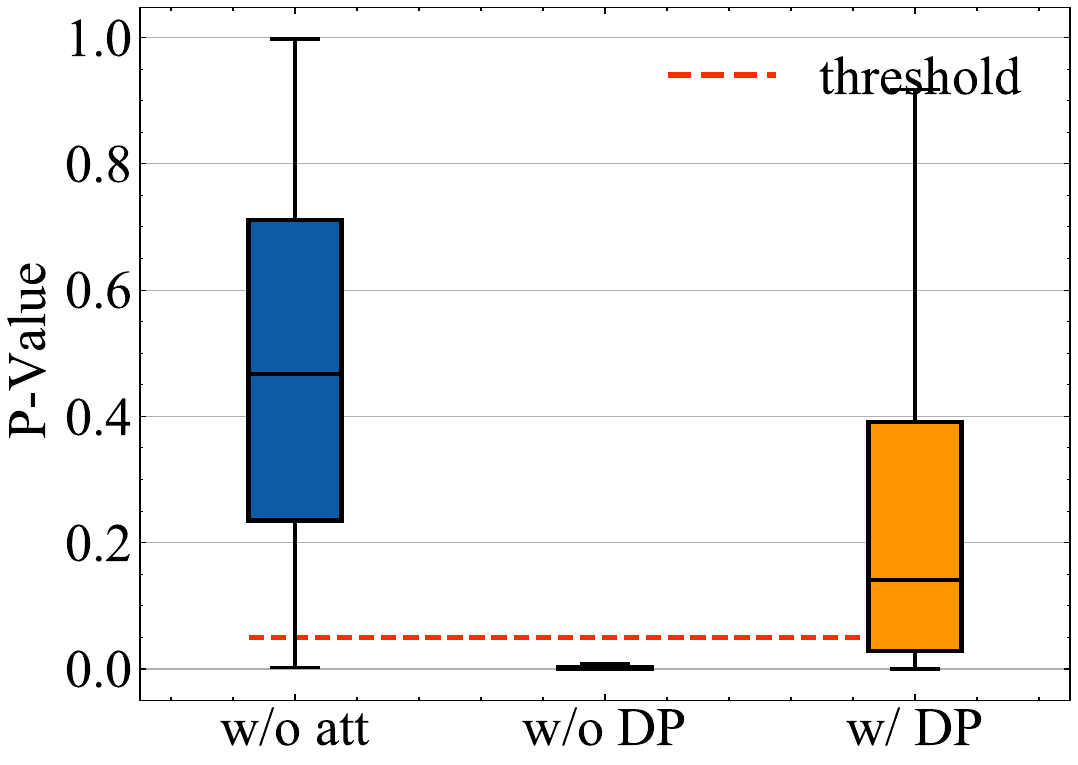}
    \caption{P-values with/without DP and under multiple queries.}
  \end{subfigure}
  \caption{Evaluation of DP watermark detection on Exp watermark and OPT-1.3B model. \textbf{(a).} Detection accuracy and spoofing attack success rate without and with DP watermark detection under different noise parameters. \textbf{(b).} Z-scores of original text without attack, spoofing attack without DP, and spoofing attacks with DP. We use the best $\sigma=4$ from \textbf{(a)}.}
  \label{fig:dp-exp-opt}
\end{figure}

\section{Additional Results of Larger Context Width and Using P-Values}
\label{app:eval-context-width}

In this section, we present more results of our attacks on larger context width $h=4$ in the KGW watermark.
The experiments are conducted on LLAMA-2-7B model.
To demonstrate that changing to p-values has minor effects on the attack results, we adopt p-values here.
We keep using z-scores in the main paper to ease the presentation (e.g., the p-values of the spoofing results will be too small to visualize). 
The results are shown in \F~\ref{fig:robustness-h}, \F~\ref{fig:wm-removal-unbiasedness-h}, \F~\ref{fig:api-h},
We observe consistent results when using p-values in the KGW watermark and we expect minor influence on the results from this change since p-value is monotonic to z-score.
We also get consistent observations for the scenarios of using larger context width $h$ in KGW watermark.
Again, using larger $h$ enhances the resistance against watermark stealing but reduces robustness. 
Our experiments in \F~\ref{fig:robustness-h}, \F~\ref{fig:wm-removal-unbiasedness-h}, \F~\ref{fig:api-h} validate this. 
\F~\ref{fig:robustness-h} shows that fewer edits are allowed for watermarked content with a larger $h$, indicating lower robustness. 
KGW recommends using $h<5$ in their codebase for robustness, and no prior works we are aware of suggest using $h>4$. 
Recent work~\cite{jovanovic2024watermark} shows successful watermark stealing even with $h=4$. 
Using multiple keys, as shown in \S~\ref{sec:unbiasedness} of our paper, mitigates stealing attacks, but introduces new attack vectors of watermark removal.

\begin{figure}[!h]
\vspace{-5pt}
  \centering
  \begin{subfigure}[b]{0.32\textwidth}
  \centering
    \includegraphics[width=\textwidth]{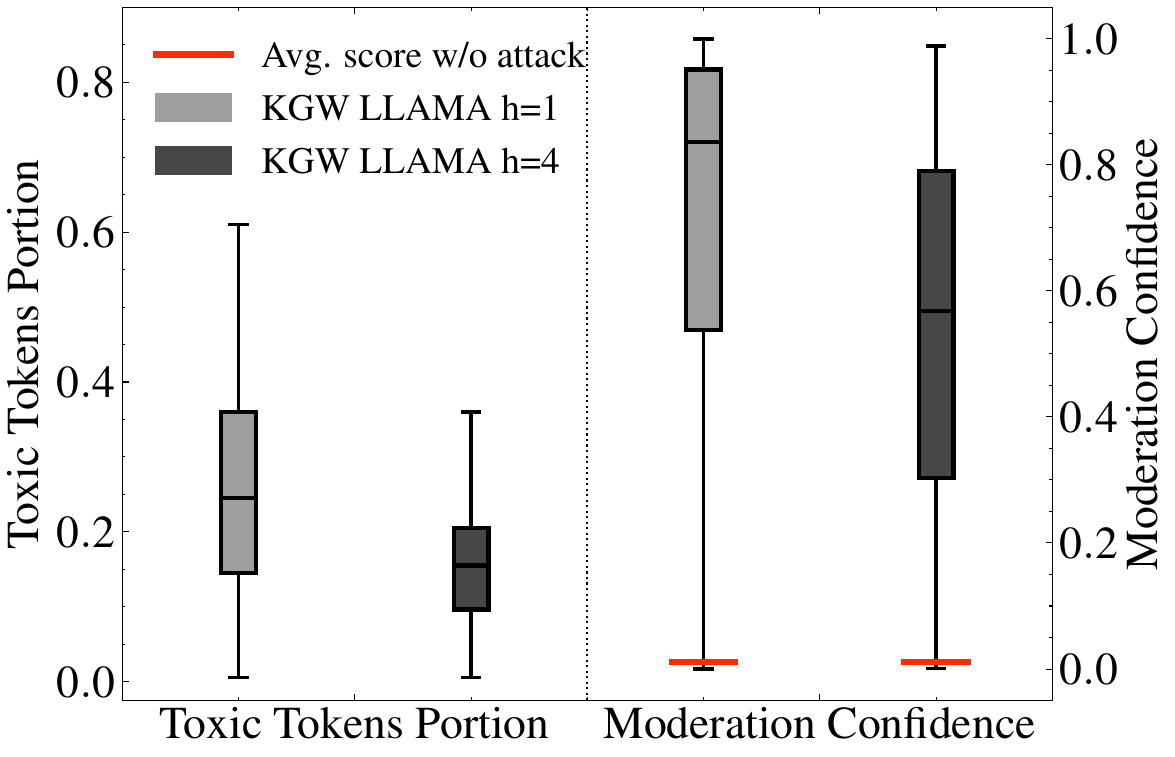}
    \vspace{-15pt}
    \caption{Toxic token insertion for KGW with $h=1$ and $h=4$.}
  \end{subfigure}
  \begin{subfigure}[b]{0.32\textwidth}
  \centering
  \includegraphics[width=\textwidth]{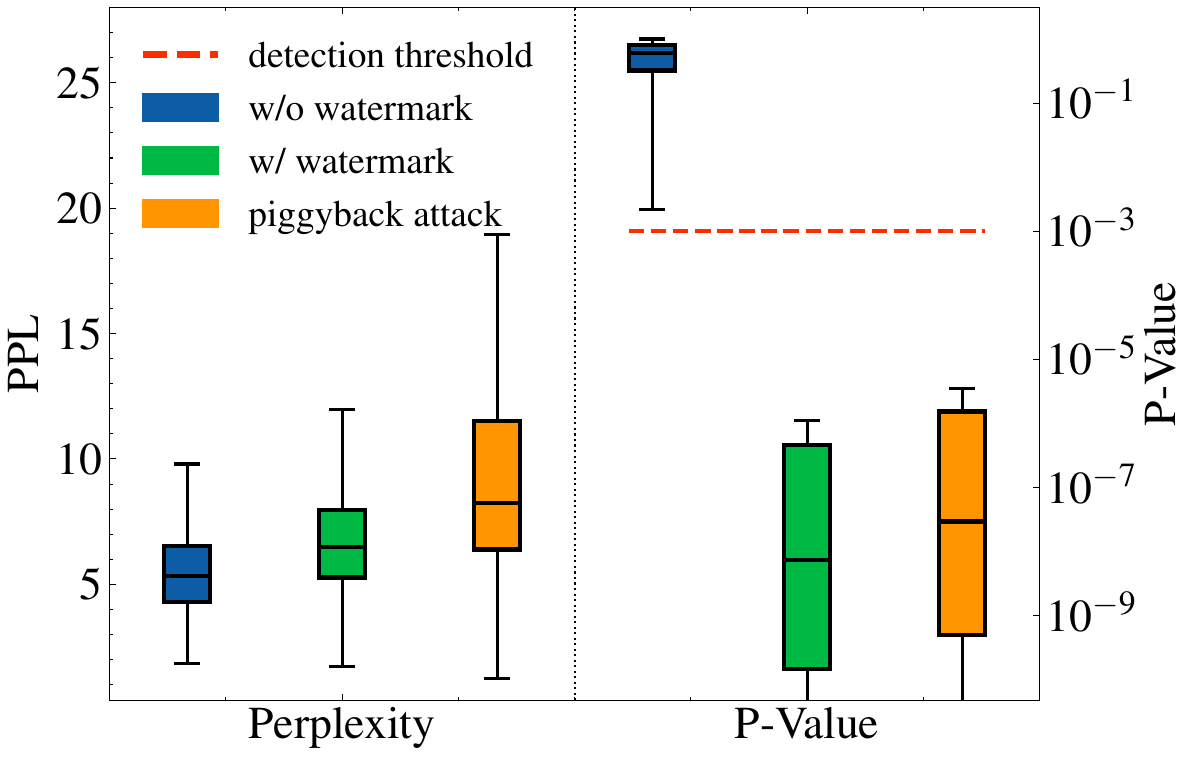}
  \vspace{-15pt}
    \caption{Fluent inaccurate editing for KGW with $h=1$.}
  \end{subfigure}
  \begin{subfigure}[b]{0.32\textwidth}
  \centering
  \includegraphics[width=\textwidth]{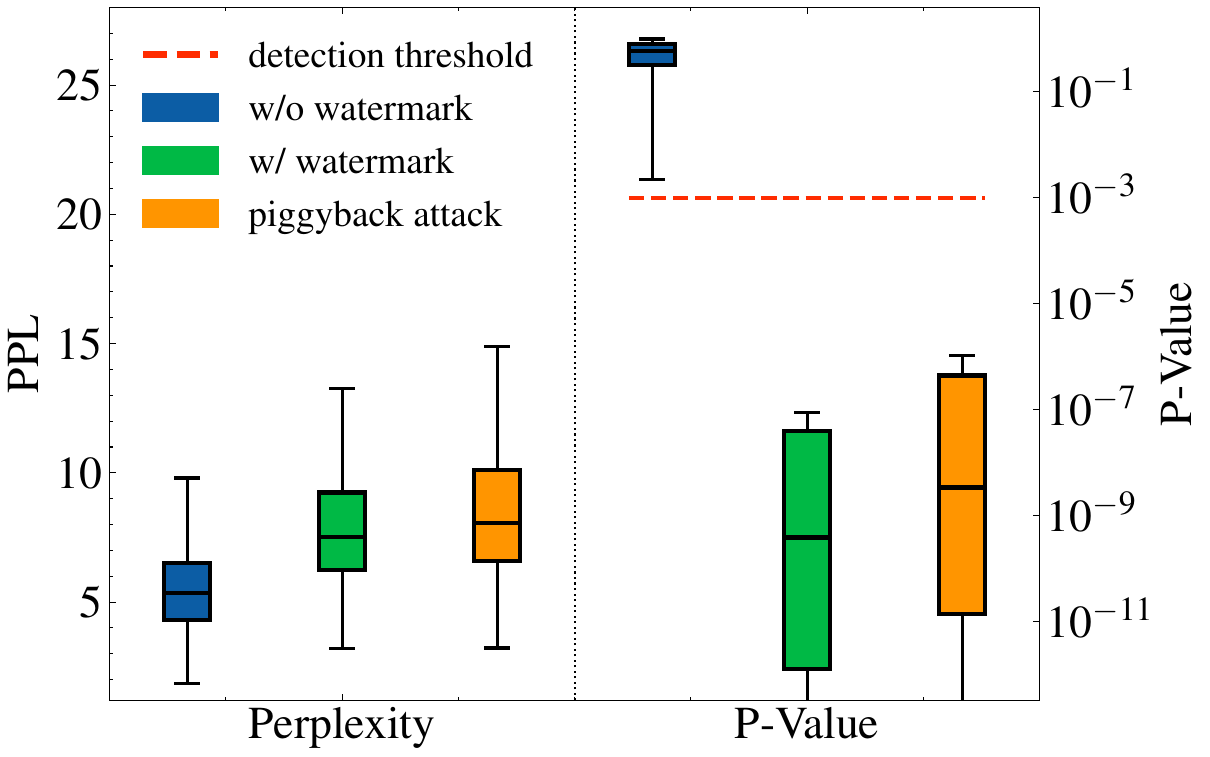}
  \vspace{-15pt}
    \caption{Fluent inaccurate editing for KGW with $h=4$.}
  \end{subfigure}
  \vspace{-5pt}
  \caption{Piggyback spoofing of KGW watermark on LLAMA-2-7B model. We observe consistent results with \F~\ref{fig:spoofing-robustness}. \textbf{(a)} The toxic token insertion works on both $h=1$ and $h=4$ with sumhash. KGW with $h=4$ and sumhash is less robust, thus we can insert fewer toxic tokens. \textbf{(b, c)} The performances of fluent inaccurate editing for $h=1$ and $h=4$ are close. Changing z-statistics to p-value does not influence the attack performance.}
  \label{fig:robustness-h}
  \vspace{-18pt}
\end{figure}

\begin{figure}[!h]
  \centering
  \begin{subfigure}[b]{0.26\textwidth}
  \centering
    \includegraphics[width=\textwidth]{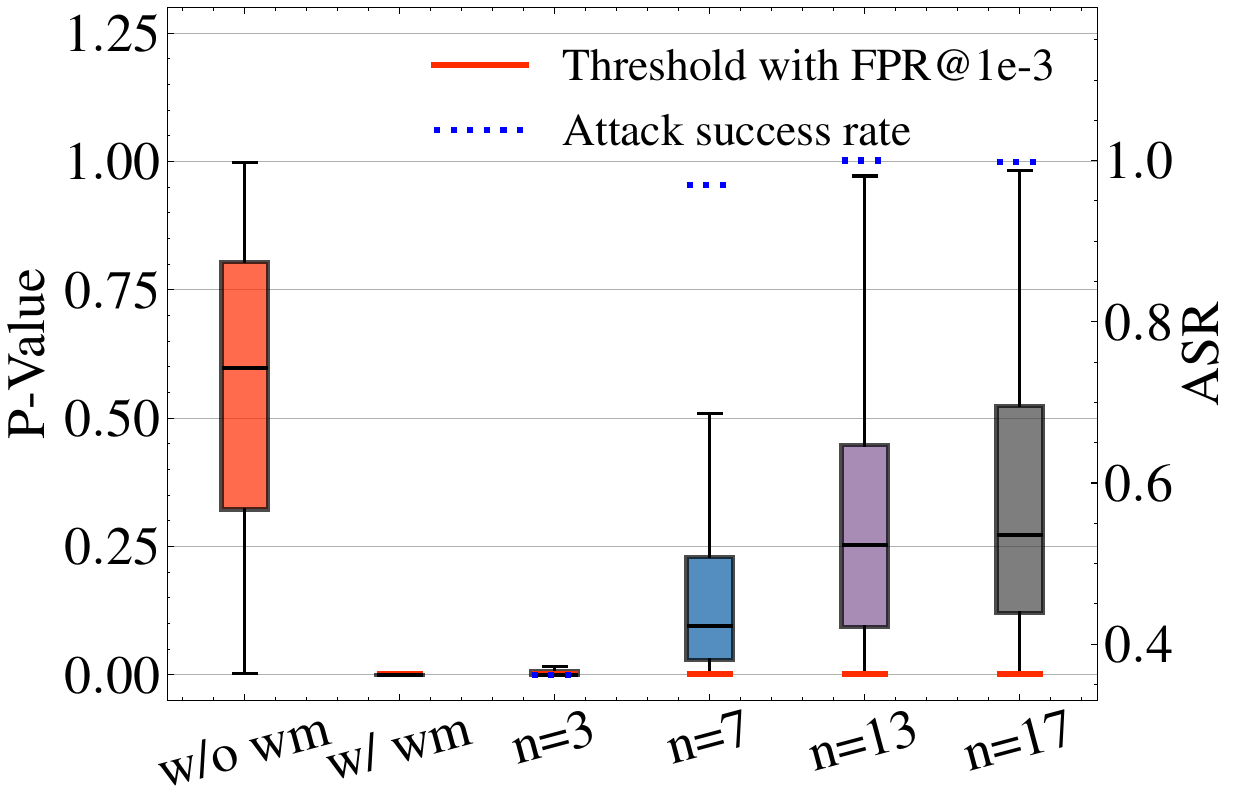}
    \vspace{-15pt}
    \caption{P-Value and ASR of wm-removal on KGW $h=1$.}
  \end{subfigure}
  \begin{subfigure}[b]{0.22\textwidth}
  \centering
  \includegraphics[width=\textwidth]{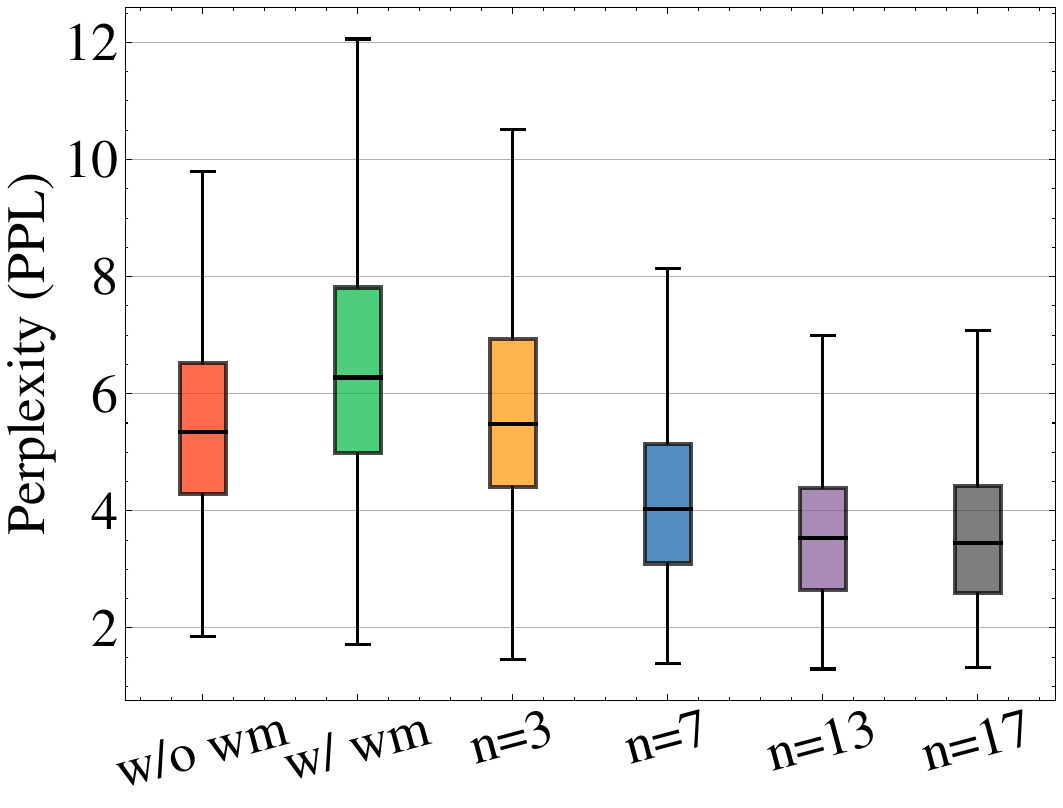}
  \vspace{-15pt}
    \caption{PPL of wm-removal on KGW $h=1$.}
  \end{subfigure}
  \begin{subfigure}[b]{0.26\textwidth}
  \centering
  \includegraphics[width=\textwidth]{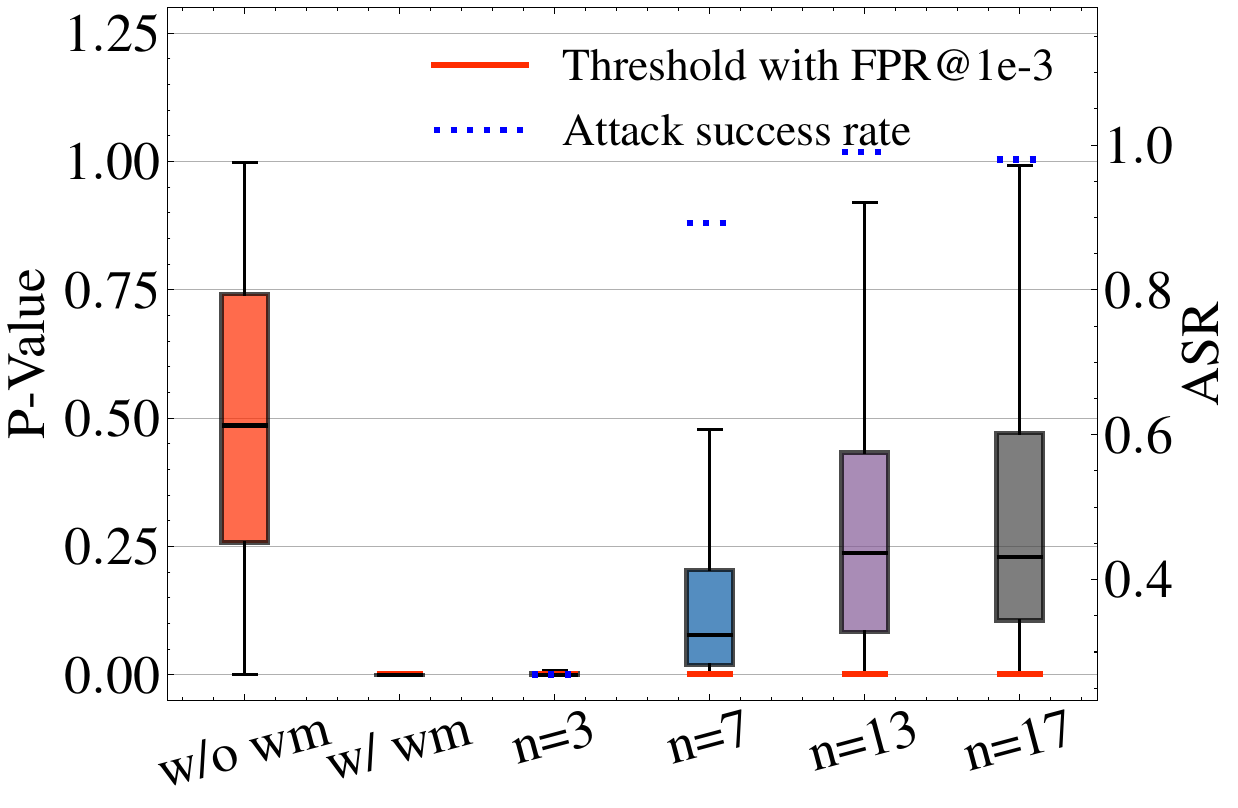}
  \vspace{-15pt}
    \caption{P-Value and ASR of wm-removal on KGW $h=4$.}
  \end{subfigure}
  \begin{subfigure}[b]{0.22\textwidth}
  \centering
  \includegraphics[width=\textwidth]{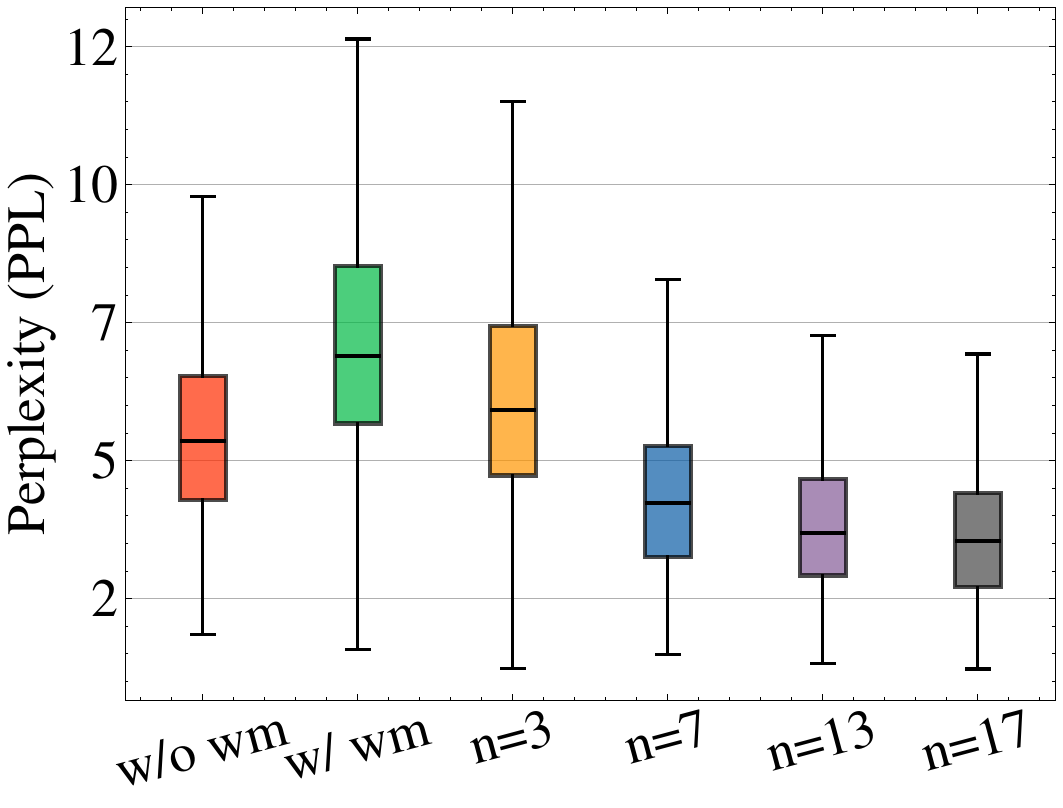}
  \vspace{-15pt}
    \caption{PPL of wm-removal on KGW $h=4$.}
  \end{subfigure}
  \vspace{-5pt}
  \caption{Watermark-removal attacks exploiting the use of multiple keys on KGW watermark ($h=1$ and $h=4$ with sumhash) and LLAMA-2-7B model with different numbers of watermark keys $n$. The attack success rates are based on the threshold with FPR@1e-3. We observe consistent results for different context widths $h$, and changing the detection metric from z-statistics to p-value does not influence the attack performance. The results are consistent with \F~\ref{fig:wm-removal-unbiasedness}.}
  \label{fig:wm-removal-unbiasedness-h}
  \vspace{-18pt}
\end{figure}

\begin{figure}[!h]
  \centering
  \begin{subfigure}[b]{0.32\textwidth}
  \centering
    \includegraphics[width=\textwidth]{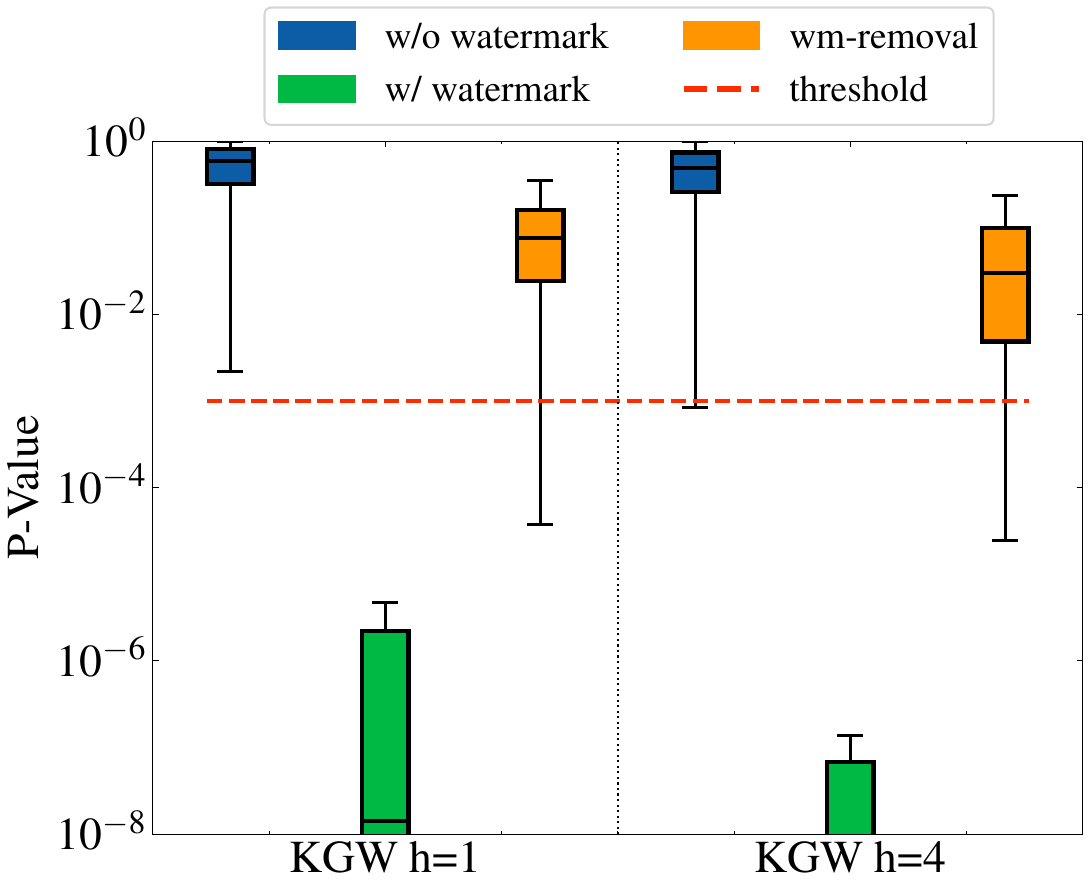}
    \vspace{-15pt}
    \caption{P-Value of wm-removal.}
  \end{subfigure}
  \begin{subfigure}[b]{0.32\textwidth}
  \centering
  \includegraphics[width=\textwidth]{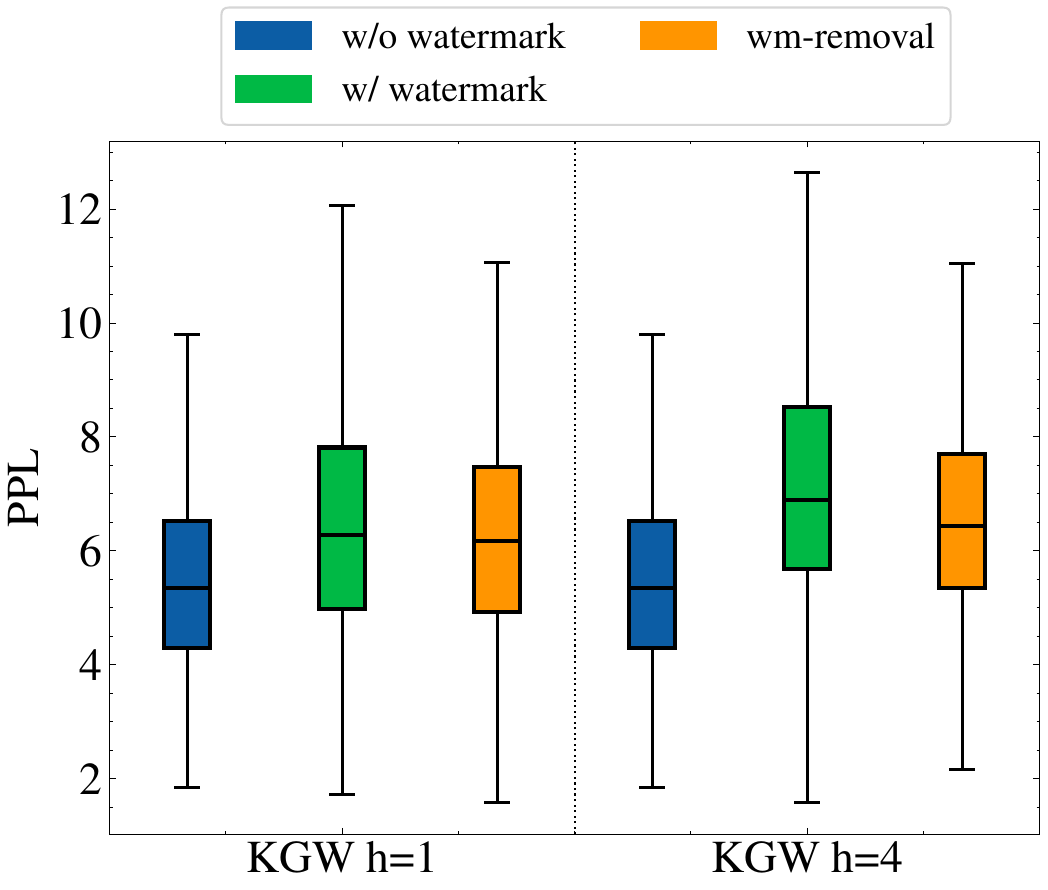}
  \vspace{-15pt}
    \caption{Perplexity of wm-removal.}
  \end{subfigure}
  \begin{subfigure}[b]{0.32\textwidth}
  \centering
  \includegraphics[width=\textwidth]{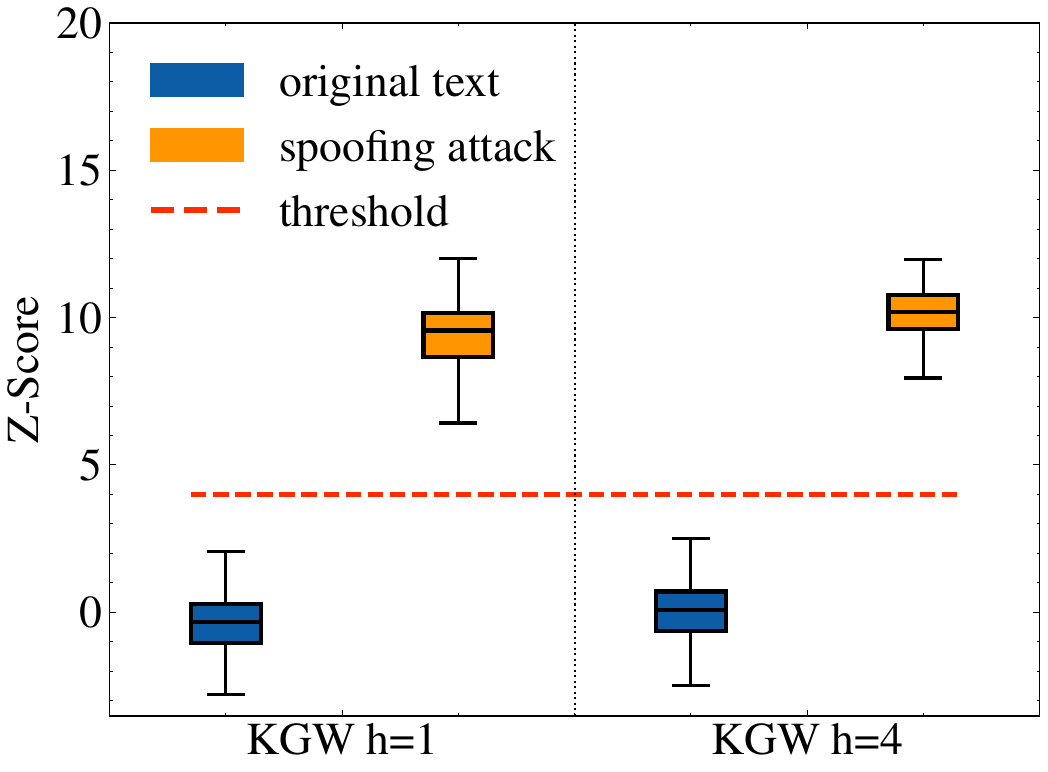}
  \vspace{-15pt}
    \caption{Z-Score of spoofing.}
  \end{subfigure}
  \vspace{-5pt}
  \caption{Attacks exploiting detection APIs on KGW watermark with different $h$, ($h=1$ and $h=4$ with sumhash) and LLAMA-2-7B model. Results are consistent with \F~\ref{fig:api-attack}. We report the z-score for spoofing attacks, as the p-values of the spoofing results are too small to visualize. Changing z-score to p-value doesn't influence the attacks' performance and the results hold for different $h$.}
  \label{fig:api-h}
  \vspace{-10pt}
\end{figure}

\newpage
\clearpage
\section*{NeurIPS Paper Checklist}

\begin{enumerate}

\item {\bf Claims}
    \item[] Question: Do the main claims made in the abstract and introduction accurately reflect the paper's contributions and scope?
    \item[] Answer: \answerYes{} 
    \item[] Justification: In the abstract we have briefly summarized our work and in the introduction section, we have further summarized our contributions. All the takeaways and conclusions are consistent with our findings in the main paper and accurately reflect the paper's contributions and scope.
    \item[] Guidelines:
    \begin{itemize}
        \item The answer NA means that the abstract and introduction do not include the claims made in the paper.
        \item The abstract and/or introduction should clearly state the claims made, including the contributions made in the paper and important assumptions and limitations. A No or NA answer to this question will not be perceived well by the reviewers. 
        \item The claims made should match theoretical and experimental results, and reflect how much the results can be expected to generalize to other settings. 
        \item It is fine to include aspirational goals as motivation as long as it is clear that these goals are not attained by the paper. 
    \end{itemize}

\item {\bf Limitations}
    \item[] Question: Does the paper discuss the limitations of the work performed by the authors?
    \item[] Answer: \answerYes{} 
    \item[] Justification: In the three sections introducing our attacks (\S~\ref{sec:robustness}, \S~\ref{sec:unbiasedness}, and \S~\ref{sec:method-detection}), we have clearly mentioned the limitations of our attacks and defenses. For example, in \S~\ref{sec:robustness}, we mentioned that it is hard to completely mitigate our piggyback spoofing attacks exploiting the robustness. And in \S~\ref{sec:unbiasedness} and \S~\ref{sec:dp-defense}, we noted that it is challenging to achieve a perfect tradeoff between the different risks. Thus, we recommend to follow a ``defense-in-depth'' approach in practice such as query rate limiting and anomaly detection.
    \item[] Guidelines:
    \begin{itemize}
        \item The answer NA means that the paper has no limitation while the answer No means that the paper has limitations, but those are not discussed in the paper. 
        \item The authors are encouraged to create a separate "Limitations" section in their paper.
        \item The paper should point out any strong assumptions and how robust the results are to violations of these assumptions (e.g., independence assumptions, noiseless settings, model well-specification, asymptotic approximations only holding locally). The authors should reflect on how these assumptions might be violated in practice and what the implications would be.
        \item The authors should reflect on the scope of the claims made, e.g., if the approach was only tested on a few datasets or with a few runs. In general, empirical results often depend on implicit assumptions, which should be articulated.
        \item The authors should reflect on the factors that influence the performance of the approach. For example, a facial recognition algorithm may perform poorly when image resolution is low or images are taken in low lighting. Or a speech-to-text system might not be used reliably to provide closed captions for online lectures because it fails to handle technical jargon.
        \item The authors should discuss the computational efficiency of the proposed algorithms and how they scale with dataset size.
        \item If applicable, the authors should discuss possible limitations of their approach to address problems of privacy and fairness.
        \item While the authors might fear that complete honesty about limitations might be used by reviewers as grounds for rejection, a worse outcome might be that reviewers discover limitations that aren't acknowledged in the paper. The authors should use their best judgment and recognize that individual actions in favor of transparency play an important role in developing norms that preserve the integrity of the community. Reviewers will be specifically instructed to not penalize honesty concerning limitations.
    \end{itemize}

\item {\bf Theory Assumptions and Proofs}
    \item[] Question: For each theoretical result, does the paper provide the full set of assumptions and a complete (and correct) proof?
    \item[] Answer: \answerYes{} 
    \item[] Justification: We have provided all the assumptions on the attacker in \S~\ref{sec:threat-model}. We also present the corresponding proof on the attack feasibility analysis in Appendix~\ref{app:proof-robust}, Appendix~\ref{app:proof-quality}, and Appendix~\ref{app:exp-quality-proof}.
    \item[] Guidelines:
    \begin{itemize}
        \item The answer NA means that the paper does not include theoretical results. 
        \item All the theorems, formulas, and proofs in the paper should be numbered and cross-referenced.
        \item All assumptions should be clearly stated or referenced in the statement of any theorems.
        \item The proofs can either appear in the main paper or the supplemental material, but if they appear in the supplemental material, the authors are encouraged to provide a short proof sketch to provide intuition. 
        \item Inversely, any informal proof provided in the core of the paper should be complemented by formal proofs provided in appendix or supplemental material.
        \item Theorems and Lemmas that the proof relies upon should be properly referenced. 
    \end{itemize}

    \item {\bf Experimental Result Reproducibility}
    \item[] Question: Does the paper fully disclose all the information needed to reproduce the main experimental results of the paper to the extent that it affects the main claims and/or conclusions of the paper (regardless of whether the code and data are provided or not)?
    \item[] Answer: \answerYes{} 
    \item[] Justification: We have provided the detailed the experimental setup in \S~\ref{sec:eval-robustness}, \S~\ref{sec:eval-unbiasedness}, and \S~\ref{sec:method-detection}. We've also open sourced our code.
    \item[] Guidelines:
    \begin{itemize}
        \item The answer NA means that the paper does not include experiments.
        \item If the paper includes experiments, a No answer to this question will not be perceived well by the reviewers: Making the paper reproducible is important, regardless of whether the code and data are provided or not.
        \item If the contribution is a dataset and/or model, the authors should describe the steps taken to make their results reproducible or verifiable. 
        \item Depending on the contribution, reproducibility can be accomplished in various ways. For example, if the contribution is a novel architecture, describing the architecture fully might suffice, or if the contribution is a specific model and empirical evaluation, it may be necessary to either make it possible for others to replicate the model with the same dataset, or provide access to the model. In general. releasing code and data is often one good way to accomplish this, but reproducibility can also be provided via detailed instructions for how to replicate the results, access to a hosted model (e.g., in the case of a large language model), releasing of a model checkpoint, or other means that are appropriate to the research performed.
        \item While NeurIPS does not require releasing code, the conference does require all submissions to provide some reasonable avenue for reproducibility, which may depend on the nature of the contribution. For example
        \begin{enumerate}
            \item If the contribution is primarily a new algorithm, the paper should make it clear how to reproduce that algorithm.
            \item If the contribution is primarily a new model architecture, the paper should describe the architecture clearly and fully.
            \item If the contribution is a new model (e.g., a large language model), then there should either be a way to access this model for reproducing the results or a way to reproduce the model (e.g., with an open-source dataset or instructions for how to construct the dataset).
            \item We recognize that reproducibility may be tricky in some cases, in which case authors are welcome to describe the particular way they provide for reproducibility. In the case of closed-source models, it may be that access to the model is limited in some way (e.g., to registered users), but it should be possible for other researchers to have some path to reproducing or verifying the results.
        \end{enumerate}
    \end{itemize}

\item {\bf Open access to data and code}
    \item[] Question: Does the paper provide open access to the data and code, with sufficient instructions to faithfully reproduce the main experimental results, as described in supplemental material?
    \item[] Answer: \answerYes{} 
    \item[] Justification: We've open sourced our code.
    \item[] Guidelines:
    \begin{itemize}
        \item The answer NA means that paper does not include experiments requiring code.
        \item Please see the NeurIPS code and data submission guidelines (\url{https://nips.cc/public/guides/CodeSubmissionPolicy}) for more details.
        \item While we encourage the release of code and data, we understand that this might not be possible, so “No” is an acceptable answer. Papers cannot be rejected simply for not including code, unless this is central to the contribution (e.g., for a new open-source benchmark).
        \item The instructions should contain the exact command and environment needed to run to reproduce the results. See the NeurIPS code and data submission guidelines (\url{https://nips.cc/public/guides/CodeSubmissionPolicy}) for more details.
        \item The authors should provide instructions on data access and preparation, including how to access the raw data, preprocessed data, intermediate data, and generated data, etc.
        \item The authors should provide scripts to reproduce all experimental results for the new proposed method and baselines. If only a subset of experiments are reproducible, they should state which ones are omitted from the script and why.
        \item At submission time, to preserve anonymity, the authors should release anonymized versions (if applicable).
        \item Providing as much information as possible in supplemental material (appended to the paper) is recommended, but including URLs to data and code is permitted.
    \end{itemize}

\item {\bf Experimental Setting/Details}
    \item[] Question: Does the paper specify all the training and test details (e.g., data splits, hyperparameters, how they were chosen, type of optimizer, etc.) necessary to understand the results?
    \item[] Answer: \answerYes{} 
    \item[] Justification: We have provided the experimental setups in \S~\ref{sec:eval-robustness}, \S~\ref{sec:eval-unbiasedness}, and \S~\ref{sec:method-detection}.
    \item[] Guidelines:
    \begin{itemize}
        \item The answer NA means that the paper does not include experiments.
        \item The experimental setting should be presented in the core of the paper to a level of detail that is necessary to appreciate the results and make sense of them.
        \item The full details can be provided either with the code, in appendix, or as supplemental material.
    \end{itemize}

\item {\bf Experiment Statistical Significance}
    \item[] Question: Does the paper report error bars suitably and correctly defined or other appropriate information about the statistical significance of the experiments?
    \item[] Answer: \answerNo{} 
    \item[] Justification: The randomness doesn't play a very important role in our attacks or defenses. And we observe consistent results across different watermarks and models. We have also tried to run the algorithms multiple times, and observed that the results are consistent.
    \item[] Guidelines:
    \begin{itemize}
        \item The answer NA means that the paper does not include experiments.
        \item The authors should answer "Yes" if the results are accompanied by error bars, confidence intervals, or statistical significance tests, at least for the experiments that support the main claims of the paper.
        \item The factors of variability that the error bars are capturing should be clearly stated (for example, train/test split, initialization, random drawing of some parameter, or overall run with given experimental conditions).
        \item The method for calculating the error bars should be explained (closed form formula, call to a library function, bootstrap, etc.)
        \item The assumptions made should be given (e.g., Normally distributed errors).
        \item It should be clear whether the error bar is the standard deviation or the standard error of the mean.
        \item It is OK to report 1-sigma error bars, but one should state it. The authors should preferably report a 2-sigma error bar than state that they have a 96\% CI, if the hypothesis of Normality of errors is not verified.
        \item For asymmetric distributions, the authors should be careful not to show in tables or figures symmetric error bars that would yield results that are out of range (e.g. negative error rates).
        \item If error bars are reported in tables or plots, The authors should explain in the text how they were calculated and reference the corresponding figures or tables in the text.
    \end{itemize}

\item {\bf Experiments Compute Resources}
    \item[] Question: For each experiment, does the paper provide sufficient information on the computer resources (type of compute workers, memory, time of execution) needed to reproduce the experiments?
    \item[] Answer: \answerYes{} 
    \item[] Justification: We mentioned our computing resources information in Appendix~\ref{app:wm-intro}.
    \item[] Guidelines:
    \begin{itemize}
        \item The answer NA means that the paper does not include experiments.
        \item The paper should indicate the type of compute workers CPU or GPU, internal cluster, or cloud provider, including relevant memory and storage.
        \item The paper should provide the amount of compute required for each of the individual experimental runs as well as estimate the total compute. 
        \item The paper should disclose whether the full research project required more compute than the experiments reported in the paper (e.g., preliminary or failed experiments that didn't make it into the paper). 
    \end{itemize}
    
\item {\bf Code Of Ethics}
    \item[] Question: Does the research conducted in the paper conform, in every respect, with the NeurIPS Code of Ethics \url{https://neurips.cc/public/EthicsGuidelines}?
    \item[] Answer: \answerYes{} 
    \item[] Justification: We have reviewed and agreed on the NeurIPS Code of Ethics.
    \item[] Guidelines:
    \begin{itemize}
        \item The answer NA means that the authors have not reviewed the NeurIPS Code of Ethics.
        \item If the authors answer No, they should explain the special circumstances that require a deviation from the Code of Ethics.
        \item The authors should make sure to preserve anonymity (e.g., if there is a special consideration due to laws or regulations in their jurisdiction).
    \end{itemize}

\item {\bf Broader Impacts}
    \item[] Question: Does the paper discuss both potential positive societal impacts and negative societal impacts of the work performed?
    \item[] Answer: \answerYes{} 
    \item[] Justification: We have discussed the positive as well as the potential negative impacts in \S~\ref{sec:conclusion}.
    \item[] Guidelines:
    \begin{itemize}
        \item The answer NA means that there is no societal impact of the work performed.
        \item If the authors answer NA or No, they should explain why their work has no societal impact or why the paper does not address societal impact.
        \item Examples of negative societal impacts include potential malicious or unintended uses (e.g., disinformation, generating fake profiles, surveillance), fairness considerations (e.g., deployment of technologies that could make decisions that unfairly impact specific groups), privacy considerations, and security considerations.
        \item The conference expects that many papers will be foundational research and not tied to particular applications, let alone deployments. However, if there is a direct path to any negative applications, the authors should point it out. For example, it is legitimate to point out that an improvement in the quality of generative models could be used to generate deepfakes for disinformation. On the other hand, it is not needed to point out that a generic algorithm for optimizing neural networks could enable people to train models that generate Deepfakes faster.
        \item The authors should consider possible harms that could arise when the technology is being used as intended and functioning correctly, harms that could arise when the technology is being used as intended but gives incorrect results, and harms following from (intentional or unintentional) misuse of the technology.
        \item If there are negative societal impacts, the authors could also discuss possible mitigation strategies (e.g., gated release of models, providing defenses in addition to attacks, mechanisms for monitoring misuse, mechanisms to monitor how a system learns from feedback over time, improving the efficiency and accessibility of ML).
    \end{itemize}
    
\item {\bf Safeguards}
    \item[] Question: Does the paper describe safeguards that have been put in place for responsible release of data or models that have a high risk for misuse (e.g., pretrained language models, image generators, or scraped datasets)?
    \item[] Answer: \answerNA{} 
    \item[] Justification: We do not have such risks.
    \item[] Guidelines:
    \begin{itemize}
        \item The answer NA means that the paper poses no such risks.
        \item Released models that have a high risk for misuse or dual-use should be released with necessary safeguards to allow for controlled use of the model, for example by requiring that users adhere to usage guidelines or restrictions to access the model or implementing safety filters. 
        \item Datasets that have been scraped from the Internet could pose safety risks. The authors should describe how they avoided releasing unsafe images.
        \item We recognize that providing effective safeguards is challenging, and many papers do not require this, but we encourage authors to take this into account and make a best faith effort.
    \end{itemize}

\item {\bf Licenses for existing assets}
    \item[] Question: Are the creators or original owners of assets (e.g., code, data, models), used in the paper, properly credited and are the license and terms of use explicitly mentioned and properly respected?
    \item[] Answer: \answerYes{} 
    \item[] Justification: All the code repositories of watermarks we use are open-sourced, and we have cited the corresponding papers in the experiment setup in \S~\ref{sec:eval-robustness}.
    \item[] Guidelines:
    \begin{itemize}
        \item The answer NA means that the paper does not use existing assets.
        \item The authors should cite the original paper that produced the code package or dataset.
        \item The authors should state which version of the asset is used and, if possible, include a URL.
        \item The name of the license (e.g., CC-BY 4.0) should be included for each asset.
        \item For scraped data from a particular source (e.g., website), the copyright and terms of service of that source should be provided.
        \item If assets are released, the license, copyright information, and terms of use in the package should be provided. For popular datasets, \url{paperswithcode.com/datasets} has curated licenses for some datasets. Their licensing guide can help determine the license of a dataset.
        \item For existing datasets that are re-packaged, both the original license and the license of the derived asset (if it has changed) should be provided.
        \item If this information is not available online, the authors are encouraged to reach out to the asset's creators.
    \end{itemize}

\item {\bf New Assets}
    \item[] Question: Are new assets introduced in the paper well documented and is the documentation provided alongside the assets?
    \item[] Answer: \answerNA{} 
    \item[] Justification: We do not release new assets.
    \item[] Guidelines:
    \begin{itemize}
        \item The answer NA means that the paper does not release new assets.
        \item Researchers should communicate the details of the dataset/code/model as part of their submissions via structured templates. This includes details about training, license, limitations, etc. 
        \item The paper should discuss whether and how consent was obtained from people whose asset is used.
        \item At submission time, remember to anonymize your assets (if applicable). You can either create an anonymized URL or include an anonymized zip file.
    \end{itemize}

\item {\bf Crowdsourcing and Research with Human Subjects}
    \item[] Question: For crowdsourcing experiments and research with human subjects, does the paper include the full text of instructions given to participants and screenshots, if applicable, as well as details about compensation (if any)? 
    \item[] Answer: \answerNA{} 
    \item[] Justification: Crowdsourcing and research with human subjects are not involved in this work.
    \item[] Guidelines:
    \begin{itemize}
        \item The answer NA means that the paper does not involve crowdsourcing nor research with human subjects.
        \item Including this information in the supplemental material is fine, but if the main contribution of the paper involves human subjects, then as much detail as possible should be included in the main paper. 
        \item According to the NeurIPS Code of Ethics, workers involved in data collection, curation, or other labor should be paid at least the minimum wage in the country of the data collector. 
    \end{itemize}

\item {\bf Institutional Review Board (IRB) Approvals or Equivalent for Research with Human Subjects}
    \item[] Question: Does the paper describe potential risks incurred by study participants, whether such risks were disclosed to the subjects, and whether Institutional Review Board (IRB) approvals (or an equivalent approval/review based on the requirements of your country or institution) were obtained?
    \item[] Answer: \answerNA{} 
    \item[] Justification: Crowdsourcing and research with human subjects are not involved in this work.
    \item[] Guidelines:
    \begin{itemize}
        \item The answer NA means that the paper does not involve crowdsourcing nor research with human subjects.
        \item Depending on the country in which research is conducted, IRB approval (or equivalent) may be required for any human subjects research. If you obtained IRB approval, you should clearly state this in the paper. 
        \item We recognize that the procedures for this may vary significantly between institutions and locations, and we expect authors to adhere to the NeurIPS Code of Ethics and the guidelines for their institution. 
        \item For initial submissions, do not include any information that would break anonymity (if applicable), such as the institution conducting the review.
    \end{itemize}

\end{enumerate}

\end{document}